\documentclass[aps,prx,twocolumn,showpacs,superscriptaddress,preprintnumbers,10pt]{revtex4-2} 
\usepackage{float}
\usepackage{dcolumn}   % needed for some tables
\usepackage{bm}        % for math
\usepackage{amssymb}   % for math
\usepackage{amsmath}
\usepackage{comment}
\usepackage{xcolor}
\usepackage{slashed}
\usepackage{braket}
\usepackage{color}
\usepackage{amsthm}
\usepackage[caption=false]{subfig}
\usepackage{graphicx}  % needed for figures
\usepackage[export]{adjustbox}
\usepackage{multirow}
\usepackage{rotating,booktabs}
\usepackage[verbose]{placeins}
\usepackage{mathrsfs}
\usepackage{stackengine}
\usepackage{soul}
\usepackage{algorithmic}
\usepackage[colorlinks=true,citecolor=blue,linkcolor=red]{hyperref}

\newcommand{\CC}{{\mathbb{C}}}

\newcommand{\poly}{{\mathrm{poly}}}

\usepackage[normalem]{ulem}

\newcommand{\lf}[0]{\left( }
\newcommand{\ri}[0]{ \right)}

\newcommand{\hwt}[1]{\left|\left.#1 \right\rangle\!\right\rangle_{h}}
\newcommand{\ketbra}[2]{\ket{#1}\!\bra{#2}}

\newtheorem{prop}{Proposition}
\newtheorem{thm}{Theorem}
\newtheorem*{thm*}{Theorem}

\newtheorem{lemma}{Lemma}
\newtheorem*{lemma*}{Lemma}

\definecolor{cadmiumgreen}{rgb}{0.0, 0.42, 0.24}

\newcounter{algocount}
\makeatletter
\def\blfootnote{\xdef\@thefnmark{}\@footnotetext}
\makeatother
\def \addCQuIC {Center for Quantum Information and Control, University of New Mexico, Albuquerque, 87131, NM, USA}
\def \addPandAUNM {Department of Physics and Astronomy, University of New Mexico, Albuquerque, NM, 87106, USA}

\begin{document}

\title{Efficient simulation of noisy IQP circuits with amplitude-damping noise}
\author{Shravan Shravan}
\email{shravan@unm.edu}
\affiliation{\addCQuIC} \affiliation{\addPandAUNM}
\author{Mohsin Raza}
\email{mraza98@unm.edu}
\affiliation{\addCQuIC} \affiliation{\addPandAUNM}
\author{Ariel Shlosberg}
\affiliation{\addCQuIC} \affiliation{\addPandAUNM}

\begin{abstract}
Efficient classical simulation of noisy intermediate-scale quantum (NISQ) circuits has been a topic of intense study over the past few years. The majority of results on efficient simulation assume that the circuits undergo some variant of \textit{unital noise} or involve sufficient \textit{randomness}. However, there are limited results for circuits undergoing non-unital noise in the absence of randomness. In this work, we present a polynomial-time classical algorithm to sample from the output distributions of amplitude-damped instantaneous quantum polynomial (IQP) circuits. Our algorithm works for circuits generated by arbitrary $\ell$-local diagonal gates with depth $d = \Omega(\log(n))$, undergoing constant amplitude-damping noise. 
\end{abstract}
\maketitle

\noindent
\emph{Introduction.} Quantum computers are expected to provide substantial speedups over their classical counterparts for certain computational tasks, in what is known as \textit{quantum advantage}~\cite{Preskill:Quant_Adv,Shor_1994,HHL_2009,lloyd_universal_1996}. However, noise remains a significant barrier to realizing this advantage in practice. Although remarkable progress has been made in recent years towards the implementation of fault-tolerant quantum processors ~\cite{Exp_IQP,QEC_Google,QEC_China}, large-scale quantum algorithms require resources far beyond current technological capabilities~\cite{PreskillNISQ, eisert2025mindgapsfraughtroad,gidney2025factor}. This challenge motivates the study of using near-term quantum circuits to demonstrate computational advantage over classical computers without requiring full fault tolerance. 

One of the most well-known classes of tasks proposed for this purpose is the set of quantum sampling problems \cite{bouland2019complexity,ArkhipovAaronson13,bremner_classical_2010}. These involve sampling from the output distributions of quantum circuits that are drawn from  specific ensembles. Sampling from these distributions is widely considered to be classically intractable, supported by complexity-theoretic evidence establishing asymptotic hardness~\cite{bouland2019complexity,ArkhipovAaronson13,bremner_classical_2010,bremner_average_IQP_2016,denzler2026simulation}. Numerical studies have also demonstrated that classical simulation remains difficult even for instances of modest system size~\cite{Numec_1,hangleiter_RCS_review_2023}. Several experimental realizations of sampling protocols have been reported in the literature that provide evidence for quantum advantage~\cite{Exp1,Exp2,Exp_IQP,morvan_phase_2024}. However, in conjunction with these experiments, classical algorithms have been developed to try to efficiently simulate the same types of circuits in hardware-limited regimes~\cite{Numec_1,maslov_fast_2024,small_classical_1,small_classical_2,small_classical_3}. 

Quantum noise is modeled by channels that are typically classified into two categories: unital (i.e., the maximally mixed state is a fixed point) and non-unital~\cite{wilde2013quantum}. Efficient estimation of observable expectation values for noisy circuits has been established for both unital \cite{LOWESA, schuster2025polynomial,cirstoiu2024Fourier, Pauli_back_propogation_VQA,gonzalez-garcia_pauli_2024, Poly_back_prop_average_input} and non-unital \cite{angirasani,Pauli_back_prop_non_unital_VQA} noise. With sufficient randomness, it is even possible to estimate the expectation values of noiseless quantum circuits \cite{angrisani2024classically}. In contrast, classically sampling from the output distributions of noisy quantum circuits is less well understood and has been established only for Pauli noise~\cite{Gao_Efficient_simulationy, cirstoiu2024Fourier, BMS17, rajakumar_polynomial-time_2025}.
 
There are two reasons for the lack of classical algorithms for sampling under non-unital noise. First, non-unital channels are incompatible with the anti-concentration property, which is often used to prove classical spoofability~\cite{fefferman2024effect}. Second, there exist instances of circuits undergoing non-unital noise that simulate noiseless quantum circuits, implying that arbitrary circuits under non-unital noise should not be classically simulable (unless BQP=BPP)~\cite{ben-or_quantum_2013}. This leaves open the question of whether specific families of circuits, which are conjectured to be classically hard in the absence of noise, can become efficiently simulable under physically-relevant non-unital noise. Recent progress has been made for geometrically local circuits, where it was shown that classical simulability can be established in the absence of anti-concentration, provided that the conditional mutual information (CMI) decays sufficiently~\cite{BB_sampling_unital,BB_sampling_unital_1,BB_sampling_non_unital}. However, rigorous guaranties for the decay of CMI have been established only for depolarizing noise~\cite{BB_sampling_unital,BB_sampling_unital_1}. For non-unital noise, evidence remains purely numerical and is limited to specific cases, such as 1D Haar-random and 2D Clifford-random architectures~\cite{BB_sampling_non_unital}.

In this work, we demonstrate a polynomial-time classical algorithm for amplitude-damped IQP circuits. Assuming constant amplitude-damping noise after each unitary layer, we show that we can approximately sample from the output distributions of a broad class of IQP circuits beyond $O(\log(n))$ depth. IQP circuits were first introduced in Ref.~\cite{shepherd_temporally_2009} as circuits with no temporal ordering. Classical hardness of sampling from IQP circuits was first established in Ref.~\cite{bremner_classical_2010}, where it was shown that exact sampling from IQP circuits would imply the collapse of the polynomial hierarchy. Hardness of approximate sampling was established in Ref.~\cite{bremner_average_IQP_2016} by connecting the task of sampling to approximating the partition functions of Ising models. Noisy IQP circuits were first studied in Ref.~\cite{BMS17}, assuming measurement bit-flip error. Under anti-concentration assumptions, they proved that the typical output distributions of such circuits are close to uniform and hence classically simulable. Finally, in Ref.~\cite{rajakumar_polynomial-time_2025}, it was shown that arbitrary IQP circuits with a depth greater than a critical $O(1)$ threshold, undergoing local Pauli noise, can be classically simulated. 

\emph{Setup.} IQP circuits involve applying gates that are diagonal in the computational basis on qubits initialized in the $\ket{+}^{\otimes n}$ state. At the end of the circuit, the qubits are measured in the Hadamard basis (Pauli-X basis). 
\begin{figure}
    \centering
\includegraphics[width=0.95\linewidth]{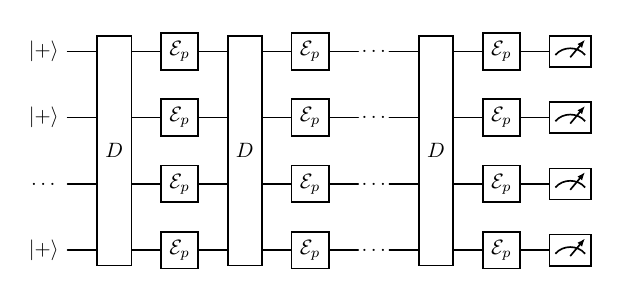}
    \caption{A prototypical noisy IQP circuit. At the beginning of the circuit, the state is given by $\ket{+}^{\otimes n}.$ The unitary gate layers $D$ are composed of $\ell$-local gates from the gate set $\mathcal{G}_l$. After each unitary layer, we have identical, local amplitude-damping channels of strength $p$ acting on each qubit. The circuit ends with a Pauli-X basis measurement on every qubit. }
    \label{fig:IQP_circ}     
\end{figure}
We consider sampling from the output distribution of an IQP circuit $\mathcal{C}$ interspersed with local amplitude-damping noise of constant strength $p$ (see Fig.~\ref{fig:IQP_circ}). The circuit does not include the initial and final layer of Hadamards; rather, they are considered part of state preparation and measurement, respectively. The diagonal unitary gates can be constructed using sets of local gates such as $\{T,CS\}, \{e^{itZ}, e^{i\theta ZZ}\}$, or $\{Z,CZ,CCZ\}$~\cite{bremner_average_IQP_2016}. In this work, we will consider arbitrary single-qubit $Z$-rotations and $\ell$-local controlled-phase gates, $\mathcal{G}_l = \{e^{itZ}, C(\theta) =  \text{diag}(1,1,1,e^{i \theta}), C^{(2)}(\theta), \cdots, C^{(\ell)} (\theta)\} $ where $C^{(\ell)} (\theta)$ is a controlled-Z gate, controlled on $\ell-1$ qubits and acting on the $\ell$-th qubit. This is equivalent to the more standard set $\{e^{i\theta_{1}Z}, e^{i\theta_{2} ZZ}, \dots, e^{i\theta_{\ell} Z \cdots Z}\}$ (with a maximum support on $\ell$ qubits) as any circuit implemented by one of these gate sets can be exactly implemented by the other at the cost of a constant multiplicative overhead to the circuit depth. We do not place any restrictions on the geometric architecture of the system. After each layer of gates, we assume that a local amplitude-damping channel $\mathcal{E}_p$ of uniform strength $p$ acts on all qubits,
\begin{align}
   &\mathcal{E}_p (\rho) = \sum_{i = 0}^{1} K_i \rho K_i^\dagger, \\
   &K_0 = \ketbra{0}{0} + \sqrt{1 - p} \ketbra{1}{1},\ K_1 =\sqrt{p} \ketbra{0}{1}, \nonumber
\end{align}
where $0 < p \leq 1$. Given a circuit $\mathcal{C}$, we will use the notation $P_{\mathcal{C}}$ to denote the Born-rule probability distribution associated with a Hadamard-basis measurement of $\rho = \mathcal{C}(\ketbra{+}{+}^{\otimes n})$. 

At a high level, our algorithm relies on three key ingredients: (a) the amplitude-damping channel has a unique fixed point, pushing the states in Hilbert space towards a low Hamming-weight subspace, (b) the diagonal gate set preserves this subspace, and (c) circuits consisting of arbitrary $\ell$-local diagonal gates admit a frame (that we define below) that allows for efficient simulation of this subspace. Due to amplitude damping, only a constant number of frame elements have to be tracked in order to obtain a distribution that is close to the true output distribution of the noisy IQP circuit.

\emph{Results.}
We will now present the main result of this Letter and give a detailed overview of our proof. The rigorous derivations are presented in the Supplementary Material (SM)~\cite{SM}. 
\begin{thm}
    \label{thm:main_theorem}
    Suppose $\mathcal{C}$ is a noisy IQP circuit of depth $d$ acting on $n$ qubits composed of $\ell$-local gates from $\mathcal{G}_\ell$, where each layer is interspersed with identical amplitude-damping channels, $\mathcal{E}^{\otimes n}_p$, of strength $p = \Omega(1)$. There exists a threshold $d_T = O\left(\log(n)/\log(1-p)^{-1} \right)$, such that when $d \geq d_T$, there exists a classical algorithm that samples from a distribution $Q_{\mathcal{C}}$, where $\|Q_{\mathcal{C}} - P_{\mathcal{C}}\|_{TVD} \leq \epsilon$, with a worst-case runtime of $T \leq O\left(n^3d\, \poly(1/\epsilon)\right)$.          
\end{thm}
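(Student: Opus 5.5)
The plan is to exploit the structure of the dynamics in the matrix-unit (``frame'') basis $\{\ketbra{x}{y}\}_{x,y\in\{0,1\}^n}$. Write $\ketbra{+}{+}^{\otimes n}=2^{-n}\sum_{x,y}\ketbra{x}{y}$. A layer $D$ of gates from $\mathcal{G}_\ell$ acts on each frame element by a phase, $\ketbra{x}{y}\mapsto e^{i(\phi_D(x)-\phi_D(y))}\ketbra{x}{y}$. The single-qubit damping acts as
\begin{align}
\ketbra{0}{0}&\mapsto\ketbra{0}{0}, & \ketbra{0}{1}&\mapsto\sqrt{1-p}\,\ketbra{0}{1}, & \ketbra{1}{1}&\mapsto(1-p)\ketbra{1}{1}+p\ketbra{0}{0}.
\end{align}
Two facts follow. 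First, a qubit with $x_i\neq y_i$ is never ``repaired'' and only loses amplitude. Second, a qubit with $x_i=y_i=1$ undergoes a classical, phase-independent decay event: at each layer it flips jointly to $(0,0)$ with probability $p$. In particular, the populations evolve as $n$ independent classical decay processes that the diagonal gates never touch. The initial excitation probability $1/2$ per qubit therefore decays to $\tfrac12(1-p)^d$, and the expected Hamming weight of the final state is $\tfrac n2(1-p)^d$.

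\textbf{Step 1 (truncation).} Choose $d_T=\lceil \log(n/\epsilon^2)/\log(1-p)^{-1}\rceil=O(\log n/\log(1-p)^{-1})$ for constant $\epsilon$. Let $\Pi$ be the projector onto Hamming weight $\le 1$. By Markov's inequality, $\operatorname{tr}[(1-\Pi)\rho_d]\le \tfrac n2(1-p)^d\le\epsilon^2$. The gentle measurement lemma then gives $\|\rho_d-\Pi\rho_d\Pi\|_1=O(\epsilon)$. The same bound transfers to the TVD of the Hadamard-basis outcome distributions, after renormalization. So it suffices to sample $x\in\{0,1\}^n$ with probability $\bra{x}H^{\otimes n}\Pi\rho_d\Pi H^{\otimes n}\ket{x}$, which is determined by an $(n+1)\times(n+1)$ block.

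\textbf{Step 2 (unravelling and computing the block).} I would unravel the damping into Kraus trajectories: for each qubit, sample a jump time $t_i\in\{1,\dots,d,\infty\}$ from the classical decay law. This is legitimate because jump probabilities depend only on populations, which the gates do not alter. Conditioned on the jump record, the unnormalized pure state is a superposition of strings whose $1$'s lie in the set $S$ of qubits with $t_i=\infty$. Its phase on each surviving string is obtained by evaluating the diagonal phases $\phi_{D_t}$ along the history, in which qubit $i$ is $1$ before $t_i$ and $0$ after. Only strings of weight $\le1$ inside $S$ are kept: the empty string and the $n$ singletons. Each needs a phase accumulated over $d$ layers, with $O(n)$ gates per layer (more precisely $O(n)$ gates of constant locality). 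This gives $O(n^2 d)$ per trajectory, and the damping factors $\sqrt{1-p}$ are tracked analytically.

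\textbf{Step 3 (sampling in the $X$ basis).} For a vector $\alpha_0\ket{0^n}+\sum_i\alpha_i\ket{e_i}$, the $X$-basis probabilities are $2^{-n}\big|\alpha_0+\sum_i(-1)^{x_i}\alpha_i\big|^2$ after normalization. Marginals over prefixes can be computed in $O(n)$, because the unfixed signs average out: $\mathbb{E}|a+\sum_{i\in U}\pm\alpha_i|^2=|a|^2+\sum_{i\in U}|\alpha_i|^2$. This allows bit-by-bit sampling in $O(n^2)$. Summing the costs gives $O(n^3 d)$ overall, with the $\poly(1/\epsilon)$ factor entering through the threshold dependence and through averaging over, or rejection of, trajectories that land outside the truncated subspace.

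\textbf{Main obstacle.} The delicate point is that unravelling and truncation must commute correctly. Coherences of $\rho_d$ inside the weight-$\le1$ block receive contributions from histories in which many qubits were excited and later decayed. I must make sure that sampling the jump record first, and then projecting the conditional pure state, reproduces $\Pi\rho_d\Pi$ as a mixture without cross terms between different jump records. I expect this to hold because distinct jump records are orthogonal Kraus branches. I would verify it carefully, and also check that the renormalization after projection costs only $O(\epsilon)$ in TVD.
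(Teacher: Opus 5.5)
Your individual steps are sound, but as written the argument does not establish the theorem, because your threshold $d_T=\lceil \ln(n/\epsilon^2)/\ln(1-p)^{-1}\rceil$ depends on $\epsilon$. The statement asks for one threshold that works for every $\epsilon$, with all $\epsilon$-dependence carried by the $\poly(1/\epsilon)$ runtime; the paper's $d_T$ depends only on $n$ and $p$.

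With the cutoff frozen at weight $1$, your scheme cannot deliver this. At a fixed depth $d=c\ln n/\ln(1-p)^{-1}$ with $c>1$, the discarded mass $\Pr[W\ge 2]=\Theta(n^{2-2c})$ does not depend on any algorithmic parameter, so the gentle-measurement guarantee stalls at $\Theta(n^{1-c})$. Your remark that $\poly(1/\epsilon)$ enters ``through averaging over, or rejection of, trajectories'' is also not right: rejection costs only a factor $1/(1-O(\epsilon^2))$.

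Worse, at the depth you chose, Steps~2--3 are superfluous. A union bound gives $\Pr[W\ge 1]\le\frac n2(1-p)^d\le\epsilon^2/2$, hence $\frac12\|\rho_d-\ketbra{0^n}{0^n}\|_1\le\epsilon/\sqrt2$, and outputting uniformly random bits already meets the target. The same estimate at the paper's own $d_T$, where $(1-p)^{d_T}=\Theta(n^{-4})$, puts the output within $O(n^{-3/2})$ of uniform. So the substance of the theorem lies precisely in the small-$\epsilon$ regime your scheme does not reach.

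The missing idea is the one the paper encodes in its cutoff $k=O(\log(1/\epsilon)/\log n)$. Fix $d_T=2\ln n/\ln(1-p)^{-1}$ and let the projector $\Pi_k$ onto weight $\le k$ grow with $\epsilon$. With $q=\frac12(1-p)^d\le 1/(2n^2)$ and $k+1=\lceil 2\ln(1/\epsilon)/\ln(2n)\rceil$, you get $\Pr[W>k]\le\binom{n}{k+1}q^{k+1}\le(2n)^{-(k+1)}\le\epsilon^2$, while at most $N\le(n+1)^k<\epsilon^{-2}$ strings are kept.

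Your machinery then extends directly:
\begin{itemize}
\item Simulate the classical population process (initial bit and decay time per qubit) and reject if its final weight exceeds $k$. The accepted jump record then has law exactly proportional to $\|\Pi_k\psi_{\vec t}\|^2$.
\item Compute the $N$ amplitudes of weight-$\le k$ strings on the surviving set $S$, each phase in $O(nd)$.
\item Compute $X$-basis prefix marginals of the resulting pure state as $2^{-m}\sum_u\big|\sum_v\alpha_{vu}(-1)^{x_{\le m}\cdot v}\big|^2$, grouping the support by its suffix $u$, in $O(Nn)$ time.
\end{itemize}
This gives $O(Nn(n+d))$ per sample.

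So repaired, your route is genuinely different from, and simpler than, the paper's. The paper truncates the HW-basis expansion of $\rho$ itself, which requires:
\begin{itemize}
\item Proposition~\ref{prop:Truncation_error} together with a Chernoff tail;
\item the rank-based Hilbert--Schmidt-to-trace-distance conversion of Theorem~\ref{thm:Bound_on_TD};
\item propagation of $I(a),\sigma_\pm$ frame strings;
\item sampling a quasi-probability with the corrected sampler of Ref.~\cite{BMS17}.
\end{itemize}
You only ever sample genuine states. Also, $\ell$-local gates cost you nothing extra, since phases are evaluated on basis strings; this bypasses the $(2^{\ell-1}+1)^{kd}$ branching analysis.

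Finally, your ``main obstacle'' is a non-issue, though not because jump records are orthogonal. They are not: many of them have support on $\ket{0^n}$. The reason is that $\rho_d=\sum_{\vec t}\ketbra{\psi_{\vec t}}{\psi_{\vec t}}$ is just the Kraus sum of the composed channel, and the projection acts linearly.
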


For the sake of clarity, in the main text, we prove Theorem~\ref{thm:main_theorem} for the restricted case of the $2$-local gate set $\mathcal{G}_2$. Noting that most of the aspects of the proof remain the same in the $\ell$-local case, we provide the extension of the proof for $\mathcal{G}_\ell$ in Sec.~\ref{ap:Theorem_main_extension} of the SM. The proof is structured as follows: First, we introduce our notation and truncation scheme and prove Lemma \ref{lem:Trunc_error}, which bounds the Hilbert-Schmidt error due to truncation. Second, using Theorem~\ref{thm:Bound_on_TD}, we show that a small Hilbert-Schmidt error leads to only a small error in trace distance. Third, we define an overcomplete basis (frame) and show that this allows us to track the non-truncated coefficients exactly. Taken together, this allows us to efficiently sample from the output distribution of sufficiently-deep, noisy IQP circuits.

In the following, we use a permuted vectorized basis for operators on the $n$-qubit Hilbert space that we call the \textit{Hamming weight basis} (HW basis). Let $\left|\left.\rho\right\rangle\!\right\rangle
$ denote the computational-basis vectorized representation of the density matrix $\rho$ \cite{gilchrist_vectorization_2011}. The HW basis state, $\hwt{\rho}$, is a permutation of $\left|\left.\rho\right\rangle\!\right\rangle$, such that the operators are lexicographically organized in order of increasing Hamming weight. (See Sec.~\ref{ap:HWT_basis_example} of the SM for a worked-out example.) We use the notation $i \in[1,4^n]$ to denote the index of the operator in the HW basis. The symbol $[i]$ denotes the Hamming weight of the operator indexed by $i$.

\emph{Sparse HW representation.} Under amplitude-damping noise, the coefficient of an operator indexed by $i$ in the HW basis is suppressed by a factor of ${(1 - p)}^{[i]}$. This allows us to truncate operators with high Hamming weights, while incurring only a small error. We define $\hwt{\sigma}$ as the truncated approximation to $\hwt{\rho}$, with $\mathcal{T}$ denoting the set of truncated indices. The Hilbert-Schmidt error, $\varepsilon$, is defined as 
\begin{equation}
\label{eq:HS_error}
    \varepsilon^2 := \|\rho - \sigma\|_{HS}^2  = \left\langle\!\left\langle \rho - \sigma| \rho - \sigma \right\rangle\!\right\rangle_{h} = \sum_{i \in \mathcal{T}} \left|\alpha_i\right|^2,
\end{equation}  
 where $\alpha_i$ is the coefficient of the operator indexed by $i$ in $\hwt{\rho}$. 
 Amplitude damping also exhibits \textit{refeeding}: coefficients of operators containing factors of $\ketbra{0}{0}$ are increased by contributions proportional to the coefficients of the same operators in which those factors are replaced by $\ketbra{1}{1}$. Refeeding is an essential feature of amplitude damping and ensures that the map is trace preserving. The complete action of the amplitude-damping channel on HW basis operators, including damping and refeeding, is derived in Sec.~\ref{ap:AD_noise_structure} of the SM. Consequently, the evolution of the coefficients depends not only on the Hamming weight but also on the number of $\ketbra{0}{0}$ factors present in the operators. In Proposition \ref{prop:Truncation_error}, proved in Sec.~\ref{ap:bound_based_00_terms} of the SM, we show this rigorously. Using Proposition \ref{prop:Truncation_error} allows us to prove the following lemma: 
\begin{lemma}
    \label{lem:Trunc_error}
    Using the same notation as in Theorem \ref{thm:main_theorem}, there exists a depth threshold $d^*_T = \log(n)/\log(1-p)^{-1}$, such that for $d > d^*_T$, the truncation error in Hilbert-Schmidt distance, incurred by truncating operators with Hamming weight greater than $k$, is upperbounded by 
    \begin{equation}
        \varepsilon^2   \leq \frac{\lf 2 - (1 - p)^d \ri^{2n - k - 1}}{4^n} e^{2n H\lf\frac{k+1}{2n} \ri} (1 - p)^{d(k+1)},\label{eq:hilbert-schmidt-error}
    \end{equation}
    where $H(x) = -x\ln x - (1 - x)\ln(1 - x)$ is the binary entropy.
\end{lemma}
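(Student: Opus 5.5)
We need to bound $\sum_{i\in\mathcal T}|\alpha_i|^2$, where $\mathcal T$ is the set of HW-basis operators $\ketbra{x}{y}$ with $[i]>k$. Here $[i]$ is the Hamming weight of the concatenated string $(x,y)\in\{0,1\}^{2n}$.

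\textbf{Step 1: a per-coefficient bound (Proposition~\ref{prop:Truncation_error}).} The IQP gates are diagonal, so they act on each $\ketbra{x}{y}$ only by a phase and never change its Hamming weight or its pattern of $\ketbra{0}{0}$, $\ketbra{1}{1}$, $\ketbra{0}{1}$, $\ketbra{1}{0}$ factors. Hence $|\alpha_i|$ after depth $d$ equals the magnitude it would have under the amplitude-damping channels alone, up to interference from refeeding, which triangle inequalities control.

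Start from $\ket{+}^{\otimes n}$, where every coefficient has magnitude $2^{-n}$. Track a single qubit factor:
\begin{itemize}
\item a coherence $\ketbra{0}{1}$ or $\ketbra{1}{0}$ is multiplied by $\sqrt{1-p}$ per layer;
\item a population $\ketbra{1}{1}$ is multiplied by $(1-p)$ per layer;
\item $\ketbra{0}{0}$ absorbs refeeding, so its magnitude stays at most $1/2$ in the single-qubit marginal.
\end{itemize}
After $d$ layers the single-qubit weights are therefore $\tfrac12(1-p)^d$ for $\ketbra{1}{1}$, $\tfrac12 (1-p)^{d/2}$ for each coherence, and $\tfrac12\bigl(2-(1-p)^d\bigr)$ for $\ketbra{0}{0}$ (the last from trace preservation).

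I expect Proposition~\ref{prop:Truncation_error} to state that
\[
|\alpha_i|\le 2^{-n}\bigl(2-(1-p)^d\bigr)^{m_{00}}(1-p)^{d[i]/2},
\]
where $m_{00}$ is the number of $\ketbra{0}{0}$ factors. The refeeding from several $\ketbra{1}{1}$ partners is exactly what produces the factor $\bigl(2-(1-p)^d\bigr)$ in place of $1$. I will assume this proposition as given. Squaring gives
\[
|\alpha_i|^2\le 4^{-n}\bigl(2-(1-p)^d\bigr)^{2m_{00}}(1-p)^{d[i]}.
\]
The displayed exponent $2n-k-1$ suggests the proposition is stated with exponent $m_{00}$ on the squared bound, and I will follow whichever form it gives.

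\textbf{Step 2: summing over truncated strings.} A weight-$w$ string of length $2n$ has at most $2n-w$ zero positions, so $m_{00}\le 2n-w$ under the paper's accounting. Using $2-(1-p)^d\ge1$ and $w\ge k+1$,
\[
\varepsilon^2\le \frac{\bigl(2-(1-p)^d\bigr)^{2n-k-1}}{4^n}\sum_{w\ge k+1}\binom{2n}{w}(1-p)^{dw}.
\]

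\textbf{Step 3: controlling the tail.} The remaining task is to bound the tail sum by its first term, which makes this the main obstacle. The ratio of consecutive terms is
\[
\frac{2n-w}{w+1}(1-p)^d\le 2n(1-p)^d .
\]
This is the point where the threshold $d>d_T^*=\log n/\log(1-p)^{-1}$ enters, since it gives $(1-p)^d<1/n$. That makes the ratio at most about $2$, which is not quite enough for geometric decay. I would therefore either enlarge $d_T^*$ by a constant factor, or note that the ratio is $\le 1/2$ once $w$ exceeds a constant, and absorb the leftover constant. The first term is then bounded by the standard estimate
\[
\binom{2n}{k+1}\le e^{2nH\left(\frac{k+1}{2n}\right)},
\]
which yields Eq.~\eqref{eq:hilbert-schmidt-error} up to the constant I expect the paper either absorbs or eliminates with a sharper ratio argument. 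Getting that constant exactly $1$ is the delicate part, and I would first try bounding the tail by $\bigl(1+(1-p)^d\bigr)^{2n}$-type binomial estimates before resorting to the geometric argument.
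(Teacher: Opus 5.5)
\textbf{Gap in Step 3.} Steps 1--2 are essentially right. The Step 2 inequality is correct, but its justification should be $2m_{00}\le 2n-w$, since each $\ketbra{0}{0}$ uses two zero positions (the paper uses $m_{00}\le n-\lceil h/2\rceil$). The problem is Step 3. You only reach the bound up to an unspecified constant, or with an enlarged threshold, whereas the lemma claims constant exactly $1$ at the threshold $d^*_T$.

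This cannot be repaired from your Step 2 onward, because pulling $\bigl(2-(1-p)^d\bigr)^{2n-k-1}$ out of the sum already loses too much. Take $k=0$ and $x:=(1-p)^d=c/n$ with $c<1$ close to $1$, which $d>d^*_T$ allows. Your remaining sum is $\sum_{w\ge 1}\binom{2n}{w}x^w=(1+x)^{2n}-1\approx e^{2c}-1$. The quantity you must beat is $e^{2nH(1/2n)}x\approx 2ec$. For large $n$ and $c=0.95$ these are about $5.7$ and $5.2$. So no sharper ratio argument or Stirling estimate can finish your chain. For $k\ge 1$ an MGF bound on your unnormalized sum would work, but the lemma is claimed for all $k$.

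\textbf{Missing idea: keep the factor inside the sum.} Since $2m_{00}\le 2n-2\lceil h/2\rceil\le 2n-h$ and $2-x\ge1$, the total weight-$h$ contribution is at most
\[
\binom{2n}{h}\frac{x^h(2-x)^{2n-h}}{4^n}=\binom{2n}{h}b^h a^{2n-h},\qquad b=\tfrac{x}{2},\ a=1-b.
\]
Hence $\varepsilon^2\le \Pr[X\ge k+1]$ for $X\sim\mathrm{Bin}(2n,b)$, an exactly normalized binomial tail. The Chernoff bound gives
\[
\Pr[X\ge 2n\alpha]\le\inf_{r\ge 0}e^{-2nr\alpha}(a+be^{r})^{2n},\qquad \alpha=\tfrac{k+1}{2n}.
\]
For $\alpha\ge b$ the infimum equals $e^{2nH(\alpha)}b^{2n\alpha}a^{2n(1-\alpha)}$. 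Substituting $a,b$ turns this into $\frac{(2-(1-p)^d)^{2n-k-1}}{4^n}e^{2nH(\alpha)}(1-p)^{d(k+1)}$, with no stray constant.

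This also shows the threshold enters differently from what you expected. It is not a geometric-decay condition on consecutive terms. It is the requirement that the optimizer $e^{r^*}=\frac{\alpha(1-b)}{(1-\alpha)b}$ satisfy $r^*\ge 0$, i.e.\ $k+1\ge n(1-p)^d$. The condition $d>d^*_T$ gives $n(1-p)^d<1$, so it holds for every $k\ge 0$.
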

 The lemma is proved in Sec.~\ref{ap:Tail_bound} of the SM by relating the truncation error to the tail of a binomial distribution, which is then upperbounded through Chernoff bounds~\cite{chernoff1952measure}. The use of Chernoff bounds imposes the threshold that appears in Lemma~\ref{lem:Trunc_error}. Moreover, the bound in Lemma~\ref{lem:Trunc_error} is tight, as $\varepsilon^2 = 4^{-n}(1 - p)^{2nd}$ when $k+1 = 2n$. 

Next, we translate the bound on the Hilbert–Schmidt error into a bound on the trace-distance error, $\|\rho - \sigma\|_{TD}$, to eventually bound the total variation distance. Generalizing the result of Coles and Cerezo~\cite{coles_cerezo_bound_2019}, we prove Theorem~\ref{thm:Bound_on_TD}.
\begin{thm}
\label{thm:Bound_on_TD}
    Suppose $\rho$ is a density matrix and $\sigma$ is a non-zero Hermitian matrix. Furthermore, suppose $\operatorname{Tr}(\sigma) < 1$ and there exists an $\epsilon' \geq 0$ such that 
    \begin{subequations}
        \begin{align}
        \label{Eq:TD_bound_req_1}
        &|\operatorname{Tr}(\rho - \sigma) | \leq \epsilon',  \\
        \label{Eq:TD_bound_req_2}
        & \|\rho -\sigma \|_{\text{HS}} \leq \epsilon'.
    \end{align}
    \end{subequations}
    Then, 
    \begin{equation}
        \|\rho -\sigma \|_{\text{TD}} \leq \left(\sqrt{\operatorname{rank}(\sigma)} + 1 \right)\epsilon'.
    \end{equation}
\end{thm}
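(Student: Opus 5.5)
The plan is to work with the Jordan decomposition of $\Delta := \rho - \sigma$ and to use the positivity of $\rho$ to control the rank of its negative part. I take $\|X\|_{\text{TD}} = \tfrac12\|X\|_1$. That is the convention under which the constant comes out as stated; with the unnormalized trace norm the same argument gives $(2\sqrt{\operatorname{rank}(\sigma)}+1)\epsilon'$ instead.

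\textbf{Step 1 (reduce to the negative part).} Write $\Delta = \Delta_+ - \Delta_-$ with $\Delta_\pm \geq 0$ supported on orthogonal subspaces. Then
\begin{equation*}
\|\Delta\|_1 = \operatorname{Tr}\Delta_+ + \operatorname{Tr}\Delta_- = \operatorname{Tr}(\Delta) + 2\operatorname{Tr}\Delta_- .
\end{equation*}
By \eqref{Eq:TD_bound_req_1}, $\operatorname{Tr}(\Delta) \leq \epsilon'$. Hence it suffices to show $\operatorname{Tr}\Delta_- \leq \sqrt{\operatorname{rank}(\sigma)}\,\epsilon'$, which gives $\tfrac12\|\Delta\|_1 \leq \sqrt{\operatorname{rank}(\sigma)}\,\epsilon' + \epsilon'/2$.

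\textbf{Step 2 (rank of the negative part).} This is the main step. I claim the number of strictly negative eigenvalues of $\rho - \sigma$ is at most the number of strictly positive eigenvalues of $\sigma$, and hence at most $r := \operatorname{rank}(\sigma)$.

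Let $Q$ be the projector onto the span of the eigenvectors of $\sigma$ with non-positive eigenvalues, so $\operatorname{rank}(1-Q) \leq r$. On $\operatorname{range}(Q)$ we have $Q(\rho - \sigma)Q = Q\rho Q - Q\sigma Q \geq 0$, because $\rho \geq 0$ and $Q\sigma Q \leq 0$. Let $V_-$ be the negative eigenspace of $\Delta$, on which $\langle v, \Delta v\rangle < 0$ for every nonzero $v$. Then $V_- \cap \operatorname{range}(Q) = \{0\}$, and therefore $\dim V_- \leq \operatorname{codim}\operatorname{range}(Q) \leq r$. Equivalently, this is a Courant--Fischer/Weyl interlacing argument: $\rho - \sigma \geq -\sigma_+$, and $\sigma_+$ has rank at most $r$.

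\textbf{Step 3 (Cauchy--Schwarz).} Since $\operatorname{rank}(\Delta_-) \leq r$, Cauchy--Schwarz on its eigenvalues gives
\begin{equation*}
\operatorname{Tr}\Delta_- \leq \sqrt{r}\,\|\Delta_-\|_{\text{HS}} \leq \sqrt{r}\,\|\Delta\|_{\text{HS}} \leq \sqrt{r}\,\epsilon',
\end{equation*}
where the middle inequality holds because $\|\Delta\|_{\text{HS}}^2 = \|\Delta_+\|_{\text{HS}}^2 + \|\Delta_-\|_{\text{HS}}^2$, and the last is \eqref{Eq:TD_bound_req_2}. Combining with Step 1 yields the claimed bound $\|\rho-\sigma\|_{\text{TD}} \leq (\sqrt{\operatorname{rank}(\sigma)}+1)\epsilon'$.

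The hypotheses $\operatorname{Tr}\sigma < 1$ and $\sigma \neq 0$ are not needed in this argument beyond making $r \geq 1$ meaningful. The trace hypothesis \eqref{Eq:TD_bound_req_1} is what handles the possibly high-rank positive part $\Delta_+$, which cannot be controlled by the Hilbert--Schmidt norm alone. The only delicate point is the subspace-intersection argument in Step 2; everything else is routine.
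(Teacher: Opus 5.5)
Your proof is correct and is essentially the paper's argument: with $\Delta=\sigma-\rho$ (so the paper's $\Delta_+$ is your $\Delta_-$), the paper likewise uses $\rho\geq 0$ to bound $\operatorname{rank}(\Delta_+)\leq\operatorname{rank}(\sigma)$, via Weyl's inequality $\sigma_i\geq\Delta_i+\rho_n$ where you use a subspace-intersection argument, and then combines the purity bound $\operatorname{Tr}(\Delta_+^2)\geq\operatorname{Tr}(\Delta_+)^2/\operatorname{rank}(\sigma)$ with the trace hypothesis. Your identity $\|\Delta\|_1=\operatorname{Tr}\Delta+2\operatorname{Tr}\Delta_-$ simply replaces the paper's three-case split, and you are right that the hypotheses $\operatorname{Tr}\sigma<1$ and $\sigma\neq 0$ are not actually needed.
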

The proof is given in Sec.~\ref{ap:Td_upper_bound} of the SM. In Sec.~\ref{ap:Bounding_trace} of the SM, we show that the requirement Eq.~\eqref{Eq:TD_bound_req_1} is satisfied by the truncated state $\sigma$ with
\begin{equation}
    \label{eq:trace_bnd}
     |\operatorname{Tr}(\rho - \sigma) | \leq \frac{\lf 2 - (1 - p)^d \ri^{n - \frac{k + 1}{2}}}{2^n} e^{n H\lf\frac{k+1}{2n} \ri} (1 - p)^{\frac{d(k+1)}{2}}.
\end{equation}
Combining Eq.~\eqref{eq:trace_bnd}, Lemma \ref{lem:Trunc_error} (satisfying requirement \eqref{Eq:TD_bound_req_2}), and Theorem~\ref{thm:Bound_on_TD}, in Sec.~\ref{ap:k_scaling} of the SM, we bound $\|\rho - \sigma\|_{TD}$. Specifically, we show that for all $\delta > 0$, taking the maximum Hamming weight of $\sigma$ to be at least
\begin{equation}
    \label{eq:k_scaling}
    k  =   \frac{ \ln (2/\delta)}{ \lf \frac{\lambda}{2} -  2 \ri  \ln \lf n \ri  - \ln 4}-1 = O \lf \frac{\log (2/\delta)}{\log(n)}  \ri,
\end{equation}
ensures that $\|\rho - \sigma\|_{TD} < \delta$, as long as the circuit depth $d \geq d_T$.
Here, we have parameterized the circuit depth as $d(\lambda) = \lambda \log(n)/\log(1 - p)^{-1}$.
The threshold takes the form 
\begin{equation}
      d_T  =  \frac{2}{\ln(1-p)^{-1}} \lf 2 \ln(n) + \ln(2)  \ri = O\lf \frac{\log(n)}{\log(1-p)^{-1}} \ri.
\end{equation}
For a given $k$ that scales as in Eq.~\eqref{eq:k_scaling}, the total number of non-truncated strings in $\hwt{\sigma}$ is given by 
\begin{equation}
    \label{eq:num_strings}
    \sum_{m=0}^k \binom{2n}{m} = O(n^{k+1}) = O(\poly(\delta^{-1})).
\end{equation}
Hence, for a constant error, $\hwt{\sigma}$ is sparse in the HW basis and can be stored efficiently. However, it remains to be shown that the coefficients $\alpha_i$ of $\hwt{\sigma}$ can be efficiently computed from the circuit description.

\emph{Computing coefficients.} We now show that the coefficients $\alpha_i$  can be exactly computed in an efficient manner. Define the following operator, for all $a \in \CC$, 
\begin{equation}
   \label{eq:clubbed_identity}
    I(a) = \ketbra{0}{0} + a \ketbra{1}{1}.
\end{equation}
The set $\mathcal{I} := \{I(a), \sigma_+, \sigma_-| a \in \CC, |a| \leq 1\}$, where $\sigma_{\pm}$ are the usual raising and lowering operators, forms an operator frame on a single qubit. Consider the action of the elements of the noisy IQP circuit $\mathcal{C}$ on an $n$-qubit operator string $s \in \mathcal{I}^{\otimes n}$ with an associated coefficient $\beta$. Single-qubit diagonal unitaries apply a phase when acting on $\sigma_\pm$. Thus, single-qubit gates only multiply $\beta$ by a phase. The action of two-qubit controlled-phase gates on the elements of the frame is as follows,
\begin{subequations}
    \begin{align}
    \label{eq:identities_line_a}
    &C(\theta) I(a) \otimes  I(b) C^\dagger (\theta) = I(a) \otimes I(b), \\
    \label{eq:identities_line_b}
    &C(\theta) \sigma_{\pm} \otimes \sigma_{\pm} C^\dagger (\theta) =  e^{i (\theta/2) \lf n_+  - n_-\ri } \sigma_{\pm} \otimes \sigma_{\pm}, \\ 
\label{eq:identities_line_c}
    &C(\theta) \sigma_{\pm} \otimes I(a)  C^\dagger (\theta)  =  \sigma_\pm \otimes I\lf ae^{ \pm i \theta}\ri,
\end{align}
\end{subequations}
where in Eq.\eqref{eq:identities_line_b} $n_{+}$($n_-$) denotes the total number of $\sigma_+$($\sigma_-$). Similarly, amplitude damping acts on elements of $\mathcal{I}$ as 
\begin{subequations}
    \begin{align}
    \label{eq:identities_line_d}
    & \mathcal{E}_p \lf \sigma_{\pm} \ri = \sqrt{1 - p} \sigma_{\pm}, \\
    \label{eq:identities_line_e}
    &\mathcal{E}_p \lf I(a) \ri = (1 + a p) I\lf \frac{a(1 - p)}{1 + a p} \ri.
\end{align}
\end{subequations}
From Eqs.~\eqref{eq:identities_line_a}--\eqref{eq:identities_line_e}, it is evident that the action of $\mathcal{C}$ on an operator string $s \in \mathcal{I}^{\otimes n}$ only changes the coefficients  $\beta$ and arguments $\{a_i\}$. This amounts to a one-to-one mapping between strings in $\mathcal{I}^{\otimes n}$. In Sec.~\ref{Ap:Low_Hamming_Weight_Strings} of the SM, we make use of this observation and explicitly construct an algorithm that propagates any string $s  \in \mathcal{I}^{\otimes n}$ under $\mathcal{C}$, with a worst-case runtime of $O(n^3 d)$. 

We use this algorithm now to estimate $\alpha_{i}$. Consider the following representation of the initial state,
\begin{equation}
   \label{eq:Initial_state_frame_rep}
    \ketbra{+}{+}^{\otimes n} = \frac{1}{2^n} \lf I(1) + \sigma_+ + \sigma_-\ri^{\otimes n} =  \frac{1}{2^n}\sum_{m = 0}^n \sum_{s \in S_m} s,
\end{equation}
where $S_m$ is the set of all operator strings in the expansion with $m$ $\sigma_\pm$-terms. The size of $S_m$ is
\begin{equation}
 \label{eq:size_scaling}
   |S_m| = 2^m {\binom{n}{m}}.   
\end{equation} 
Therefore, Eq.~\eqref{eq:Initial_state_frame_rep} uses an exponential number of strings $s \in \mathcal{I}^{\otimes n}$ to represent the initial state. First, note that every operator in the HW basis is contained in a unique string in the above expansion. That is, if $o_i$ denotes a HW operator, then there exists only one operator string $s$ such that
\begin{equation}
    \operatorname{Tr}  \lf o_i ^\dagger s \ri \neq 0.
\end{equation}
Propagating this string $s$ and its associated coefficient $\beta$ through $\mathcal{C}$, allows us to exactly recover the coefficient $\alpha^{(r)}_i$ associated with $o_i$. Concretely, 
\begin{equation}
    \alpha^{(r)}_i = \beta^{(r)} \operatorname{Tr} \lf o_i^\dagger s^{(r)} \ri, 
\end{equation}
 where $ \beta^{(r)}(s^{(r)})$ is the coefficient (operator) after $r$ layers of evolution under $\mathcal{C}$. Second,  all elements of the set $S_m$ have exactly $m$ tensor factors of $\sigma_{\pm}$. Thus, for any HW operator $s \in S_m$, the Hamming weight satisfies $[i] \geq m$.
 To determine $\hwt{\sigma}$, we only need to propogate strings in $S_{m}$ for $m \leq k$. From Eq.~\eqref{eq:num_strings}, we only need to propagate $O(\poly(\delta^{-1}))$ individual strings. Thus, the worst-case runtime to determine $\sigma$ is $O(n^3d \,\poly(\delta^{-1}))$. 

\emph{Sampling.} Having shown that we can efficiently obtain $\sigma$, we now show how to generate samples from  the probability distribution $Q_{\mathcal{C}}$. Due to truncation, $\sigma$ is only guaranteed to be Hermitian but not necessarily positive. 
For any bit-string $x \in \{+,-\}^n$ in the Hadamard basis,
consider $q(x)$ and the map $f :\{+/-\}^n \to \{0/1\}^n$,  
\begin{equation}
\label{eq:Probability_dist}
    q(x) = \bra{x} \sigma \ket{x} = \sum_{i \in \overline{\mathcal{T}}} \alpha_i^{(d)} (-1)^{f(\vec{x})\cdot(\vec{a}_i + \vec{b}_i)},
\end{equation}
where $q(x)$ approximates $P_{\mathcal{C}}(x)$, and $a_i,b_i$ are computational strings such that the HW operator indexed by $i$ is $\ketbra{a_i}{b_i}$. Since $\sigma$ is not necessarily positive, $q(x)$ is only a quasiprobability distribution satisfying $0<\sum_x q(x) \leq 1$. From Eq.~\eqref{eq:Probability_dist}, it is evident that appropriate combinations of $\alpha_i^{(d)}$ are the Fourier coefficients of $q(x)$. Therefore, the number of distinct Fourier coefficients of $q(x)$ can be at most $|\overline{\mathcal{T}}| = O(\poly(\delta^{-1}))$, meaning it has a sparse Fourier spectrum. This implies that the marginals of $q(x)$ can be efficiently computed using Parseval's identity~\cite{BMS17}. Let $y \in \{0,1\}^k$ be an arbitrary $k$-bit string, then the marginal $S_y$ is given as in Ref.~\cite{BMS17},  
\begin{equation}
    S_y = \sum_{x:x_{1 \dots k } = y} q(x) = 2^{n-k} \sum_{s: s_{k+1,\dots,n} = 0^{n - k}} \tilde{q}(s)(-1)^{y\cdot s_1 \dots s_k}, 
\end{equation}
where $\tilde{q}(s)$ are the Fourier coefficients of $q(x)$. Since the Fourier spectrum is sparse, the marginals are exactly computable in $O(n \poly(\delta^{-1}))$ time. Having access to the marginals, the algorithm proposed in Lemma 10 of Ref.~\cite{BMS17} can be used to sample from the probability distribution $Q_{\mathcal{C}} = \text{Alg}(q(x))$.  

 Finally, we prove that by appropriately choosing $k$, we can ensure $\|P_{\mathcal{C}} - q \|_{TVD} \leq \delta$ for any $\delta > 0$. To do this, we first use the following inequality that connects the trace distance and the total variation distance~\cite{Nielsen_Chuang_2010},
\begin{equation}
\label{Eq:TVD_bound_by_Td}
    \|P_{\mathcal{C}} - q\|_{TVD} \leq \|\rho - \sigma\|_{TD}\leq \delta.
\end{equation}
 Eq.~\eqref{Eq:TVD_bound_by_Td} holds even if $\rho$ and $\sigma$ are only Hermitian but not necessarily positive. Then, with $\delta = \epsilon/(4 + \epsilon)$ and using Lemma 10 from Ref.~\cite{BMS17}, the probability distribution $Q_{\mathcal{C}} = \text{Alg}(q(x))$ satisfies 
\begin{equation}
     \|P_{\mathcal{C}}(x) - Q_{\mathcal{C}}(x)\|_{TVD} \leq \frac{4 \delta}{1 - \delta } = \epsilon,
\end{equation}
completing the proof. 

The extension of the proof of Theorem~\ref{thm:main_theorem} for the $\ell$-local gate set $\mathcal{G_\ell}$ is given in Sec. \ref{ap:Theorem_main_extension} of the SM. The only non-trivial aspect stems from the fact that unlike in the case of two-qubit gates, higher-order gates (e.g. CCZ) can map a single frame element to multiple elements. However, we prove that for an arbitrary $\ell$-local controlled-phase gate, this increased branching results in only a constant-factor overhead in the number of terms that must be tracked. Finally, we implement a numerical simulation of the algorithm proposed in this Letter and show that our results hold up to numerical scrutiny. These results can be found in Sec.~\ref{ap:numerical} of the SM.

\emph{Discussion.} In this work, we provide an efficient classical algorithm to sample from the output distributions of $\ell$-local noisy IQP circuits, subject to local amplitude-damping noise. Our algorithm is based on the observation that amplitude-damping noise pushes the state of the circuit closer to its fixed point. Therefore, we can  approximate the state with only a constant number of frame elements. Further, the gate-set $\mathcal{G}_l$ lends itself to efficiently tracking how these frame elements evolve. Together, this gives a recipe for constructing an efficient classical algorithm. More generally, this approach can be applied to other noise models with unique fixed points and associated restricted circuit families. For instance, Pauli-path truncation schemes for depolarizing noise (whose fixed point is the maximally mixed state) can be seen as a similar algorithm, but with the Pauli basis as its frame. Our results also have implications for the quantum refrigerator arguments presented in Ref.~\cite{ben-or_quantum_2013}. We have shown that IQP circuits are a class of circuits that cannot be `refrigerated'. This begs the question: What are other classes of circuits that show similar behavior? We conjecture that a circuit needs to have sufficiently high density of Hadamard gates to preserve quantum information.   

A natural extension of this work is to establish classical simulability for shallower circuits with depths below the $O(\log(n))$ threshold reported in this Letter. For local Pauli noise, a constant-depth threshold has recently been proven in Ref.~\cite{rajakumar_polynomial-time_2025} by observing that the noise effectively disentangles the qubits. Whether the threshold reported in this Letter is tight or not, remains an open question. We conjecture that it is not so. 

\emph{Acknowledgements.} We thank Akimasa Miyake, Gopikrishnan Muraleedharan, Ivan Deutsch, Mohammad Alhejji, and Niklas Mueller for discussions. This work is supported by a collaboration between the US DOE and the NSF. This material is based upon work supported by the U.S. Department of Energy, Office of Science, National Quantum Information Science Research Centers, Quantum Systems Accelerator (Award No. DE-SCL0000121). Additional support is acknowledged from the National Science Foundation Grant No. PHY-2116246.

\bibliographystyle{apsrev4-2}
\bibliography{ref.bib}

\appendix
\onecolumngrid
\newpage

\section{Hamming weight basis}
\label{ap:HWT_basis_example}
In this section, we provide an explicit example to illustrate what we refer to as the Hamming weight (HW) basis. We define this basis by ordering vectorized computational basis strings according to increasing Hamming weight. This representation is useful because it provides a transparent view of the coefficients that are truncated in our algorithm. Consider an arbitrary two-qubit density matrix represented in the computational basis as, 
\begin{equation}
    \rho = \sum_{a,b,c,d = 0}^1 \alpha_{ab|cd } \ketbra{ab}{cd}.
\end{equation}
In the vectorized representation, we have \cite{gilchrist_vectorization_2011}, 
\begin{equation}
    \left|\left.\rho\right\rangle\!\right\rangle = \sum_{a,b,c,d = 0}^1 \alpha_{ab|cd } \ket{abcd}.
\end{equation}
The Hamming weight representation is nothing but a permutation of the vectorized $\left|\left.\rho\right\rangle\!\right\rangle$ such that the operators are arranged in the order of increasing hamming weight. For strings with the same Hamming weight, we order them lexicographically.  Hence, as a column vector, we have, 
\begin{equation}
    \hspace{-0.5cm}\hwt{\rho} = \begin{bmatrix}
        \alpha_{00|00} ,| \textcolor{blue}{\alpha_{00|01}}, \textcolor{blue}{\alpha_{00|10}}, \textcolor{blue}{\alpha_{01|00}}, \textcolor{blue}{\alpha_{10|00}},| \textcolor{orange}{\alpha_{00|11}}, \textcolor{orange}{\alpha_{01|01}},\textcolor{orange}{\alpha_{01|10}}, \textcolor{orange}{\alpha_{10|01}}, \textcolor{orange}{\alpha_{10|10}} ,\textcolor{orange}{\alpha_{11|00}}, |\textcolor{teal}{\alpha_{01|11}}, \textcolor{teal}{\alpha_{10|11}}, \textcolor{teal}{\alpha_{11|01}}, \textcolor{teal}{\alpha_{11|10}}, |\textcolor{purple}{\alpha_{11|11}}.
    \end{bmatrix}^{\text{T}}
\end{equation}
In the above column vector, note that the coefficients are arranged in the order of increasing Hamming weight sectors. A Hamming-weight sector consists of coefficients corresponding to operators that share the same Hamming weight. The zero–Hamming-weight sector is shown in black, followed by the single–Hamming-weight sector in \textcolor{blue}{blue}, the two–Hamming-weight sector in \textcolor{orange}{orange}, the three–Hamming-weight sector in \textcolor{teal}{teal}, and the four–Hamming-weight sector in \textcolor{purple}{purple}. Within each sector, the coefficients are ordered lexicographically, i.e., according to increasing decimal values of the corresponding bit strings. Finally, note that the Hilbert Schmidt norm of $\rho$ is exactly the Euclidean norm of $\hwt{\rho}$,
\begin{equation}
    \left\langle\!\left\langle \rho| \rho \right\rangle\!\right\rangle_{h} = \sum_{i} \left|\alpha_i \right|^2 = \operatorname{Tr} \lf \rho^\dagger \rho \ri. 
\end{equation}

\section{Structure of the amplitude damping noise}
\label{ap:AD_noise_structure}
Our algorithm and main result rely strongly on the structure of the amplitude damping channel. In this section, we explicitly describe the action of the amplitude damping channel on the HW basis (or equivalently, on the computational basis). To begin, consider a single-qubit density matrix $\rho$ and its evolution under the amplitude damping channel. The Kraus operators of the channel in the computational basis are given by, 
\begin{equation}
    K_0 = \begin{pmatrix}1&0\\0&\sqrt{1-p}\end{pmatrix},
\qquad K_1 = \begin{pmatrix}0&\sqrt{p} \\ 0 & 0\end{pmatrix}.
\end{equation}
The total evolution is given by, 
\begin{equation}
    {\cal E}_p\left(\begin{bmatrix}\alpha_{0|0}&\alpha_{0|1}\\\alpha_{1|0}&\alpha_{1
    |1}\end{bmatrix}\right) = \begin{bmatrix}\alpha_{0|0}+p \alpha_{1|1} & \sqrt{1-p} \rho_{01} \\ \sqrt{1-p} \rho_{10} & (1-p) \rho_{11}\end{bmatrix}
\end{equation}
It is illustrative to see the evolution of coefficients in the Hamming weight basis. 
\begin{equation}
    \alpha_{0|0} \to \alpha_{0|0} + p \alpha_{1|1}, \quad \alpha_{0|1} \to \sqrt{1- p }\alpha_{0|1} , \quad \alpha_{0|1} \to \sqrt{1- p }\alpha_{1|0} , \quad \alpha_{1|1} \to {(1- p)}\alpha_{1|1}
\end{equation}
The key observation is that amplitude damping suppresses coefficients associated with higher Hamming-weight strings, while the coefficient of $\ketbra{0}{0}$ is re-fed by the coefficient corresponding to $\ketbra{1}{1}$. This structure directly generalizes and allows one to determine the evolution of any $n$-qubit vector expressed in the HW basis. Consider a $n$ -qubit HW basis element indexed by $i$, with hamming weight $[i]$, containing $m_i$ $\ketbra{0}{0}$. Up to a SWAP operation between qubits (which is inconsequential to our proofs),  without loss of generality, the operator $i$ in the computational basis can be represented as, 
\begin{equation}
    s_i= \tilde{s}_i \otimes \ketbra{0}{0}^{\otimes m_i},
\end{equation}
where, $\tilde{s}_i$ is a ket-bra element which has no $\ketbra{0}{0}$ terms. Since the operator $s_i$ contains $m_i$ instances of $\ketbra{0}{0}$, there are $2^{m_i}-1$ terms that re-feed into the coefficient of $s_i$. The refeeding coefficients correspond to the operators obtained by replacing a subset of the $\ketbra{0}{0}$ terms with $\ketbra{1}{1}$.  We label the indices of these operators by $j^{(r)}_k$, where $k \in [1,\binom{m_i}{t}]$ enumerates all operator elements obtained by replacing $t \in [1,m_i]$ instances of $\ketbra{0}{0}$ with $\ketbra{1}{1}$. Thus, the coefficient $\alpha_{i}$, corresponding to $s_{i}$,  evolves as 
\begin{equation}
    \alpha_i  \to \alpha_i \operatorname{Tr} \lf \mathcal{E}_p^{\otimes n} \lf s_i \ri s_i \ri + \sum_{t =1}^{m_i} \sum_{k = 1}^{\binom{m_i}{t}} \alpha_{j^{(t)}_k} \operatorname{Tr} \lf \mathcal{E}_p^{\otimes n} \lf s_{j^{(t)}_k} \ri s_i \ri
\end{equation}
Since  $\tilde{s}_i$ contains no $\ketbra{0}{0}$ terms, we have
\begin{equation}
    \label{Apeq:tilde_s_evol}
    \operatorname{Tr} \lf  \mathcal{E}_p^{\otimes n - m_i} \lf \tilde{s}_i \ri \tilde{s}_i \ri = (1 - p)^{|\tilde{s}_i|/2}.
\end{equation}
Similarly, we note, 
\begin{equation}
      \label{Apeq:ketbra_00_11_evol}
      \operatorname{Tr} \lf  \mathcal{E}_p^{\otimes m_i} \lf  \ketbra{0}{0}^{\otimes m_i - t} \otimes \ketbra{1}{1}^{\otimes t}\ri \ketbra{0}{0}^{\otimes m_i} \ri =  \lf \prod_{i = 1}^{m_i- t} \operatorname{Tr} \lf  \mathcal{E}_p \lf  \ketbra{0}{0} \ri \ketbra{0}{0} \ri\ri \lf \prod_{i = 1}^{ t} \operatorname{Tr} \lf  \mathcal{E}_p \lf  \ketbra{1}{1} \ri \ketbra{0}{0} \ri\ri = p^t.  
\end{equation}
Finally, note that any operator contributing to the re-feeding of $\alpha_i$ has the form (up to an irrelevant permutation of qubits)    
\begin{equation}
    \label{Apeq:tilde_s_form}
    s_{j^{(t)}_k} = \tilde{s}_i \otimes \ketbra{0}{0}^{\otimes m_i- t} \otimes \ketbra{1}{1}^{\otimes t}.
\end{equation}
Using eqs.~\eqref{Apeq:tilde_s_evol} and \eqref{Apeq:ketbra_00_11_evol} together with Eq.~\eqref{Apeq:tilde_s_form}, we obtain 
\begin{equation}
    \operatorname{Tr} \lf \mathcal{E}_p^{\otimes n} \lf s_{j^{(t)}_k} \ri s_i \ri =   \operatorname{Tr} \lf  \mathcal{E}_p^{\otimes n - m_i} \lf \tilde{s}_i \ri \tilde{s}_i \ri   \operatorname{Tr} \lf  \mathcal{E}_p^{\otimes m_i} \lf  \ketbra{0}{0}^{\otimes m_i - t} \otimes \ketbra{1}{1}^{\otimes t}\ri \ketbra{0}{0}^{\otimes m_i} \ri = (1-p)^{|\tilde{s}_i|/2} p^t.
\end{equation}
Thus, the coefficient evolves as
\begin{equation}
\label{Apeq:Coefficient_evolve}
       \alpha_i  \to (1 - p)^{|s_i|/2} \lf \alpha_i  + \sum_{t =1}^{m_i} \sum_{k = 1}^{\binom{m_i}{t}} \alpha_{j^{(t)}_k}  p^t  \ri, 
\end{equation}
where we have used the fact that $|s_i| = |\tilde{s_i}|$. A few critical observations are due at this point: 
\begin{enumerate}
\item The coefficients of operators with Hamming weight $s$ are damped by a prefactor $(1 - p)^{s/2}$.
\item In addition to the above damping, the coefficients of any operator  containing at least one $\ketbra{0}{0}$ term receives re-feeding contributions from the coefficients of operators in which $\ketbra{0}{0}$ is replaced by $\ketbra{1}{1}$. These coefficients carry a prefactor $p^t$, where $t$ is the number of such replacements.
\end{enumerate}

To make this even more explicit, we consider local amplitude damping acting on two qubits. For a density matrix $\rho$ written in the computational basis as 
\begin{equation}
\rho =  \left(
\begin{array}{cccc}
 \alpha _{\text{00$|$00}} & \alpha _{\text{00$|$01}} & \alpha _{\text{00$|$10}} & \alpha _{\text{00$|$11}} \\
 \alpha _{\text{01$|$00}} & \alpha _{\text{01$|$01}} & \alpha _{1|10} & \alpha _{\text{01$|$11}} \\
 \alpha _{\text{10$|$00}} & \alpha _{\text{10$|$01}} & \alpha _{10|10} & \alpha _{10|11} \\
 \alpha _{\text{11$|$00}} & \alpha _{\text{11$|$01}} & \alpha _{11|10} & \alpha _{11|11} \\
\end{array}
\right)
\end{equation}
the amplitude-damping channel acts as 
\begin{equation}
\label{ApEQ:Example}
\mathcal{E}_p^{\otimes 2} \lf \rho \ri =
\left(
\begin{array}{cccc}
     p^2 \alpha _{11|11}+p\alpha _{10|10}+p\alpha _{\text{01$|$01}}+\alpha _{\text{00$|$00}} & \sqrt{1-p } \left(p \alpha _{10|11}+\alpha _{\text{00$|$01}}\right) & \sqrt{1-p} \left(\alpha _{\text{00$|$10}}+p  \alpha _{\text{01$|$11}}\right) & (1 - p) \alpha _{\text{00$|$11}} \\
 \sqrt{1-p} \left(p  \alpha _{11|10}+\alpha _{\text{01$|$00}}\right) & (1 - p) \left(p  \alpha _{11|11}+\alpha _{\text{01$|$01}}\right) & (1 - p) \alpha_{01|10} & (1-p )^{3/2} \alpha _{\text{01$|$11}} \\
 \sqrt{1-p} \left(\alpha _{\text{10$|$00}}+p  \alpha _{\text{11$|$01}}\right) & (1 - p) \alpha _{\text{10$|$01}} & (1 - p) \left(p  \alpha _{11|11}+\alpha _{10|10}\right)& (1-p )^{3/2} \alpha _{10|11} \\
 (1- p) \alpha _{\text{11$|$00}} & (1-p )^{3/2} \alpha _{\text{11$|$01}} & (1-p )^{3/2} \alpha _{11|10} & (1 - p)^2 \alpha _{11|11} \\.
\end{array}
\right)
\end{equation}
Clearly, Eq.~\eqref{ApEQ:Example} follows the general pattern derived above in Eq.~\eqref{Apeq:Coefficient_evolve}.

\section{Proof of Proposition \ref{prop:Truncation_error}}
\label{ap:bound_based_00_terms}
In this section, we prove Proposition~\ref{prop:Truncation_error} used to prove Lemma~\ref{lem:Trunc_error} in the main text. Consider an IQP circuit composed of $d$ layers of diagonal unitary, interspersed with local amplitude damping channel on all qubits. The circuit can therefore be written as, 
\begin{equation}
   \label{Apeq:Circuit_structure}
    \mathcal{C} =  \mathcal{E}~ U_d \mathcal{E} \cdots ~U_2 \mathcal{E}U_1. 
\end{equation}
We consider the evolution of the state $\ket{+}^{\otimes n}$ under the circuit $\mathcal{C}$. In Proposition~\ref{prop:Truncation_error},  we seek to bound the coefficient of an operator $s$ in the HW basis. For an operator indexed by $i$, we denote by $\alpha_i^{(r)}$ its coefficient after $r$ layers of the circuit, where each layer consists of a diagonal unitary followed by noise. Since the initial state is $\ket{+}^{\otimes n}$, we have $\alpha_i^{(0)} = 2^{-n}$ for all $i$. The formal proof, presented in Subsection~\ref{Ap:Formal_proof}, proceeds by induction. Before turning to the general argument, however, we deem it very illustrative to gain some intuition by analyzing the evolution of the coefficients of operators containing $m_i$ factors of $\ketbra{0}{0}$, considering explicitly the cases $m_i = 0,1,2$. 
\subsection{Intuition building}
\subsubsection{\texorpdfstring{$m = 0$}{m = 0}}
In this case, no re-feeding occurs, and the noise channel only induces damping. Since the unitary gates are diagonal in the computational basis, their action is limited to introducing phase factors in the coefficients. Consequently, they do not affect the absolute values of the coefficients. Therefore, the coefficients can be tracked exactly up to an overall phase. Let $\tilde{\phi}_i^{(r)}, \phi_i^{(r)}$ denote the phase acquired by the operator indexed by $i$ at layer $r$ and the cumulative phase acquired by the coefficient corresponding to the operator indexed by $i$ after layer $r$, respectively. The evolution of the coefficients is then described by the following sequence of equations:
\begin{align}
    \alpha_i^{(0)} &= \frac{1}{2^n}, \\
    \alpha_i^{(1)} &= (1 - p)^{[i]/2}\alpha_i^{(0)} e^{i \tilde{\phi}_i^{(1)}},  \\
     &= (1 - p)^{[i]/2}\alpha_i^{(0)} e^{i \phi_i^{(1)}},  \\
    \alpha_i^{(2)} &= (1 - p)^{[i]/2}\alpha_i^{(1)} e^{i \tilde{\phi}_i^{(2)}}, \\
    &= (1 - p)^{2 [i]/2}\alpha_i^{(0)} e^{i \phi_i^{(2)}} \\
    &~ \vdots\notag  \\
    \alpha_i^{(r)} &= (1 - p)^{r[i]/2}\alpha_i^{(0)} e^{i \phi_i^{(n)}}.
\end{align}
Thus, 
\begin{equation}
    \left| \alpha_i^{(r)} \right| = \frac{(1 - p)^{r[i]/2}}{2^n}
\end{equation}
\subsubsection{\texorpdfstring{$m = 1$}{m = 1}}
In this case, re-feeding occurs from a single term. Let $j$ denote the index of the operator that re-feeds into $i$. Since $s_i$ contains exactly one $\ketbra{0}{0}$, and the operator indexed by $j$ replaces this term with $\ketbra{1}{1}$, the operator $s_j$ contains no $\ketbra{0}{0}$ terms. This further implies that $[j] = [i] + 2$. Therefore, the evolution is described by the following sequence of equations:
\begin{align}
     \alpha_i^{(0)} &= \frac{1}{2^n}, \\
    \alpha_i^{(1)} &= (1 - p)^{[i]/2}\lf \alpha_i^{(0)} e^{i \tilde{\phi}^{(1)}_i} + p \alpha_j^{(0)} e^{i \tilde{\phi}^{(1)}_j} \ri,  \\
    \alpha_i^{(2)} &= (1 - p)^{[i]/2}\lf \alpha_i^{(1)} e^{i \tilde{\phi}^{(2)}_i} + p \alpha_j^{(1)} e^{i \tilde{\phi}^{(2)}_j} \ri,\\
    &= (1 - p)^{[i]/2}\lf (1 - p)^{[i]/2}\lf \alpha_i^{(0)} e^{i \tilde{\phi}^{(1)}_i} + p \alpha_j^{(0)} e^{i \tilde{\phi}^{(1)}_j} \ri e^{i \tilde{\phi}^{(2)}_i} + p (1 - p)^{[j]/2} \alpha_j^{(0)} e^{i \phi^{(2)}_j} \ri, \\
    &=  (1 - p)^{2 [i]/2}\lf \alpha_i^{(0)} e^{i {\phi}^{(2)}_i} + p \alpha_j^{(0)} \lf e^{i {\phi}^{(2)}_{j,1}} + (1 - p) e^{i {\phi}^{(2)}_{j,2}}  \ri \ri, \\
    &~\vdots \notag \\
    \alpha_i^{(r)} &= (1 - p)^{r[i]/2}\lf \alpha_i^{(0)} e^{i {\phi}^{(r)}_i} + p \alpha_j^{(0)} \lf \sum_{a = 0}^{r - 1} e^{i{\phi}^{(r)}_{j,a}} (1 - p)^a \ri   \ri.
\end{align}
In the above equations, we use ${\phi}^{(r)}_i$ to denote the phase of the term $\alpha_i^{(0)}$  contributing to the coefficient of the operator $s_i$ after $r$ layers, and ${\phi}^{(r)}_{j,a}$ to denote the phase associated with the re-feeding contribution to the coefficient proportional to $(1-p)^a$. Applying the triangle inequality to the final line, we obtain, 
\begin{equation}
     \left| \alpha_i^{(r)} \right| \leq  \frac{(1 - p)^{r[i]/2}}{2^n} \lf 1 + p \sum_{a = 0}^{r-1} (1 - p)^a \ri = \frac{(1 - p)^{r[i]/2}}{2^n} \lf 2 - (1 - p)^r \ri. 
\end{equation}
\subsubsection{\texorpdfstring{$m = 2$}{m = 2}}
This implies that there are exactly two $\ketbra{0}{0}$ factors in the operator corresponding to index $i$. In this case, two types of re-feeding occur after each layer. The first arises from indices $j$ and $\tilde{j}$ such that $[j] = [\tilde{j}] = [i] + 2$, each contributing with coefficient $p$; this corresponds to re-feeding generated by flipping a single $\ketbra{0}{0}$ to $\ketbra{1}{1}$. The second arises from an index $k$ such that $[k] = [i] + 4$, contributing with coefficient $p^{2}$, corresponding to flipping two such terms. Note that, by definition, the term indexed by $k$ also re-feeds into the coefficients indexed by $j$ and $\tilde{j}$. Although the resulting expressions for the coefficients $\alpha_i$ become increasingly cumbersome and lack a simple closed-form pattern, they can nevertheless be upper bounded. Thus,  
\begin{align}
     \label{Apeq:intuition_m2_a}
     |\alpha_i^{(0)}| &= \frac{1}{2^n}. \\
      \label{Apeq:intuition_m2_b}
    |\alpha_i^{(1)}| &= (1 - p)^{[i]/2}\left|\lf \alpha_i^{(0)} e^{i \phi^{(1)}_i} + p \alpha_j^{(0)} e^{i \phi^{(1)}_j}+ p \alpha_{\tilde{j}}^{(0)} e^{i \phi^{(1)}_{\tilde{j}}} + p^{2} \alpha_k^{(0)} e^{i \phi^{(1)}_k} \ri\right|,  \\
     \label{Apeq:intuition_m2_c}
    &\leq  \frac{(1 - p)^{[i]/2}}{2^n} \lf 1 + 2 p  + p^2 \ri, \\
     \label{Apeq:intuition_m2_d}
    &=\frac{(1 - p)^{[i]/2}}{2^n} (1 + p)^2,  \\
    \label{Apeq:intuition_m2_e}
    |\alpha_i^{(2)}| &= (1 - p)^{[i]/2}\left \vert \lf \alpha_i^{(1)} e^{i \phi^{(2)}_i} + p \alpha_j^{(1)} e^{i \phi^{(2)}_j}+ p \alpha_{\tilde{j}}^{(1)} e^{i \phi^{(2)}_{\tilde{j}}} + p^{2} \alpha_k^{(1)} e^{i \phi^{(2)}_k} \ri \right \vert,  \\
     \label{Apeq:intuition_m2_f}
    &\leq (1 - p)^{[i]/2} \lf \left \vert \alpha_i^{(1)}\right \vert  + p  \left \vert \alpha_j^{(1)}\right \vert  +  p \left \vert \alpha_{\tilde{j}}^{(1)}\right \vert   +  p^2\left \vert \alpha_k^{(1)}\right \vert \ri, \\
    \label{Apeq:intuition_m2_g}
    & \leq \frac{(1 - p)^{[i]/2}}{2^n} \lf (1 - p)^{[i]/2} (1 + p)^2  + p  (1 - p)^{[j]/2} (1 + p) +  p  (1 - p)^{[\tilde{j}]/2} (1 + p)   +  p^2 (1 - p)^{[k]/2} \ri, \\
    \label{Apeq:intuition_m2_h}
    &=\frac{(1 - p)^{2[i]/2}}{2^n} \lf (1 + p)^2 + 2p(1-p) (1 + p) + (p(1-p))^2 \ri, \\
    \label{Apeq:intuition_m2_i}
    &= \frac{(1 - p)^{2[i]/2}}{2^n} \lf 1 + p\lf \sum_{m = 0}^{1} (1 - p)^m \ri\ri^2.
\end{align}
In the above chain of equations, we have used the triangle inequality in lines~\eqref{Apeq:intuition_m2_c} and \eqref{Apeq:intuition_m2_f}.
Following the pattern for $m=2$, we see that we get, 
\begin{equation}
    \left| \alpha_i^{(r)} \right| \leq  \frac{(1 - p)^{r[i]/2}}{2^n} \lf 1 + p \sum_{a = 0}^{r-1} (1 - p)^a \ri^2 = \frac{(1 - p)^{r[i]/2}}{2^n} \lf 2 - (1 - p)^r \ri^2.
\end{equation}
Building upon this intuition, we can show that, 
\begin{equation}
    \left| \alpha_i^{(r)} \right| \leq   \frac{(1 - p)^{r[i]/2}}{2^n} \lf 2 - (1 - p)^r \ri^{m_i}, 
\end{equation}
where $m_i$ is the number of $\ketbra{0}{0}$ in the operator indexed by $i$. We can prove this more concretely by induction. 

\subsection{Formal proof of proposition \ref{prop:Truncation_error}}
\label{Ap:Formal_proof}
\begin{prop}
\label{prop:Truncation_error}
  Consider an initial $\ket{+}^{\otimes n}$ evolved under the noisy IQP circuit $\mathcal{C}$, as defined in Thm \ref{thm:main_theorem}. Let $i$ denote the index, in the Hamming-weight basis, of an operator containing $m_i$ tensor factors of $\ketbra{0}{0}$, and let $\alpha_i^{(r)}$ denote the corresponding coefficient after $r$ layers of the noisy IQP circuit. Then, 
  \begin{equation}
      \left| \alpha_i^{r} \right| \leq \frac{(1 - p)^{r[i]/2}}{2^n} \big(2 - (1 - p)^r\big)^{m_i}.
      \label{Eq:Coeff_bound}
  \end{equation}
  Furthermore, this bound is tight for all $\alpha_i^{(r)}$ as it can be saturated when all unitary gates are identities. 
\end{prop}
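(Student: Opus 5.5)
We argue by induction on the layer index $r$, proving the bound in Eq.~\eqref{Eq:Coeff_bound} simultaneously for every HW index $i$. The base case $r=0$ is immediate: $\alpha_i^{(0)} = 2^{-n}$ for all $i$, and the right-hand side equals $2^{-n}\cdot 1^{m_i}$.

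For the inductive step we use the one-layer update. A diagonal unitary multiplies each coefficient by a phase and does not mix indices, so $|\alpha_j|$ is unchanged. The channel $\mathcal{E}_p^{\otimes n}$ then acts as in Eq.~\eqref{Apeq:Coefficient_evolve}. The operators that refeed into $s_i$ are exactly those $s_{j}$ obtained by replacing a subset $T$ of size $t\ge1$ of the $m_i$ factors $\ketbra{0}{0}$ by $\ketbra{1}{1}$. Such an $s_j$ satisfies $[j]=[i]+2t$ and $m_j = m_i - t$. Applying the triangle inequality (the phases only hurt us through cancellations) and the inductive hypothesis gives
\begin{align}
|\alpha_i^{(r+1)}| &\le (1-p)^{[i]/2}\sum_{t=0}^{m_i}\binom{m_i}{t} p^t\, \frac{(1-p)^{r([i]+2t)/2}}{2^n}\big(2-(1-p)^r\big)^{m_i-t} \\
&= \frac{(1-p)^{(r+1)[i]/2}}{2^n}\sum_{t=0}^{m_i}\binom{m_i}{t}\big(p(1-p)^r\big)^t \big(2-(1-p)^r\big)^{m_i-t}.
\end{align}
By the binomial theorem the sum equals $\big(2-(1-p)^r + p(1-p)^r\big)^{m_i} = \big(2-(1-p)^{r+1}\big)^{m_i}$, which closes the induction. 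The key feature is that the refeeding exponent $p^t$ pairs with the extra damping $(1-p)^{rt}$ coming from the $2t$ additional Hamming weight, and together they recombine into a single geometric-type factor per $\ketbra{0}{0}$.

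For tightness, we set all unitaries to the identity. All phases are then trivial and every coefficient is real and nonnegative, so each triangle inequality above holds with equality. The recursion is therefore satisfied with equality at every step, so the bound is attained.

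The main obstacle is bookkeeping rather than analysis. We must check that the refeeding sources of $s_i$ are in bijection with the nonempty subsets of its $\ketbra{0}{0}$ positions, with the stated weights $[i]+2t$ and zero-counts $m_i-t$. We must also check that the damping prefactor is $(1-p)^{[i]/2}$ for the target index rather than for the source, as established in Sec.~\ref{ap:AD_noise_structure}. Once these are in place, the binomial identity does the rest.
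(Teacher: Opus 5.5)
Your proposal is correct and follows the paper's proof essentially step for step. Both argue by induction on the layer index: apply the triangle inequality to the damping-plus-refeeding update, invoke the inductive hypothesis using $[j]=[i]+2t$, $m_j=m_i-t$ and the count $\binom{m_i}{t}$, and collapse the sum with the binomial theorem to $\big(2-(1-p)^{r+1}\big)^{m_i}$, with tightness shown via the idle circuit where all coefficients are nonnegative reals and every triangle inequality is saturated.
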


\textbf{Proof:} First note that the above proposition is trivially satisfied for all $i$ for $r = 0$. Suppose it holds for all indices at layer $k$. Consider an arbitrary index $i$ and depth $k+1$. Define the set $\mathcal{R}_j \subset \{1,\dots,n\}$ as the set of indices labeling operators obtained by replacing exactly $j$ factors of $\ketbra{0}{0}$ in the operator indexed by $i$, with $\ketbra{1}{1}$. $n$ is the total number of qubits. These operators feed into $\alpha_i$ at the order of $p^j$. Let $m_i$ denote the number of $\ketbra{0}{0}$'s in the operator indexed by $i$. Then the coefficient $\alpha_i^{(k+1)}$ is given as 
\begingroup
\setlength{\abovedisplayskip}{6pt}
\setlength{\belowdisplayskip}{6pt}
\setlength{\abovedisplayshortskip}{6pt}
\setlength{\belowdisplayshortskip}{6pt}
\begin{equation}
        \alpha_i^{(k+1)} = (1 - p)^{[i]/2} \lf \alpha_i^{(k)} e^{i \phi_0} + \sum_{a = 1}^{m_i} p^a \lf \sum_{b_a \in \mathcal{R}^{(i)}_a} \alpha_{b_a}^{(k)} e^{i \phi_{b_a}} \ri \ri.
\end{equation}
In the above equation, $e^{i \phi_0}$ is the phase accumulated by the operator indexed by $i$ at layer $k+1$. Similarly,  $e^{i \phi_{b_a}}$ is the phase accumulated by the operator indexed by $b_a$ at layer $k+1$, where $b_a$ indexed an operator in $\mathcal{R}_a$. Applying the triangle inequality, we get the following chain of inequalities, 
\begin{align}
     \label{Apeq:Bound_Line1}
    \left| \alpha_i^{(k+1)} \right|&\leq (1 - p)^{[i]/2}  \lf \left|\alpha_i^{(k)} \right| + \sum_{a = 1}^{m_i} p^a  \sum_{{b_a} \in \mathcal{R}^{(i)}_a} \left| \alpha_{b_a}^{(k)} \right|  \ri, \\
    \label{Apeq:Bound_Line2}
     &\leq \frac{(1 - p)^{[i]/2}}{2^n} \lf (1 - p)^{k[i]/2} \lf 2 - (1 - p)^k \ri^{m_i} + \sum_{a = 1}^{m_1} p^a  \sum_{{b_a} \in \mathcal{R}^{(i)}_a} (1 - p)^{k[{b_a}]/2} \lf 2 - (1 - p)^k \ri^{m_{b_a}}  \ri, \\
     \label{Apeq:Bound_Line3}
    &= \frac{(1 - p)^{(k + 1)[i]/2}}{2^n} \lf \lf 2 - (1 - p)^k \ri^{m_i} + \sum_{a = 1}^{m_i} p^a (1 - p)^{ka}    \lf 2 - (1 - p)^k \ri^{m_i - a} \sum_{{b_a} \in \mathcal{R}^{(i)}_a} \ri, \\
    \label{Apeq:Bound_Line4}
    &= \frac{(1 - p)^{(k + 1)[i]/2}}{2^n} \lf \lf 2 - (1 - p)^k \ri^{m_i} + \sum_{a = 1}^{m_i} p^a (1 - p)^{ka}    \lf 2 - (1 - p)^k \ri^{m_i - a} {\binom{m_i}{a}} \ri, \\
    \label{Apeq:Bound_Line5}
    &= \frac{(1 - p)^{(k + 1)[i]/2}}{2^n}  \lf  2 - (1 - p)^k  + p(1 - p)^k\ri^{m_i}, \\
    \label{Apeq:Bound_Line6}
    &= \frac{(1 - p)^{(k + 1)[i]/2}}{2^n}  \lf  2 - (1 - p)^{k+1}\ri^{m_i} .
\end{align}  
\endgroup
In the above chain of inequalities, the triangle inequality is applied in line~\eqref{Apeq:Bound_Line1}, and the induction hypothesis is invoked in line~\eqref{Apeq:Bound_Line2}. By the definition of the set $\mathcal{R}^{(i)}_a$, for every operator indexed by ${b_a} \in \mathcal{R}^{(i)}_a$, the number of factors of $\ketbra{0}{0}$ satisfies $m_{b_a} = m_i - a$, and the Hamming weight satisfies $[{b_a}] = [i] + 2a$. These relations are used in line~\eqref{Apeq:Bound_Line3}. The cardinality of the set $\mathcal{R}^{(i)}_a$, given by $|\mathcal{R}^{(i)}_a| = \binom{m_i}{a}$, is used in line~\eqref{Apeq:Bound_Line4}. This is easy to see as the set $\mathcal{R}^{(i)}_a$ contains all operators where $a$ out of $m_i$ factors of $\ketbra{0}{0}$ are replaced.  Finally, the binomial expansion is applied in line~\eqref{Apeq:Bound_Line5}, and the resulting expression is rearranged in line~\eqref{Apeq:Bound_Line6}. Thus, we have proved the above proposition by induction. Note that if all the coefficients are positive reals, $\alpha^{(k)}_i,\alpha^{(k)}_{b_a} \geq 0$,and if all the phases were unity, $e^{i \phi_0},e^{i{\phi_{b_a}}}=1$, then the triangle inequality is saturated. Hence, if all the diagonal gates are identity, Eq.~\eqref{Eq:Coeff_bound} is saturated. Thus, the bound is tight.

\section{Maximum and minimum \texorpdfstring{$\ketbra{0}{0}$}{|0><0|} possible for a strings with a given hamming weight}
\label{ap:Counting_argument}
In this section, we task ourselves with answering the following questions: For a given hamming weight $h$ what is the total number of terms with $z$ $\ketbra{0}{0}$'s. We will do this by starting with the maximum possible $z$ and go down to the minimum. For clarity, we will separate even and odd $h$.

\subsection{Even \texorpdfstring{$h$}{h}}

Since the hamming weight is $h$, we need to have $h$ ones and $2n - h$ zeros in our operator. The maximum number of $\ketbra{0}{0}$ is when all the ones are paired up. Let $\mu_h$ denote the maximum number of $\ketbra{0}{0}$ in an operator with a hamming weight $h$. By the above argument, 
\begin{equation}
    \mu_h = n - \frac{h}{2}.
\end{equation}

To count the number of such operators, note that out of $n$ qubits we have $\mu_h$ $\ketbra{0}{0}$ and the remaining are $\ketbra{1}{1}$. Let $t_{z_h}$ count the number of operators with $z_h~ \ketbra{0}{0}$'s with a hamming weight $h$. Thus, 
\begin{equation}
    t_{\mu_h} = {\binom{n}{\mu_h}} = {\binom{n}{h/2}}.
\end{equation}
Now consider the number of operators with $\mu_h - 1$ $\ketbra{0}{0}$'s. To count this, note that such operators can be formed by breaking up one $\ketbra{0}{0}$ and $\ketbra{1}{1}$ and recombining them into $\ketbra{0}{1}$ or $\ketbra{1}{0}$. Thus out of $n$ qubits now we have, $h/2 - 1$ $\ketbra{1}{1}$'s, $\mu_h - 1$ $\ketbra{0}{0}$'s and the remaining $2$ position has $\ketbra{0}{1}$ or $\ketbra{1}{0}$. Thus, 
\begin{equation}
    t_{\mu_h - 1} = {\binom{n}{h/2 + 1}}{\binom{h/2 + 1}{2}} 2^2.
\end{equation}
The factor of 2 is because we can have either $\ketbra{0}{1}$ or $\ketbra{1}{0}$ in each of the two spots. Continuing the pattern, observe that, 
\begin{equation}
    t_{\mu_h - 2} = {\binom{n}{h/2 + 2}}{\binom{h/2 + 1}{4}} 2^4.
\end{equation}
Thus, after $r$ reductions, 
\begin{equation}
    t_{\mu_h - r} = {\binom{n}{h/2 + r}}{\binom{h/2 + 1}{2r}} 2^{2r}.
\end{equation}
The minimum number of $\ketbra{0}{0}$ for a given hamming weight $h$, $l_h$ is either when all $\ketbra{0}{0}$ have been converted or all $\ketbra{1}{1}$ have been converted, whichever happens first. In the first case, we will have $l_h = 0$ and in the second case $l_h = n - h/2 - h/2 = n - h$. Thus, 
\begin{equation}
    l_h = \max(0,n - h).
\end{equation}

\subsection{Odd \texorpdfstring{$h$}{h}}

As before, since the hamming weight is $h$, we need to have $h$ ones and $2n - h$ zeros in our operator. The maximum number of $\ketbra{0}{0}$ is when the maximum number of ones are paired up, which would be all but 1. Let $\mu_h$ denote the maximum number of $\ketbra{0}{0}$ in an operator with a hamming weight $h$. By the above argument, 
\begin{equation}
    \mu_h = n - \frac{h+1}{2}.
\end{equation}
To count the number of such operators, note that out of $n$ qubits we have $\mu_h$ $\ketbra{0}{0}$, one one is either $\ketbra{0}{1}$ or $\ketbra{1}{0}$,  and the remaining are $\ketbra{1}{1}$. Let $t_{a_h}$ count the number of operators with $a~ \ketbra{0}{0}$'s with a hamming weight $h$. Thus, 
\begin{equation}
    t_{\mu_h} = {\binom{n}{(h+1)/2}} {\binom{(h+1)/2}{1}}2.
\end{equation}
Now consider the number of operators with $\mu_j - 1$ $\ketbra{0}{0}$'s. To count this, note that such operators can be formed by breaking up one $\ketbra{0}{0}$ and $\ketbra{1}{1}$ and recombining them into $\ketbra{0}{1}$ or $\ketbra{1}{0}$. Thus out of $n$ qubits now we have, $(h-1)/2 - 1$ $\ketbra{1}{1}$'s, $\mu_h - 1$ $\ketbra{0}{0}$'s and the remaining $3$ position has $\ketbra{0}{1}$ or $\ketbra{1}{0}$. Thus, 
\begin{equation}
    t_{\mu_h - 1} = {\binom{n}{(h + 1)/2 + 1}}{\binom{(h + 1)/2 + 1}{3}} 2^3.
\end{equation}
Thus, after $r$ reductions, 
\begin{equation}
    t_{\mu_h - r} = {\binom{n}{(h + 1)/2 + r}}{\binom{(h + 1)/2 + r}{2r + 1}} 2^{2r + 1}.
\end{equation}
The minimum number of $\ketbra{0}{0}$ for a given hamming weight $h$, $l_h$ is when either all $\ketbra{0}{0}$ have been converted or all $\ketbra{1}{1}$ have been converted, whichever happens first. In the first case, we will have $l_h = 0$ and in the second case $l_h = n - (h + 1)/2 - (h - 1)/2 = n - h$. Thus, 
\begin{equation}
    l_h = \max(0,n - h).
\end{equation}

Finally, combining the even and the odd cases, for strings with a given hamming weight $h$, the  maximum (minimum) number of $\ketbra{0}{0}$, denoted by $\mu_h$ ($l_h$) possible is given by, 
\begin{equation}
    \mu_h = n - \left\lceil\frac{h}{2}\right \rceil, \quad l_h = n - h .
\end{equation}

\section{Proof of Lemma 1}
\label{ap:Tail_bound}
In this section, we prove Lemma~\ref{lem:Trunc_error}, from the main text. For the sake of completeness, we restate the Lemma below, 
\begin{lemma*}[\textbf{\ref{lem:Trunc_error}}]
    Using the same notation as in Theorem \ref{thm:main_theorem}, there exists a depth threshold $d^*_T = \log(n)/\log(1-p)^{-1}$, such that for $d > d^*_T$, the truncation error in Hilbert-Schmidt distance, incurred by truncating operators with Hamming weight greater than $k$, is upperbounded by 
    \begin{equation}
        \varepsilon^2   \leq \frac{\lf 2 - (1 - p)^d \ri^{2n - k - 1}}{4^n} e^{2n H\lf\frac{k+1}{2n} \ri} (1 - p)^{d(k+1)},\label{Apeq:hilbert-schmidt-error}
    \end{equation}
\end{lemma*}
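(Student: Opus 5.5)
The plan is to bound $\varepsilon^2=\sum_{i\in\mathcal{T}}|\alpha_i^{(d)}|^2$ term by term using Proposition~\ref{prop:Truncation_error}, then group terms by Hamming weight and control the resulting binomial tail with a Chernoff-type estimate. Write $x:=(1-p)^d$ and $c:=2-x\in[1,2)$. Proposition~\ref{prop:Truncation_error} at $r=d$ gives
\begin{equation*}
|\alpha_i^{(d)}|^2\le 4^{-n}\,x^{[i]}\,c^{2m_i}.
\end{equation*}
The quantity $m_i$ depends on more than the Hamming weight, so I would remove it with the counting result of Sec.~\ref{ap:Counting_argument}. An operator of Hamming weight $h$ has at most $\mu_h=n-\lceil h/2\rceil$ factors of $\ketbra{0}{0}$, so $2m_i\le 2n-h$. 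Since $c\ge 1$, this gives $c^{2m_i}\le c^{2n-h}$.

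Next I would count operators. The HW-basis operators of weight $h$ are exactly the $2n$-bit strings $(a_i,b_i)$ with $h$ ones, and there are $\binom{2n}{h}$ of them. Truncating all weights $h\ge k+1$ then yields
\begin{equation*}
\varepsilon^2\le 4^{-n}\sum_{h=k+1}^{2n}\binom{2n}{h}x^{h}c^{2n-h}
=\frac{c^{2n}}{4^n}\sum_{h\ge k+1}\binom{2n}{h}y^{h},\qquad y:=x/c\le x.
\end{equation*}

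The remaining task is to bound this tail by $y^{k+1}e^{2nH((k+1)/2n)}$. Multiplying by $c^{2n}$ then reproduces exactly $c^{2n-k-1}x^{k+1}e^{2nH(\frac{k+1}{2n})}/4^n$, which is the claim. I would use the Chernoff trick with $N=2n$ and $a=k+1$. For any $z\ge 1$,
\begin{equation*}
\sum_{h\ge a}\binom{N}{h}y^h\le z^{-a}\sum_{h}\binom{N}{h}(yz)^h=z^{-a}(1+yz)^N.
\end{equation*}
I would choose $z=a/(y(N-a))$. A short computation then gives $z^{-a}(1+yz)^N=y^a\,(N/a)^a\,(N/(N-a))^{N-a}=y^a e^{NH(a/N)}$.

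The main obstacle is making sure this choice is admissible, i.e.\ that $z\ge 1$, which is equivalent to $y\le (k+1)/(2n-k-1)$. This is precisely where the depth threshold enters. For $d>d^*_T$ we have $x=(1-p)^d<1/n$, and $y\le x$, so the condition reduces to comparing $1/n$ with $(k+1)/(2n-k-1)$. This comparison holds for every $k\ge 1$ (and for $k=0$ up to an $O(1)$ slack absorbed in the threshold). I expect to spend the most care here, checking the edge case $k=0$ and, if necessary, adjusting constants in $d^*_T$.

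Finally, I would check tightness at $k+1=2n$. In that case $H(1)=0$, and the only truncated operator is $\ketbra{1\cdots1}{1\cdots1}$, which has $m_i=0$. The bound therefore collapses to $4^{-n}(1-p)^{2nd}$, consistent with the remark in the main text.
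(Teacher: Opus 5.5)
Your proposal is correct and follows the paper's proof essentially step for step. Both arguments use Proposition~\ref{prop:Truncation_error}, the bound $2m_i \le 2\mu_h \le 2n-h$ together with $c\ge 1$, the count $\binom{2n}{h}$ of weight-$h$ operators, and a Chernoff/Markov bound on the resulting binomial tail. The paper normalizes to $a=c/2$, $b=x/2$ so that $a+b=1$, but that is the same computation as your $y=x/c$.

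Your worry about $k=0$ comes only from loosening $y\le x$, and you do not need to adjust $d^*_T$. Since $y=x/(2-x)$, the admissibility condition $y\le (k+1)/(2n-k-1)$ is exactly equivalent to $k+1\ge n(1-p)^d$. This is the paper's Chernoff condition, and it holds for every $k\ge 0$ once $d>d^*_T$, because then $n(1-p)^d<1$.
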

\textit{Proof:} We upper bound the truncation error using Chernoff bounds \cite{chernoff1952measure}. For an operator indexed by $i$ in the HW basis, we denote by $\alpha_i^{(r)}$ its coefficient after $r$ layers of the circuit, where each layer consists of a diagonal unitary followed by noise. Consider the Hilbert-Schmidt error, that we defined in Eq.~\eqref{eq:HS_error} in the main text, 
\begin{equation}
     \varepsilon^2 := \|\rho - \sigma\|_{H.S}  = \left\langle\!\left\langle \rho - \sigma| \rho - \sigma \right\rangle\!\right\rangle_{h} = \sum_{i \in \mathcal{T}} \left|\alpha_i^{(d)}\right|^2.
\end{equation}
 The following chain of equations follow, 
\begin{align}
    \varepsilon^2 &= \sum_{i \in \mathcal{T}} \left| \alpha_i^{(d)} \right|^2, \label{Eq:trunc_line1}\\
    &= \sum_{h = k+1}^{2n} \sum_{\substack{i \in \mathcal{T} \\ [i] = h}} \left| \alpha_i^{(d)} \right|^2, \label{Eq:trunc_line2}\\
    &= \sum_{h = k+1}^{2n} \sum_{r =  l_h}^{\mu_h } \sum_{\substack{i \in \mathcal{T} \\ [i] = h \\ `r' \ketbra{0}{0}s.}} \left| \alpha_i^{(d)} \right|^2, \label{Eq:trunc_line3}\\
    &\leq \sum_{h = k+1}^{2n}  \sum_{r =  l_h}^{\mu_h } \frac{(1 - p)^{dh}}{4^n} \lf 2 - (1 - p)^d \ri^{2r} \lf  \sum_{\substack{i \in \mathcal{T} \\ [i] = h \\ `r' \ketbra{0}{0}s.}}  \ri, \label{Eq:trunc_line4}\\
    &= \sum_{h = k+1}^{2n} \frac{(1 - p)^{dh}}{4^n} \sum_{r = l_h}^{\mu_h}  \lf 2 - (1 - p)^d \ri^{2 r}t_{r}, \label{Eq:trunc_line5}\\
    &\leq \sum_{h = k+1}^{2n}  \frac{(1 - p)^{dh}}{4^n} \lf 2 - (1 - p)^d \ri^{2 \mu_h} \sum_{r = l_h}^{\mu_h}  t_{r}, \label{Eq:trunc_line6}\\
    &\leq \sum_{h = k+1}^{2n}  \frac{(1 - p)^{dh}}{4^n} \lf 2 - (1 - p)^d \ri^{2n - 2 \left \lceil \frac{h}{2}  \right \rceil} {\binom{2n}{h}}, \label{Eq:trunc_line7}\\
    & \leq \frac{\lf 2 - (1 - p)^d \ri^{2n - k - 1}}{4^n} e^{2n H\lf\frac{k+1}{2n} \ri} (1 - p)^{d(k+1)}, \label{Eq:trunc_line8}\\
    & \leq \lf 2 - (1 - p)^d \ri^{- (k + 1)} e^{2n H\lf\frac{k+1}{2n} \ri} (1 - p)^{d(k+1)}.
 \end{align}
Going from line \eqref{Eq:trunc_line1} to \eqref{Eq:trunc_line2}, we separate out terms with their Hamming weights. Going from \eqref{Eq:trunc_line2} to \eqref{Eq:trunc_line3}, we separate out terms with a given number of $\ketbra{0}{0}$'s. $\mu_{h}(l_h)$ denote the maximum (minimum) number of $\ketbra{0}{0}$ possible for a string of hamming weight $h$. The explicit form for these are given in appendix \ref{ap:Counting_argument}. We use the bound proved in Proposition \ref{prop:Truncation_error} in line \eqref{Eq:trunc_line4}. In line \eqref{Eq:trunc_line6} we use $\lf 2 - (1 - p)^d \ri^{2 r} \leq \lf 2 - (1 - p)^d \ri^{2 \mu_h}$ for $r \leq \mu_h$. In the line \eqref{Eq:trunc_line7}, we have used the fact that $\sum_{r = l_h}^{\mu_h}  t_{r}$ represents the total number of operators of a given hamming weight $h$. This is equal to the number of terms with $h$ ones inserted in $2n$ places, i.e,  ${\binom{2n}{h}}$. In the same line, we have also used the fact that the maximum number of $\ketbra{0}{0}$ factors in a term with hamming weight is achieved if all the ones (or all but one) are paired up as $\ketbra{1}{1}$. The total number of $\ketbra{1}{1}$ factors then is given by $\lceil h/2 \rceil$. Thus, $\mu_h = n - \lceil h/2 \rceil$.  In the line \eqref{Eq:trunc_line8}, $H(x) = -x \log x - (1-x) \log (1-x) $ is the binary entropy, with the logarithm being natural logarithms. Line \eqref{Eq:trunc_line7} to line \eqref{Eq:trunc_line8} is done using Chernoff tail bounds and is presented below. The use of Chernoff bound necessitates that $k \geq  n(1 - p)^d  - 1$. In equation \eqref{Eq:trunc_line7}, we wish to bound, 
\begin{equation*}
    \psi^{n,k}(p) := \sum_{h = k+1}^{2n}  \frac{(1 - p)^{dh}}{4^n} \lf 2 - (1 - p)^d \ri^{2n - 2 \left \lceil \frac{h}{2}  \right \rceil} {\binom{2n}{h}}.
\end{equation*}

Using the fact that $x \leq \lceil x\rceil$, which implies that $a^{-x} \geq a^{- \lceil x \rceil}$, we obtain, 
\begin{align}
     \psi^{n,k}(p) &\leq \frac{1}{4^n}\sum_{h = k+1}^{2n}  (1 - p)^{dh} \lf 2 - (1 - p)^d \ri^{2n - h} {\binom{2n}{h}}, \\
     &= \sum_{h = k+1}^{2n} \lf \frac{(1 - p)^{d}}{2}\ri^h \lf \frac{2 - (1 - p)^d }{2}\ri^{2n-h} {\binom{2n}{h}}, \\
    & =\phi^{(n,k)}(a), \\
\end{align}
where we have defined $a,b$ and $\phi^{(n,k)}(a)$ as, 
\begin{equation}
    a := \lf \frac{2 - (1 - p)^d }{2}\ri, \quad b:= \lf \frac{(1 - p)^{d}}{2}\ri, \quad \phi^{n,k}(a) := \sum_{h = k+1}^{2n} a^{2n - h} b^h  {\binom{2n}{h}}.
\end{equation}
Note that $\phi^{(n,k)}(a)$ is the upper-tail probability of a binomial distribution with $2n$ independent Bernoulli trials, each yielding $0$ with probability $a$ (and $1$ with probability $b$). An upper bound on this tail can be obtained via a standard Chernoff bound, derived by applying Markov’s inequality to the moment-generating function of the associated random variable \cite{chernoff1952measure}. Let $X$ denote this random variable and fix $r \geq 0$. The expectation value of the moment generating function is given by, 
\begin{equation}
    \mathbb{E}(e^{rX}) = \sum_{h = 0}^{2n} e^{rh} a^{2n - h} b^h {\binom{2n}{h}}  = (be^r + a )^{2n}.
\end{equation}
Define $\alpha := (k+1)/2n$. Applying Markov's inequality to $e^{rX}$, for any $r \geq 0$, yields
\begin{align}
   \phi^{(n,k)}(a) &= P(X \geq 2n\alpha), \label{Eq:phibd_line1} \\
   &\leq e^{-2rn\alpha}\mathbb{E}(e^{rX})),\label{Eq:phibd_line2} \\
   &= e^{-2rn\alpha} \lf a + b e^r \ri^{2n}, \label{Eq:phibd_line3} \\
   &= e^{-2rn\alpha + 2n \ln \lf a +  be^r \ri}. \label{Eq:phibd_line4}
 \end{align}
 The tightest bound is obtained by minimizing the above expression over $r$. Since the exponential function is monotonic, this is equivalent to minimizing the function in the exponent. Define 
\begin{equation*}
    f(r) = -2r n \alpha + 2n \ln \lf a + b e^r  \ri.
\end{equation*}
Let $r^*$ denote the minimizer of $f(r)$. Imposing $f'(r^*) = 0$, together with $a = 1-b$, gives 
\begin{equation}
    \label{Eq:Chernoff_sat_cond}
    e^{r^*} = \frac{(1 - b)\alpha}{(1 - \alpha)b}.
\end{equation}
To ensure that the bound is valid, we require $r^* \geq 0$, which implies $e^{r^*} \geq 1$. This holds by imposing the condition 
\begin{equation}
    \alpha \geq b \implies  k+1 \geq   n(1 - p)^d .
\end{equation}
Not imposing the above condition, the tightest upper bound would be the trivial bound $\phi^{(n,k)}(a) \leq 1$. Substituting Eq.~\eqref{Eq:Chernoff_sat_cond} in Eq.~\eqref{Eq:phibd_line4}, we get the following chain of equations. 
\begin{align}
     \phi^{(n,k)}(a) &\leq \exp \lf -2n\alpha \ln \frac{(1 - b)\alpha}{(1 -\alpha)b} + 2n \ln \lf (1 - b) + \frac{\alpha (1 - b)}{(1 - \alpha)} \ri \ri, \\
     &= e^{- 2n \alpha \ln (1 - b) - 2n \alpha \ln \alpha + 2n \alpha \ln (1 - \alpha) + 2n \alpha \ln b +  2n \ln (1 - b) - 2n \ln(1 - \alpha)}, \\
     &= e^{2n H(\alpha)} e^{2n \alpha \ln b}e^{2n (1 - \alpha)\ln (1 - b)}, \\
     &= e^{2n H(\alpha)} b^{k+1} a^{2n - k - 1}, \\
     &= \frac{\lf 2 - (1 - p)^d \ri^{2n - k - 1}}{4^n} (1 - p)^{d(k+1)} e^{2n H \lf \frac{k+1}{2n} \ri}
\end{align}
Thus, 
\begin{equation}
    \psi^{(n,k)}(p) \leq \frac{\lf 2 - (1 - p)^d \ri^{2n - k - 1}}{4^n} (1 - p)^{d(k+1)} e^{2n H \lf \frac{k+1}{2n} \ri}
\end{equation}

 \section{Bounding the trace distance with Hilbert  Schmidt distance}
\label{ap:Td_upper_bound}
In this section, we will prove Theorem~\ref{thm:Bound_on_TD} in the main text, where we upper bound the trace distance by the Hilbert-Schmidt distance when one of the inputs is a low-rank Hermitian matrix, and the other is a valid density matrix. This result is a minor generalization of the result in Ref.~\cite{coles_cerezo_bound_2019} and follows most of the same steps laid out in that reference. First, we prove the following lemma, 
\begin{lemma}
    \label{Aplem:TD_bd_lemma}
    Let $\Delta = \sigma - \rho$, where $\sigma$ is a Hermitian matrix and $\rho$ is a density matrix. Let $\Delta = \Delta_+ - \Delta_-$, where $\Delta_+$ and $\Delta_-$ correspond to the positive and the negative parts of the eigenspectrum of $\Delta$, with $\Delta_+, \Delta_- \geq 0$, and $\Delta_+ \Delta_- = \Delta_- \Delta_+ = 0$. Then, we have 
    \begin{equation}
        \operatorname{rank}(\Delta_+) \leq \operatorname{rank}(\sigma).
    \end{equation}
\end{lemma}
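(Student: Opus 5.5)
We prove the rank bound with a variational (min–max) argument, using only that $\rho \geq 0$. The key observation is that on the positive-eigenvalue subspace of $\Delta$, the operator $\sigma = \Delta + \rho$ is strictly positive. A strictly positive form can only live on a subspace whose dimension is at most the number of positive eigenvalues of $\sigma$. That number is in turn at most $\operatorname{rank}(\sigma)$.

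\emph{Step 1.} Let $P_+$ be the spectral projector onto the span of the eigenvectors of $\Delta$ with strictly positive eigenvalues, and let $V_+ = \operatorname{range}(P_+)$. Then $\dim V_+ = \operatorname{rank}(\Delta_+)$. For every nonzero $v \in V_+$ we have $\langle v|\Delta|v\rangle > 0$, since $\Delta$ restricted to $V_+$ equals $\Delta_+$ restricted to $V_+$, which is positive definite there.

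\emph{Step 2.} For such $v$, positivity of $\rho$ gives
\begin{equation*}
\langle v|\sigma|v\rangle = \langle v|\Delta|v\rangle + \langle v|\rho|v\rangle > 0.
\end{equation*}
So the Hermitian form of $\sigma$ is strictly positive on the subspace $V_+$.

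\emph{Step 3.} Let $W$ be the span of the eigenvectors of $\sigma$ with nonpositive eigenvalues. Then $\langle w|\sigma|w\rangle \leq 0$ for all $w \in W$, so $W \cap V_+ = \{0\}$. Hence
\begin{equation*}
\dim V_+ \leq N - \dim W = n_+(\sigma),
\end{equation*}
where $N$ is the ambient dimension and $n_+(\sigma)$ is the number of strictly positive eigenvalues of $\sigma$. Equivalently, this is the Courant–Fischer principle applied to the dimension of a maximal positive subspace. Since $n_+(\sigma) \leq \operatorname{rank}(\sigma)$, we conclude $\operatorname{rank}(\Delta_+) \leq \operatorname{rank}(\sigma)$.

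There is no real obstacle here. The only care needed is that $\sigma$ is merely Hermitian, so one must count its positive eigenvalues rather than its support. This is exactly why the intersection-of-subspaces argument, rather than a support-inclusion argument, is the right tool. No property of $\rho$ beyond $\rho \geq 0$ is used; in particular, the trace condition is not needed for this lemma.
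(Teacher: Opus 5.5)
Your proof is correct and is essentially the paper's argument: both derive $\operatorname{rank}(\Delta_+) \le n_+(\sigma) \le \operatorname{rank}(\sigma)$ from the single fact that $\sigma - \Delta = \rho \geq 0$. The paper cites Weyl's inequality to get $\sigma_i \geq \Delta_i + \lambda_{\min}(\rho) \geq \Delta_i$ for the ordered eigenvalues and then counts positive eigenvalues. You instead prove the needed special case directly with the subspace-intersection (Courant--Fischer) count $\dim V_+ + \dim W \le N$, which is self-contained and avoids the paper's somewhat loosely worded index-counting step.
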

\textbf{Proof}: Suppose $A \in M_{n}\lf \mathbb{C} \ri$ and $B \in M_{n}\lf \mathbb{C} \ri$ are two Hermitian matrices with eigenvalues ${a_i}$ and ${b_i}$. Furthermore, suppose that the eigenvalues are sorted in decreasing order, i.e, 
\begin{equation}
    a_1 \geq \dots \geq a_n, \quad b_1 \geq \dots \geq b_n.
\end{equation}
Let $\lambda_i$ denote the eigenvalues of $A + B$ that are sorted in descending order. Then, by Weyl's inequalities \cite{bhatia2013matrix}, 
\begin{align}
    \label{Apeq:Weyl_a}
    \lambda_{j} &\leq a_i + b_{j - i +1}   ~~~\text{ for }  i  \leq j, \\ 
    \label{Apeq:Weyl_b}
    \lambda_{j} &\geq a_i + b_{j - i + n}  ~~~\text{ for } i \geq j.
\end{align}
 Let $\{\rho_j\}, \{\Delta_j\}$, and $\{\sigma_j\}$ denote the eigenvalues of $\rho, \Delta$, and $\sigma$ sorted in descending order. Let $n$ be the dimension of above matrices. Decomposing $\sigma$ into its positive and negative eigenspectrum, $\sigma = \sigma_+ - \sigma_-$, 
\begin{equation}
    \operatorname{rank}(\sigma) = \operatorname{rank}(\sigma_+) + \operatorname{rank}(\sigma_-) \geq \operatorname{rank}(\sigma_+). 
\end{equation}
Moreover, since the eigenvalues $\{\sigma_j\}$ are sorted in descending order, 
\begin{equation}
    \sigma_j > 0, \text{ for } 1 \leq j \leq \operatorname{rank}(\sigma_+) \quad \text{and} \quad \sigma_j \leq 0 , \text{ for } \operatorname{rank}(\sigma_+) < j \leq n .
\end{equation}
Applying Eq.~\eqref{Apeq:Weyl_b} to $\sigma = \rho + \Delta$, when $j = i$, we get 
\begin{equation}
    \sigma_i \geq \Delta_i + \rho_n ~~~ \text{for all } i. 
\end{equation}
Since $\rho$ is a density matrix, it is positive $\rho \geq 0$. Thus,  $\rho_n \geq 0$ implies that
\begin{equation}
    \sigma_i \geq \Delta_i~~~ \text{for all } i.
\end{equation}
Therefore, when $\sigma_{i} \leq 0 $, we are guaranteed that $\Delta_i \leq 0$. This implies that
\begin{equation}
    i \geq \operatorname{rank}(\sigma_+) \implies i \geq \operatorname{rank}(\Delta_+).
\end{equation}
Thus, 
\begin{equation}
    \operatorname{rank}(\sigma) \geq  \operatorname{rank}(\sigma_+)  \geq \operatorname{rank}(\Delta_+).
\end{equation}

For the sake of clarity, we restate Theorem~\ref{thm:Bound_on_TD} in the main text here.
\begin{thm*}[\textbf{\ref{thm:Bound_on_TD}}]
    Suppose $\rho$ is a density matrix and $\sigma$ a non-zero hermitian matrix. Suppose $\operatorname{Tr}(\sigma) \leq 1$ and there exists an $\epsilon' \geq 0$ such that 
    \begin{align}
        &|\operatorname{Tr}(\rho - \sigma) | \leq \epsilon',  \\
        & \|\rho -\sigma \|_{\text{HS}} \leq \epsilon'.
    \end{align}
    Then, we can show that
    \begin{equation}
        \label{Apeq:Bound_eq}
        \|\rho -\sigma \|_{\text{TD}} \leq \left(\sqrt{\operatorname{rank}(\sigma)} + 1 \right)\epsilon'.
    \end{equation}
\end{thm*}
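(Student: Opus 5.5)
Set $\Delta = \sigma - \rho$ and split it as $\Delta = \Delta_+ - \Delta_-$, as in Lemma~\ref{Aplem:TD_bd_lemma}. Whatever normalization the paper uses for $\|\cdot\|_{\text{TD}}$, it is at most $\|\Delta\|_1 = \operatorname{Tr}\Delta_+ + \operatorname{Tr}\Delta_-$, so it suffices to bound this trace norm. I eliminate the negative part using the trace condition:
\begin{equation}
\operatorname{Tr}\Delta_- = \operatorname{Tr}\Delta_+ - \operatorname{Tr}\Delta ,
\end{equation}
which gives
\begin{equation}
\|\Delta\|_1 = 2\operatorname{Tr}\Delta_+ - \operatorname{Tr}(\sigma-\rho) \leq 2\operatorname{Tr}\Delta_+ + |\operatorname{Tr}(\rho-\sigma)| \leq 2\operatorname{Tr}\Delta_+ + \epsilon' .
\end{equation}
Everything now rests on controlling $\operatorname{Tr}\Delta_+$ by the Hilbert--Schmidt distance.

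For that step I use Cauchy--Schwarz on the support of $\Delta_+$ together with Lemma~\ref{Aplem:TD_bd_lemma}:
\begin{equation}
\operatorname{Tr}\Delta_+ \leq \sqrt{\operatorname{rank}(\Delta_+)}\,\|\Delta_+\|_{\text{HS}} \leq \sqrt{\operatorname{rank}(\sigma)}\,\|\Delta\|_{\text{HS}} \leq \sqrt{\operatorname{rank}(\sigma)}\,\epsilon' .
\end{equation}
The middle inequality also uses $\|\Delta_+\|_{\text{HS}}^2 \leq \|\Delta_+\|_{\text{HS}}^2 + \|\Delta_-\|_{\text{HS}}^2 = \|\Delta\|_{\text{HS}}^2$, which holds because $\Delta_+\Delta_- = 0$. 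Combining the two displays gives $\|\Delta\|_1 \leq (2\sqrt{\operatorname{rank}(\sigma)} + 1)\epsilon'$. With the convention $\|X\|_{\text{TD}} = \tfrac12\|X\|_1$ this becomes $(\sqrt{\operatorname{rank}(\sigma)} + \tfrac12)\epsilon'$, which is at most the claimed $(\sqrt{\operatorname{rank}(\sigma)}+1)\epsilon'$.

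The hypothesis $\operatorname{Tr}\sigma \leq 1$ tells us which way the trace error points: $\operatorname{Tr}\Delta \leq 0$, so $\operatorname{Tr}\Delta_- \geq \operatorname{Tr}\Delta_+$ and the negative part is the heavier one. This is exactly why the argument bounds $\Delta_+$ (controlled by the rank lemma) and recovers $\Delta_-$ from the trace. The bound stays valid without this sign information, since only $|\operatorname{Tr}(\rho-\sigma)|\le\epsilon'$ was used.

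The main obstacle is the rank step, $\operatorname{rank}(\Delta_+) \leq \operatorname{rank}(\sigma)$, which Lemma~\ref{Aplem:TD_bd_lemma} supplies. We cannot bound $\operatorname{Tr}\Delta_-$ the same way, because $\operatorname{rank}(\Delta_-)$ may be as large as $\operatorname{rank}(\rho)$. This is why the trace condition is needed to trade $\Delta_-$ for $\Delta_+$, and why it contributes the additive $\epsilon'$.

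If instead the convention is $\|X\|_{\text{TD}} = \|X\|_1$, the argument above yields $(2\sqrt{\operatorname{rank}(\sigma)}+1)\epsilon'$, which is weaker than the stated $(\sqrt{\operatorname{rank}(\sigma)}+1)\epsilon'$. I do not have a way to remove that factor of $2$. The key steps do not change: the rank lemma, Cauchy--Schwarz on the support of $\Delta_+$, and the trace identity to dispose of $\Delta_-$.
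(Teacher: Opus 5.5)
Your proof is correct and is essentially the paper's argument. The paper likewise combines the rank lemma with $\operatorname{Tr}\Delta_+^2 \geq (\operatorname{Tr}\Delta_+)^2/\operatorname{rank}(\sigma)$, obtained from the purity of $\Delta_+/\operatorname{Tr}\Delta_+$, which is your Cauchy--Schwarz step, and with the identity $\operatorname{Tr}\Delta_+ = \|\rho-\sigma\|_{\text{TD}} - \operatorname{Tr}(\rho-\sigma)/2$; it only splits this into three cases according to whether $\Delta_\pm$ vanish. Since the paper uses $\|X\|_{\text{TD}} = \tfrac12\|X\|_1$, your factor-of-2 concern does not arise, and your case-free version also shows that the hypothesis $\operatorname{Tr}\sigma \leq 1$ is unnecessary.
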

\textbf{Proof:} Define $\Delta = \sigma - \rho$, $\Delta_{+}$, and $\Delta_-$ as in Lemma~\ref{Aplem:TD_bd_lemma}. Both the trace distance and the Hilbert-Schmidt distance have a nice form in terms of  $\Delta_{+}$ and $\Delta_-$,
\begin{align}
    \label{Apeq:HS_defn}
     \|\rho -\sigma \|_{\text{HS}} &=  \|\Delta \|_{\text{HS}} = \sqrt{\operatorname{Tr} \lf \Delta_+^2 \ri + \operatorname{Tr} \lf \Delta_-^2 \ri}, \\
     \label{Apeq:TD_defn}
     \|\rho -\sigma \|_{\text{TD}} &=  \|\Delta \|_{\text{TD}} = \frac{1}{2} \left[\operatorname{Tr} \lf \Delta_+ \ri + \operatorname{Tr} \lf \Delta_- \ri \right].
\end{align}
We will prove this theorem by considering three different cases:
\begin{enumerate}
    \item $\Delta_- = 0$,
    \item  $\Delta_- \neq 0$ and $\Delta_+ = 0$,
    \item  $\Delta_- \neq 0$ and $\Delta_+ \neq 0$.
\end{enumerate}
\par\textbf{Case 1:} We start by noting that $\Delta_- = 0$ implies that $\rho = \sigma$. To see this, consider the $\operatorname{Tr}(\sigma)$ under the assumption that $\Delta_- = 0$, 
\begin{equation}
    \operatorname{Tr}(\sigma) = \operatorname{Tr}(\rho) + \operatorname{Tr}(\Delta) = 1 + \operatorname{Tr}(\Delta_+).
\end{equation}
The condition $\operatorname{Tr(\sigma)} \leq 1$ and the above equation can only be satisfied if $\operatorname{Tr}(\Delta_+) = 0$. Since by construction $\Delta_+ \geq 0$, this implies that $\Delta_+ = 0$ and hence $\rho = \sigma$. This implies that Eq.\eqref{Apeq:Bound_eq} is satisfied in this case.  

\textbf{Case 2:} When $\Delta_+ = 0$, we have 
\begin{equation}
    \rho - \sigma =  \Delta_-.
\end{equation}
From the assumptions in the theorem, 
\begin{equation}
    \operatorname{Tr}(\Delta_-)  =  \operatorname{Tr}(\rho - \sigma)= |\operatorname{Tr}(\sigma - \rho)|  \leq \epsilon'.
\end{equation}
Thus, the trace distance can be written as 
\begin{equation}
    2\|\rho -\sigma \|_{\text{TD}} = \operatorname{Tr}(\Delta_+)  + \operatorname{Tr}(\Delta_-)  = \operatorname{Tr}(\Delta_-)\leq \epsilon '\leq  \lf \sqrt{\operatorname{rank}(\sigma)}  + 1\ri  \epsilon'.
\end{equation} 

\textbf{Case 3:} Here both $\Delta_+$ and $\Delta_-$ are nonzero.  Using the same proof technique as in Ref \cite{coles_cerezo_bound_2019}, define the operators 
\begin{equation}
    \label{Apeq:tau_def}
    \tau_+ = \frac{\Delta_+}{\operatorname{Tr}\lf \Delta_+ \ri}, \quad \tau_- = \frac{\Delta_-}{\operatorname{Tr}\lf \Delta_- \ri}. 
\end{equation}
Note that $\tau_\pm$ are valid density matrices. The purity of a density matrix is lower bounded by the inverse of its rank. Using the Lemma~\ref{Aplem:TD_bd_lemma}, 
\begin{equation}
     \label{Apeq:another_eq_2_label}
    \operatorname{Tr} \lf \tau_+^2 \ri \geq \frac{1}{\operatorname{rank}(\Delta_+)} \geq \frac{1}{\operatorname{rank}(\sigma)}.
\end{equation}
Thus, using the definition of $\tau_+$ in Eq.~\eqref{Apeq:tau_def} and Eq.~\eqref{Apeq:another_eq_2_label},
\begin{equation}
    \operatorname{Tr}(\Delta_+)^2 \geq \frac{\operatorname{Tr} \lf \Delta_+ \ri^2}{\operatorname{rank}(\sigma)}
\end{equation}
Similarly, we also get
\begin{equation}
    \operatorname{Tr} \lf \Delta_-^2 \ri \geq \frac{\operatorname{Tr}(\Delta_-)^2}{\operatorname{rank}(\Delta_-)}. 
\end{equation}
Adding the above two equations, we get 
\begin{equation}
    \label{Apeq:another_eqn}
    \operatorname{Tr} \lf \Delta_+^2 \ri + \operatorname{Tr} \lf \Delta_-^2 \ri \geq \frac{\operatorname{Tr} \lf \Delta_+ \ri^2}{\operatorname{rank}(\sigma)} + \frac{\operatorname{Tr} \lf \Delta_- \ri^2}{\operatorname{rank}(\Delta_-)}
\end{equation}
Define $\delta$ as
\begin{equation}
    \label{Apeq:Delta_defn}
    \delta = \frac{\operatorname{Tr} \lf \rho - \sigma \ri}{2} = \frac{\operatorname{Tr}(\Delta_- - \Delta_+)}{2}
\end{equation}
Combining Eqs.~\eqref{Apeq:TD_defn} and \eqref{Apeq:Delta_defn}  
\begin{equation}
   \label{Apeq:anotheranother_eqn}
  \operatorname{Tr}(\Delta_-) =  \|\rho -\sigma \|_{\text{TD}} + \delta , \quad \operatorname{Tr}(\Delta_+) =  \|\rho -\sigma \|_{\text{TD}}- \delta .
\end{equation}
Using Eqs.~\eqref{Apeq:HS_defn}, \eqref{Apeq:another_eqn} and \eqref{Apeq:anotheranother_eqn}, we get, 
\begin{equation}
     \|\rho -\sigma \|_{\text{HS}}^2 \geq \frac{( \|\rho -\sigma \|_{\text{TD}} - \delta)^2}{\operatorname{rank}(\sigma)} + \frac{( \|\rho -\sigma \|_{\text{TD}} + \delta)^2}{\operatorname{rank}(\Delta_-^2)} \geq \frac{( \|\rho -\sigma \|_{\text{TD}} - \delta)^2}{\operatorname{rank}(\sigma)}.
\end{equation}
Thus, 
\begin{equation}
     \|\rho -\sigma \|_{\text{TD}}  \leq \sqrt{\text{rank}(\sigma)  \|\rho -\sigma \|_{\text{HS}}^2} + \delta
\end{equation}
Finally, using the assumptions $ \|\rho -\sigma \|_{\text{HS}}\leq \epsilon'$ and $\delta \leq \epsilon'$, we have 
\begin{equation}
     \|\rho -\sigma \|_{\text{TD}} \leq \lf \sqrt{\operatorname{rank}(\sigma)}  + 1\ri \epsilon'
\end{equation}

\section{Evaluating the coefficients of operator strings under noisy IQP circuit}
\label{Ap:Low_Hamming_Weight_Strings}
In this section, we present an explicit algorithm for computing the evolution of the coefficient $\alpha_i$ of an operator indexed by $i$ in the HW basis.  As seen in Appendix~\ref{ap:AD_noise_structure}, amplitude damping decays the coefficients of HW basis operators according to their Hamming weight and refeeds coefficients from $\ketbra{1}{1}$ to $\ketbra{0}{0}$. Equivalently, the channel damps the off-diagonal elements ($\ketbra{0}{1}, \ketbra{1}{0}$) while mixing the diagonal basis elements ($\ketbra{0}{0},\ketbra{1}{1}$). In contrast, the diagonal gates act non-trivially only on the off-diagonal elements.  

Motivated by this structure, rather than directly working in the HW basis, we introduce a new over-complete operator basis that groups the diagonal basis elements together. Define  
\begin{equation}
    I(a) = \ketbra{0}{0} + a \ketbra{1}{1}.
\end{equation}
Let $\sigma_{\pm}$ be defined as the usual raising and lowering operators on a single qubit,
\begin{equation}
    \sigma_+ = \ketbra{1}{0},\qquad \sigma_{-} = \ketbra{0}{1}.
\end{equation}
Then, 
\begin{equation}
    \mathcal{I} := \{I(p), \sigma_+, \sigma_-| p \in \mathbb{C}, 0 \leq |p| \leq 1\}
\end{equation}
forms an operator frame (over-complete operator basis). This representation is particularly convenient for noisy IQP circuits, as it allows us to exactly keep track of the evolution of the elements of the frame.  To see that, we start by expressing the initial state in this frame. Using the expansion $\ketbra{+}{+} = 1/2(I(1) + \sigma_+ + \sigma_-)$, we write the initial state as a sum of operator strings $s \in \mathcal{I}^{\otimes n}$, 
\begin{equation}
\label{Apeq:Initial_state_expansion}
    \ketbra{+}{+}^{\otimes n} = \frac{1}{2^n} \lf I(1) + \sigma_+ + \sigma_-\ri^{\otimes n} = \frac{1}{2^n} \sum_{r = 0}^{n} ~\sum_{\pi \in \mathbb{P}^r_n}~ \sum_{s_1, \dots, s_r = +, -}   ~ \sigma_{s_1,\pi(1)}  \cdots \sigma_{s_r,\pi(r)} I(1)_{\pi(r+1)}  \cdots  I(1)_{\pi(n)}.
\end{equation}
In Eq.~\eqref{Apeq:Initial_state_expansion}, the first summation fixes the number of off-diagonal factors ($\sigma_\pm$) in a string. The second summation accounts  for all possible unique permutations for a $n$-qubit string with $r$ non-diagonal terms. The third summation fixes the choice of the non-diagonal terms, as each non-diagonal term can be either $\sigma_+$ or $\sigma_-$.    

Thus, the initial state is represented as an exponentially large sum of strings $s \in \mathcal{I}^{\otimes n}$. To each string in $\mathcal{I}^{\otimes n}$, we associate an overall coefficient $\beta$ and the vector $\vec{a} = (a_1, \dots,a_{n - r})$ denotes the parameters $a_i$ of the diagonal factors $I(a_i)$ in the string. Thus, each string is fully characterized by at most $n+1$ complex parameters. As we will see below, under the action of the noisy IQP circuit on any string $s$, only the parameters $\beta$ and $\vec{a}$ are updated.  We will use the notation $\beta^{(r)}$ and $\vec{a}^{(r)} = (a_1^{(r)}, \dots, a_m^{(r)})$ to denote the coefficients after $r$ layers of the noisy IQP circuit $\mathcal{C}$. For all strings in Eq. \eqref{Apeq:Initial_state_expansion}, the initial coefficients satisfy 
\begin{equation}
    \beta^{(0)} = \frac{1}{2^n}, \qquad \vec{a}^{(0)} = (1, \dots, 1). 
\end{equation}
\subsection{Evaluating single qubit gates and ``commuting" them through the circuit}
Recall that the diagonal gates in our gate set are drawn from the gate set $\mathcal{G}_2 = \{e^{i \theta Z}, C(\phi) \}$, where $C(\phi)$ is a controlled phase gate with arbitrary phase, 
\begin{equation}
    \label{Apeq:Controlled_Phase}
    C(\phi) = \ketbra{00}{00} + \ketbra{01}{01} + \ketbra{10}{10} + e^{i \phi} \ketbra{11}{11}.
\end{equation}
Single-qubit diagonal gates act trivially on diagonal operators and only modify the phase of off-diagonal components. Indeed, 
\begin{equation}
     e^{i \theta Z} \lf \sigma_{\pm }\ri  e^{-i \theta Z} = e^{\mp i 2 \theta} \sigma_{\pm}, \qquad  e^{i \theta Z} \lf I(a) \ri  e^{-i \theta Z} = I(a). 
 \end{equation}
 Thus, for any single-qubit diagonal gate on qubit $l$ acting on a string $s \in \mathcal{I}^{\otimes n}$,we have
 \begin{equation}
      e^{i \theta Z_l} \lf \beta s \ri  e^{-i \theta Z_l} = \beta\exp \Big[ i 2 \theta f(s) \Big] s, 
 \end{equation}
 where, 
 \begin{equation}
     f(s) = \left\{ \begin{matrix}
          1,~&~\text{if}~s_l =\sigma_-  , \\
          -1,~&~\text{if}~s_l =\sigma_+  , \\
          0,~&~~~\text{if}~s_l = I(a).
     \end{matrix} \right.
 \end{equation}
Hence, each single-qubit diagonal gate updates only the global coefficient $\beta$, which acquires a phase. Thus, the action of single qubit-gates are very efficient to keep track of with only one operation per gate required to update $\beta$. 

Since single-qubit $Z$-rotations commute with controlled-phase gates, their relative order is irrelevant. We now show that they may also be commuted through amplitude damping channels, allowing all single-qubit diagonal gates to be pushed to the end of the noisy circuit. This simplifies both the analysis and the implementation of the algorithm. 
 
 For any  channel $\mathcal{E}$ with Kraus operators $\{K_m | m = 1, \dots, M\}$ and $U$ a unitary, define the channel $\mathcal{F}$ with the Kraus operators $\{F_m = U^\dagger K_m U| m = 1, \dots, M\}$. Then, 
\begin{equation}
\label{Apeq:commute_trick}
    \mathcal{E} \circ \mathcal{U} (\rho) = \sum_{m=1}^M  K_m U \rho U^\dagger K_m^\dagger  = \sum_{m = 1}^M U U^\dagger K_m  U \rho U^\dagger  K_m^\dagger U U^\dagger  = \sum_{m = 1}^M U F_m \rho F_m^\dagger  U^\dagger = \mathcal{U} \circ \mathcal{F}  (\rho).
\end{equation}
 Eq. \eqref{Apeq:commute_trick} implies that we can replace the channel $\mathcal{E}$ that acts after $U$ by the channel $\mathcal{F}$ that acts before $U$. For amplitude damping, the Kraus operators are given by  
\begin{equation}
    K_ 0 = \begin{bmatrix}
        1 & 0 \\ 0 & \sqrt{1 - p} 
    \end{bmatrix} , \quad
    K_1 = \begin{bmatrix}
        0 & \sqrt{p} \\ 0 & 0 
    \end{bmatrix} = \sqrt{p} \sigma_-.
\end{equation}
 For a single-qubit diagonal gate,  
\begin{equation}
    R_z(\theta) = \exp \lf i \theta Z \ri = \cos(\theta) \mathbb{I} + i \sin(\theta) Z,
\end{equation}
the commuted channel has the Kraus operators, 
\begin{equation}
     \tilde{K}_0 = R_z^\dagger(\theta) K_0 R_z(\theta) = K_0, \quad \tilde{K}_1 = R_z^\dagger(\theta) K_1 R_z(\theta) = e^{-2i \theta}K_1.
\end{equation}
Note that, 
\begin{equation}
    \begin{bmatrix}
        \tilde{K}_0\\\tilde{K}_1
    \end{bmatrix} = \begin{bmatrix}
        1 & 0 \\ 0 & e^{-i2\theta} 
    \end{bmatrix} \begin{bmatrix}
            K_0 \\ K_1
        \end{bmatrix} .
\end{equation}
Since $\{\tilde{K}_0,\tilde{K}_1\}$ is unitarily equivalent to $\{K_0, K_1\}$, it defines the same amplitude-damping channel. Hence,  single-qubit $Z$ rotations can be freely interchanged with the amplitude-damping channel. 

\subsection{Evaluating the controlled phase gates}
Consider the circuit $\mathcal{C}$ written as
\begin{equation}
    \mathcal{C} = \mathcal{E} D_d  \cdots ~D_2 \mathcal{E} D_1. 
\end{equation}
Using the commuting trick we have established in the above subsection, we can re-write the circuit as follows, 
\begin{equation}
    \mathcal{C} = \textcolor{blue}{\mathcal{C}_2} \textcolor{orange}{\mathcal{C}_1}  = \textcolor{blue}{R_d ~  \cdots ~R_1~ } \textcolor{orange}{\mathcal{E} ~C_d   \cdots ~ \mathcal{E} ~C_1}, 
\end{equation}
where $C_r$ is the set of two-qubit controlled-phase gates acting on layer $r$ and $R_r$ is the set of single-qubit gates acting on layer $r$.  From the preceding discussion, the circuit component $\mathcal{C}_2$ can be evaluated efficiently for all strings in $\mathcal{I}^{\otimes n}$. We  therefore focus on evaluating $\mathcal{C}_1$. Using the expression of the controlled-phase gate in the computational basis (Eq.~\eqref{Apeq:Controlled_Phase}),  the following identities hold, 
\begin{align}
    \label{Apeq:identities_line_a}
    &C(\theta) I(a) \otimes  I(b) C^\dagger (\theta) = I(a) \otimes I(b), \\
    \label{Apeq:identities_line_b}
    &C(\theta) \sigma_{\pm} \otimes \sigma_{\pm} C^\dagger (\theta) =  e^{i (\theta/2)  \lf n_+  - n_-\ri } \sigma_{\pm} \otimes \sigma_{\pm}, \qquad  n_\pm = \# \sigma_{\pm}, \\ 
\label{Apeq:identities_line_c}
    &C(\theta) \sigma_{\pm} \otimes I(a)  C^\dagger (\theta)  =  \sigma_\pm \otimes I\lf ae^{ \pm  i \theta}\ri.
\end{align}
Eq.~\eqref{Apeq:identities_line_a} follows from the fact that $I(a)$ is diagonal. Eq.~\eqref{Apeq:identities_line_b} is obtained by expanding $\sigma_\pm$ in the computational basis and noting that $C(\theta)$ applies a non-trivial phase only on $\ket{11}$.  Eq.~\eqref{Apeq:identities_line_c}  follows directly from 
\begin{equation}
    \hspace{-0.3cm} C(\theta) \sigma_{+} \otimes I(a)  C^\dagger (\theta) = C(\theta) \lf \ketbra{10}{00} + a\ketbra{11}{01} \ri C^\dagger (\theta) = \ketbra{10}{00} + a e^{i \theta}\ketbra{11}{01} = \sigma_+ \otimes I\lf ae^{i \theta}\ri. 
\end{equation}
Similarly, the action of local amplitude damping channel on the elements of the frame is  
\begin{align}
    \label{Apeq:identities_line_d}
    & \mathcal{E}_p \lf \sigma_{\pm} \ri = \sqrt{1 - p} \sigma_{\pm}, \\
    \label{Apeq:identities_line_e}
    & \mathcal{E}_p \lf I(a) \ri = \mathcal{E}_p \lf \ketbra{0}{0} + a \ketbra{1}{1} \ri = (1 + a p) I\lf \frac{a(1 - p)}{1 + a p} \ri.
\end{align}
Thus, as noted in the main text, both the controlled-phase unitary and the amplitude-damping channel maps elements of   $\mathcal{I}^{\otimes n}$ to themselves. Moreover, they preserve the structural form of each string: qubits with diagonal components remain diagonal, and those with off-diagonal components remain off-diagonal. Consequently, the circuit acts solely by updating the coefficients  $\beta$ and $\vec{a}$. The task thus reduces to finding an efficient classical algorithm for updating these coefficients.

Let $s \in \mathcal{I}^{\otimes n}$ be an arbitrary $n$-qubit string with an associated coefficient $\beta$. Define the following sets associated with the string $s$, , 
\begin{equation}
    S_{+}^{(s)} :=\left \{ m \in \{1, \dots , n\} | s^{(m)} = \sigma_+ \right\}, \quad S_{-}^{(s)} :=\left \{ m \in \{1, \dots , n\} | s^{(m)} = \sigma_- \right\}, \quad S_{0}^{(s)} := \overline{S_{+}^{(s)} \cup  S_{-}^{(s)}}.
\end{equation}
These sets denote the indices of qubits with a $\sigma_+, \sigma_-$ or a diagonal component in string $s$, respectively. Given these sets, the operator can be explicitly written as
\begin{equation}
    s = \lf \bigotimes_{m \in S_+^{(s)}}  \sigma_{+,m}\ri \lf \bigotimes_{n \in S_-^{(s)}}  \sigma_{-,n}\ri \lf \bigotimes_{t \in S_0^{(s)}}  I_{t}(a_t)\ri, 
\end{equation}
where the notation $o_{n}$ denotes a single qubit operator $o \in \{\sigma_+, \sigma_-, I(a)\}$ acting on the $n^{th}$ qubit. Our algorithm evolves the strings layer by layer. At layer $r$, let $S^{(r)}_u$ denote the set of indices, where a non-trivial control unitary acts on  that layer, i.e, 
\begin{equation}
    S^{(r)}_u = \Big\{(x,y) \Big| 0 \leq x < y \leq n, C(\theta_{xy}) \text{ acts between qubits } (x,y)  \text{ on layer }r \Big\}.
\end{equation}
 Then the unitary gate layer can be written as, 
\begin{equation}
    C_r  = \prod_{(x,y) \in S^{(r)}_u} C \lf \theta_{xy} \ri.
\end{equation}
From Eqs.~\eqref{Apeq:identities_line_a}, \eqref{Apeq:identities_line_b}, and \eqref{Apeq:identities_line_c}, gates with non-trivial effects are the control unitaries acting on two $\sigma_+$'s , or two  $\sigma_-$'s, or a single $\sigma_\pm$ and a diagonal term. We thus concern ourselves with the following subsets of $S_u^{(r)}$, 
\begin{align}
    \label{Apeq:subsets_1}
    S_u^{(r)++} = S_u^{(r)} \cap \lf S_+^{(s)} \times S_+^{(s)}\ri&, \quad S_u^{(r)--} = S_u^{(r)} \cap \lf S_-^{(s)} \times S_-^{(s)} \ri, \\
    \label{Apeq:subsets_2}
    \quad S_u^{(r) + 0} = S_u^{(r)} \cap \lf S_+^{(s)} \times S_0^{(s)} + S_0^{(s)} \times S_+^{(s)} \ri&, \quad S_u^{(r) - 0} = S_u^{(r)} \cap \lf S_-^{(s)} \times S_0^{(s)} + S_0^{(s)} \times S_-^{(s)} \ri. 
\end{align}
Then, using the relations in Eqs.~\eqref{Apeq:identities_line_a}, \eqref{Apeq:identities_line_b} and \eqref{Apeq:identities_line_c}, we get  
\begin{align}
    C_r \beta s C_r^\dagger = \beta \exp \lf i\sum_{  (x,y) \in S_u^{(r)++}} \theta_{xy} - i\sum_{  (x,y) \in S_u^{(r)--}} \theta_{xy} \ri& \lf \bigotimes_{m \in S_0^{(s)}}  \sigma_+^{(m)}\ri \lf \bigotimes_{n \in S_0^{(s)}}  \sigma_-^{(n)}\ri \notag\\ 
    &\times \lf \bigotimes_{t \in S_0^{(s)}}  I^{(t)}\lf a_t \exp \lf i \sum_{  (x,t) \in S_u^{(r)+0}} \theta_{xt} - i \sum_{  (x,t) \in S_u^{(r)-0} } \theta_{xt} \ri  \ri\ri. 
\end{align}
And after applying the layer of noise, we have 
\begin{align}
    \label{eq:Final_expression}
    \mathcal{E}_p^{\otimes n} \lf C_r \beta s C_r^\dagger \ri = \beta (1 - p)^{\frac{\left|S_+^{(O)}\right| + \left|S_-^{(O)}\right|}{2}}&\lf \prod_{t \in S_0^{(s)}}  (1 + \tilde{a}_t p) \ri \exp \lf i\sum_{  (x,y) \in S_u^{(r)++}} \theta_{xy} - i\sum_{  (x,y) \in S_u^{(r)--}} \theta_{xy} \ri \notag\\ 
    &\times \lf \bigotimes_{m \in S_0^{(s)}}  \sigma_+^{(m)}\ri \lf \bigotimes_{n \in S_0^{(s)}}  \sigma_-^{(n)}\ri  \lf \bigotimes_{t \in S_0^{(s)}}  I^{(t)}\lf\frac{ \tilde{a}_t (1 - p)  }{1 + \tilde{a}_t p}\ri\ri. 
\end{align}
where
\begin{equation}
 \label{Apeq:a_tilde}
    \tilde{a}_t := a_t \exp \lf i \sum_{  (x,t) \in S_u^{(r)+0} } \theta_{xt} - i \sum_{  (x,t) \in S_u^{(r)+0} } \theta_{xt} \ri .
\end{equation}

\subsection{Scaling analysis}
Summarizing the above arguments, after a layer of controlled-phase gates $C^{(r)}$ and amplitude-damping noise, the coefficients $\beta$ and $\vec{a}$ change as 
\begin{equation}
    \hspace{-0.5cm}\beta \to \beta (1 - p)^{\left(\left|S_+^{(O)}\right| + \left|S_-^{(O)}\right|\right)/2}\left[ \prod_{t \in S_0^{(s)}}  \left (1 + \tilde{a}_t p  \right) \right ] \exp \lf i\sum_{  (x,y) \in S_u^{(r)++}} \theta_{xy} - i\sum_{  (x,y) \in S_u^{(r)--}} \theta_{xy} \ri, \quad     a_t \to \frac{ \tilde{a}_t (1 - p) }{1 + p \tilde{a}_t }, 
\end{equation}
where the symbol $\tilde{a}_t$ is defined in Eq.~\eqref{Apeq:a_tilde}.  These rules allow any string in $\mathcal{I}^{\otimes n}$ to be propagated through $\mathcal{C}_1$ in polynomial time, as shown below. The action of $\mathcal{C}_2$ can likewise be evaluated in polynomial time, since it only contains single-qubit unitaries. In Algorithm~\ref{alg:Evolution_of_a_layer} we provide an explicit procedure for updating the coefficients $\beta$ and $\vec{a}$ for an arbitrary string $s$ as it evolves under the circuit $\mathcal{C}$. We now analyze the scaling of Algorithm~\ref{alg:Evolution_of_a_layer}.  

We begin by remarking that through a relabeling of qubit indices, every operator string $s$ can be brought to the canonical form 
\begin{equation}
    \label{ApEq:Nice_form}
    s = \lf \bigotimes_{i = 1}^{k_+} \sigma_+ \ri \lf \bigotimes_{i = k_+ + 1}^{k_-} \sigma_- \ri \lf \bigotimes_{t = k_- + 1}^{n} I(a_t) \ri.
\end{equation}
For $n$-qubits, it takes a worst-case time scaling of $O(n)$ to calculate the required permutation map to bring any string $s$ to the form in Eq.~\eqref{ApEq:Nice_form}. In practice, we permute each string into the form in Eq.~\eqref{ApEq:Nice_form}, propagate it through the circuit, and then permute it back at the end. Thus, the qubit indices in each layer of unitary gate $S_u^{(r)}$ must also be permuted accordingly. Since each layer contains at most $n$ gates, this re-labeling can be done in $O(nd)$ for the entire circuit $\mathcal{C}$. These permutation steps are performed outside the core update routine. Hence, without loss of generality, we may assume that all input strings are already in the canonical form of Eq.~\eqref{ApEq:Nice_form} and that the gate indices have been permuted consistently. This can also be done without permuting by keeping track of a list that contains ${+,-,0}$ for each qubit index specifying the type of element at the index. Here you trade for time-complexity for space complexity. 

\begin{figure}[htbp!]
\centering
  \hrule 
    \vspace{-5pt}
   \refstepcounter{algocount}          
  \renewcommand{\figurename}{Algorithm} 
  \renewcommand{\thefigure}{\arabic{algocount}} 
  \renewcommand{\figurename}{Algorithm}
  \captionsetup{labelfont=bf, justification=centering, singlelinecheck=false}
  \caption{Algorithm for calculating the evolution of operator $s$ through circuit $\mathcal{C}$}
  \vspace{-1pt}
  \hrule
\vspace{5pt}
\begin{algorithmic}[1]
\STATE \textbf{Start}
\STATE Input string $s$ as the sets $S_+^{(s)}, S_-^{(s)}, S_0^{(s)}$.
\STATE Input overall coefficient $\beta = 1/2^n$ and vector $\vec{a} = (1,\cdots,1)$.
\STATE Input the total number of layers $d$. 
\STATE Input the controlled unitary operations on layer $r$, a set of pairs of indices $S^{(r)}_u$, and a list $\Theta^{(r)}$ containing the phases.
\STATE Input the single qubit unitary operations on layer $r$ a set of indices $R^{(r)}_u$, and the list $\Xi^{(r)}$ containing the phases.
\STATE Permute the qubit indices of the string $s$ and the gate layers $S^{(r)}_u$ to bring $s$ into the canonical form of Eq.~\eqref{ApEq:Nice_form}
\STATE Input noise strength $p$. 
\STATE \COMMENT {\textit{Evaluating Circuit} $\mathcal{C}_1$}
\FOR{$r$ \textbf{in} $1, \dots, d$}
\STATE \COMMENT {\textit{Evaluating the unitary layer $C_r$}}
\FOR{$(x,y)$ \textbf{in} $S_u^{(r)}$}
    \IF{$(x,y) \in S_+^{(s)} \times S_{+}^{(s)}$}
        \STATE $\beta \gets \beta \exp \lf i  \Theta^{(r)}_{xy}\ri$ 
        \ELSIF{$(x,y) \in S_-^{(s)} \times S_{-}^{(s)}$}
        \STATE $\beta \gets \beta \exp\lf-i \Theta^{(r)}_{xy}\ri$
    \ELSIF{$(x,y) \in S_{+}^{(s)} \times S_{0}^{(s)} \cup S_0^{(s)} \times S_{+}^{(s)} $}
        \IF{$x \in S_0^{(s)}$} 
            \STATE $x' \gets x$
        \ELSE 
            \STATE $x' \gets y$
        \ENDIF
        \STATE $a_{x'} \gets a_{x'} \exp \left( i \Theta^{(r)}_{xy} \right)$
    \ELSIF{$(x,y) \in S_{-}^{(s)} \times S_{0}^{(s)} \cup S_0^{(s)} \times S_{-}^{(s)} $}
        \IF{$x \in S_0^{(s)}$} 
            \STATE $x'\gets x$
        \ELSE 
            \STATE $x' \gets y$
        \ENDIF
        \STATE $a_{x'} \gets a_{x'} \exp \left(- i \Theta^{(r)}_{xy} \right)$
    \ENDIF
\ENDFOR
\STATE \COMMENT{\textit{Evaluating the layer of noise}}
\STATE $\lambda = 1$
\FOR{$a_t$ \textbf{in} $\vec{a}$ }
    \STATE $a_t \gets a_t(1 - p)/(1 + p a_t)$
    \STATE $\lambda  \to \lambda (1 + p a_t) $
\ENDFOR
\STATE $\beta \gets \beta\lambda(1 - p)^{\frac{1}{2} \left( |S_+^{(s)}| + |S_{-}^{(s)}| \right)} $
\ENDFOR

\COMMENT {\textit{Evaluating Circuit} $\mathcal{C}_2$ }
\FOR{$r$ \textbf{in} $1, \dots, d$}
\FOR{$x$ \textbf{in} $R_u^{(r)}$}
    \IF{$x \in S_+^{(s)} $}
        \STATE $\beta \gets \beta \exp \lf i  \Xi^{(r)}_{x}\ri$ 
    \ELSIF{$x \in S_-^{(s)}$}
        \STATE $\beta \gets \beta \exp\lf-i \Xi^{(r)}_{x}\ri$
    \ENDIF
\ENDFOR
\ENDFOR
\end{algorithmic}
\vspace{5pt}
  \hrule 
\label{alg:Evolution_of_a_layer}
\end{figure}

For a given string $s$, the core update routine of the algorithm proceeds by first evaluating $\mathcal{C}_1$ layer by layer, evaluating each gate serially. Because $s$ has been permuted to the canonical form of Eq.~\eqref{ApEq:Nice_form}, the cost of deciding the type of update scales as $O(1)$. This is because one only needs to compare the qubit index $x$ against two numbers $k_+$ and $k_-$  appearing in Eq.~\eqref{ApEq:Nice_form} to identify the operator at qubit $r$. 

The costliest step in the algorithm is updating the coefficients. Since the absolute values of the coefficients $\beta$ that we track is lower bounded by $O(2^{-n})$,  we require $O(n)$-bits of precision to store them. For a very weak upper bound, we require all our coefficients and phases to be represented with $O(n)$ bits of precision. The noise rate $p$ is constant and hence only requires a constant number of bits. Since cost of multiplying two numbers with $O(n)$ bits of precision is $O(n^2)$, the cost of updating the coefficients after each gate is $O(n^2)$. Since there are at most $n/2$ gates in a layer, the cost of simulating a single unitary layer is bounded by $O(n^3)$.  Using a similar reasoning, the cost of simulating the noise layer can also be bounded by $O(n^3)$. There are $d$ layers, so the combined cost of simulating $\mathcal{C}_1$ is bounded by  $O(n^3 d)$. A similar reasoning can be applied to circuit $\mathcal{C}_2$ as it updates $\beta$ with a multiplicative phase. Combining the cost permuting the input operator and gates, evaluation of circuit components $\mathcal{C}_1$ and $\mathcal{C}_2$,  the worst-case runtime of algorithm~\ref{alg:Evolution_of_a_layer} is $O(n^3d)$ for any given string $s$.

\subsection{Places for potential speedups}
Though we report that the worst-case scaling of the algorithm can be bounded by $O(n^3d)$, this bound is weak and can be strengthened in practice.  The coefficients $\vec{a}$  are initialized as $(1,\dots, 1)$ and decrease monotonically. Since we retain  HW strings with at most a $O(1)$ hamming weight, and since $\vec{a}$ coefficients only contribute to HW strings with $\ketbra{1}{1}$ terms, in practice, we believe it sufficient to store these coefficients to a constant factor precision $O(1)$. Similarly, for practical IQP circuits we would only implement phases that require $O(1)$ bits of precision. Hence, we only need to store $\beta$ with $O(n)$ precision. This immediately brings the cost of updating each coefficient down to $O(n)$ (multiplying a $O(n)$ bit-precision number with $O(1)$ bit-precision number). In fact, we can do better by working with  $\log(\beta)$ instead of $\beta$, noting that all update operations on $\beta$ are multiplication. This multiplication turns into addition and the number of bits of precision required to store this is only $O(\log(n))$. Thus for each gate, the cost of update can be brought down to $O(\log(n))$. Since there are at most $n/2$ gates in a layer, the cost of simulating a single unitary layer scales as $O(n \log n)$.  Since all coefficients that are updated in the noise layer use only $O(1)$ precision, each noise layer can be simulated in $O(n)$ time. There are $d$ layers, so the combined cost of simulating $\mathcal{C}_1$ scales as $O(n d\log n)$. The circuit $\mathcal{C}_2$ only updates $\beta$ with a multiplicative phase. There are atmost $nd$ single qubit gates in $\mathcal{C}_1$. Thus, the total phase to be multiplied can be calculated in $O(nd \log(n))$ time, where the additional $\log(n)$ factor comes from adding the total phase to $\log(\beta)$. Combining the permutation of gates and string, evaluation of circuit components $\mathcal{C}_1$ and $\mathcal{C}_2$,  the worst-case runtime of algorithm~\ref{alg:Evolution_of_a_layer} is now $O(n d\log n)$ for any string $s$.

\section{Bounding the trace of \texorpdfstring{$\sigma$}{sigma}}
\label{ap:Bounding_trace}
The trace of $\sigma$ is bounded fairly easily, by noting that,the IQP circuit gates leave the diagonal basis elements invariant.  
\begin{align}
   \label{ApEq:Trace_bound_line1}
    \operatorname{Tr} (\sigma) &= \frac{1}{2^n}\sum_{\substack{a \in \{0,1\}^n, \\ 2|a| \leq k}} \bra{a} \rho \ket{a},  \\
    \label{ApEq:Trace_bound_line2}
    &= 1 - \frac{1}{2^n}\sum_{\substack{a \in \{0,1\}^n, \\ 2|a| > k}} \bra{a}  \rho \ket{a}, \\
    \label{ApEq:Trace_bound_line3}
    &= 1 - \frac{1}{2^n}\sum_{r = (k+1)/2}^{n} {\binom{n}{r}} (2 - (1 - p)^d)^{n - r} (1 - p)^{dr}, \\
    \label{ApEq:Trace_bound_line4}
    &\geq  1 - \frac{\lf 2 - (1 - p)^d \ri^{n - (k + 1)/2}}{2^n} e^{n H\lf\frac{k+1}{2n} \ri} (1 - p)^{\frac{d(k+1)}{2}}. 
\end{align}
In the above chain of equations,  in line \eqref{ApEq:Trace_bound_line1}, we have used the fact that by definition, we have $\bra{a}\sigma\ket{a} = \bra{a}\rho\ket{a}$ is $2|a| \leq k$ and $\bra{a}\sigma\ket{a} = 0$ otherwise.  Line \eqref{ApEq:Trace_bound_line2}, is just inverting the sum and using the fact that the trace of $\rho$ must be one. In line \eqref{ApEq:Trace_bound_line3}, we have explicitly evaluated the trace, which is done below. Finally in line \eqref{ApEq:Trace_bound_line4}, we have used the Chernoff bound. This easy to see by noting that the second part in the Eq.~\eqref{ApEq:Trace_bound_line4} is similar to Eq.~\eqref{Eq:trunc_line7}. The only part that might be nontrivial is the explicit evaluation in \eqref{ApEq:Trace_bound_line3}. To see that, first note that for a single qubit, 
\begin{equation}
    \label{ApEq:expression_h5}
    \mathcal{E}_p^d (\ketbra{+}{+} ) = \frac{1}{2}\begin{bmatrix}
       2 - (1 - p)^d & (1 -p)^{d/2} \\ (1 - p)^{d/2} & (1 - p)^d. 
    \end{bmatrix}
\end{equation}
Thus, for any bit string $a$ with hamming weight $|a| \leq k/2$, we have, 
\begin{align}
\label{Apeq:trace_line1}
    \bra{a}\rho\ket{a} &=  \braket{a \left| \mathcal{C}\lf\ketbra{+}{+}^{\otimes n}  \ri  \right|a} \\
    \label{Apeq:trace_line2}
    &=\braket{a \left| \mathcal{E}_p^{d \otimes n}\lf\ketbra{+}{+}^{\otimes n}  \ri  \right|a} \\
    \label{Apeq:trace_line3}
    &= \prod_{i = 1}^{n} \braket{a_i \left| \mathcal{E}_p^{d }\lf\ketbra{+}{+}  \ri  \right|a_i} \\
    \label{Apeq:trace_line4}
    &= \lf 2 - (1 - p)^d \ri^{n - |a|} (1 - p)^{d |a|}
\end{align}
In the above chain of equations, in line \eqref{Apeq:trace_line1}, we have used the exact expression for $\rho$. In line \eqref{Apeq:trace_line2}, we have used the fact that diagonal gates leave the diagonal operators invariant and that the amplitude damping does not mix diagonal and non-diagonal terms. The notation $\mathcal{E}_p^{d \otimes n}$ is to denote the channel $\mathcal{E}_p^{d}$ applied on all $n$ qubits. In line \eqref{Apeq:trace_line3}, we have used that $\ket{a} = \bigotimes_{i  = 1}^n \ket{a_1}$. In line \eqref{Apeq:trace_line4}, we have used the expression in Eq.~\eqref{ApEq:expression_h5} and that $a$ contains $n - |a|$ bits of $0$ and $|a|$ bits of $|1|$. Finally, the combinatorial factor in Eq.~\eqref{ApEq:Trace_bound_line3} comes from the total number bit strings  with the same Hamming weight.

\section{Scaling of the cutoff Hamming weight \texorpdfstring{$k$}{k}} 
\label{ap:k_scaling}
In this section, we determine the maximum Hamming weight required in our truncation scheme for a fixed trace-distance error between 
the exact state $\rho$ and its truncated approximation $\sigma$. Requiring $k$  to remain $O(1)$ imposes a corresponding threshold on the circuit depth, which we also derive. We begin by recalling that the use of Chernoff bound in bounding the Hilbert-Schmidt error in Lemma~\ref{lem:Trunc_error} requires 
 \begin{equation}
     \label{Apeq:Chernoff}
     k+1 \geq  n(1 - p)^d .
 \end{equation}
 Taking the natural logarithm on both sides gives
 \begin{equation}
     d \geq \frac{\ln n  - \ln (k+1)}{\ln(1-p)^{-1}}, 
 \end{equation}
 from which  $k = O(1)$ implies that $d = \Omega(\log(n))$. We therefore parametrize the circuit depth as, 
 \begin{equation}
     \label{Apeq:Expression_d}
     d = \frac{\lambda}{\ln(1 -p)^{-1}} \ln(n)
 \end{equation}
Substituting Eq.~\eqref{Apeq:Expression_d} into Eq.~\eqref{Apeq:Chernoff} yields the weak threshold $\lambda \geq 1$ (noting that the maximum hamming weight $k \geq 0$).  We stress that $\lambda$ need not be a constant and can in fact scale with $n$. In fact, when $d$ scales as $O(\poly(n))$, so will $\lambda$.

Next, consider the rank of $\sigma$, which by definition, is less than the total number of non-zero coefficients in $\hwt{\sigma}$. For $(k+1)/2n \leq 1/2$, we obtain 
\begin{equation}
  \text{rank}(\sigma) \leq \sum_{r  =0}^{k} {\binom{2n}{r}} \leq e^{2n H\lf \frac{k+1}{2n} \ri}, 
\end{equation}
where the second inequality follows from a standard upper bound on partial sums of binomial coefficients (see Lemma 16.19 in Ref.~\cite{flum2006parameterized}). Hence, for any $ 0 < k + 1 \leq n $  
\begin{equation}
\label{ApEq:Bound_rank}
   \sqrt{\text{rank}(\sigma)} + 1 \leq e^{n H\lf \frac{k+1}{2n} \ri}  + 1 \leq 2e^{n H\lf \frac{k+1}{2n} \ri} . 
\end{equation}

Combining the bound on trace distance from Thm.~\ref{thm:Bound_on_TD}, Eq.~\eqref{ApEq:Bound_rank} and the bound on the Hilbert-Schmidt error from Lemma.~\ref{lem:Trunc_error}, we obtain     
 \begin{equation}
     \|\rho - \sigma\|_{TD} \leq \lf \sqrt{\text{rank}(\sigma)} + 1 \ri \|\rho - \sigma\|_{HS}\leq 2 (2 - (1 - p)^d)^{-(k+1)/2}e^{2nH\lf \frac{k+1}{2n}\ri} (1 - p)^{d(k+1)/2}.
 \end{equation}
Taking natural logarithm on both sides yields
\begin{equation}
    \label{eq:log_bs_ea}
    \ln \lf  \|\rho - \sigma\|_{TD} \ri \leq \ln 2 + \frac{(k+1)}{2}\ln \lf \lf 2 - (1 - p)^d \ri^{-1}\ri + 2nH\lf \frac{k+1}{2 n}\ri + \frac{(k+1)}{2} \ln \lf (1 - p)^{d} \ri .
\end{equation}
Using the inequality $- x \ln x \geq - (1 - x)\ln(1 - x)$, for $x \leq 0.5$, we obtain 
\begin{align}
    H\lf \frac{k+1}{2 n}\ri &= -\lf \frac{k+1}{2n}\ri \ln \lf \frac{k+1}{2n}\ri - \lf1 -  \frac{k+1}{2n}\ri \ln \lf 1 - \frac{k+1}{2n}\ri, \\
    &\leq -2 \lf \frac{k+1}{2n}\ri \ln \lf \frac{k+1}{2n}\ri, \label{Apeq:Bin_entropy_scaling_bound}
\end{align}
which, together with  $\lf 2 - (1 - p)^d \ri^{-1} \leq 1$, gives an upper bound on Eq.~\eqref{eq:log_bs_ea},
\begin{equation}
    \label{ApEq:TD_bound_in_ap}
    \ln \lf  \|\rho - \sigma\|_{TD} \ri \leq \ln(2) + \lf k+ 1 \ri \lf   2\ln \lf \frac{2n}{k+1}\ri + \frac{1}{2}\ln \lf 1 - p \ri^d \ri . 
\end{equation}
Let $1/2> \delta > 0$ be the tolerable fixed error on the trace distance. This can be achieved by choosing $k$ such that 
\begin{equation}
    \label{Apeq:first_bound_eq}
    \lf k+ 1 \ri \lf   2\ln \lf \frac{2n}{k+1}\ri + \frac{1}{2}\ln \lf 1 - p \ri^d \ri \leq \ln\lf \frac{\delta}{2}\ri, 
\end{equation}
 Re-writing Eq.~\eqref{Apeq:first_bound_eq} using the parametrisation for $d$ defined in Eq.~\eqref{Apeq:Expression_d}, we obtain 
\begin{equation}
    \label{ApEq:eps}
    \lf k+ 1 \ri \lf   2\ln \lf \frac{2}{k+1}\ri + \lf 2 - \frac{\lambda}{2} \ri \ln n \ri \leq \ln\lf \frac{\delta}{2}\ri. 
\end{equation}
Multiplying $-1$ on both sides 
\begin{equation}
   \label{Apeq:Almost_there}
    \lf k+ 1 \ri \lf \ln \frac{(k+1)^2 }{4}  + \lf \frac{\lambda}{2} - 2
    \ri  \ln \lf n \ri \ri \geq \ln \lf \frac{2}{\delta}\ri.
\end{equation}
The above equation can be solved exactly by setting LHS equal to RHS, using the Lambert W function. Define $\beta$ and $y$ as
\begin{equation}
    \beta := \frac{1}{2} \left[ \lf \frac{\lambda}{2} - 2 \ri \ln n - 2 \ln 2 \right], \quad y := e^{\beta } (k + 1). 
\end{equation}
Eq. \eqref{Apeq:Almost_there} is then recast as (for the case of equality)
\begin{equation}
    y \ln y = \frac{1}{2} \ln \lf \frac{2}{ \delta} \ri \exp{\beta}.
\end{equation}
Substituting $u = \ln y$, the above equation turns into the standard form of Lambert W function~\cite{corless1996lambert}. Since $y, \delta$ and $\beta$ are reals and $ \ln \lf ({2 \delta})^{-1} \ri \exp{\beta} \geq 0$, the solution is in the principle branch $W_0$. Thus, we have 
\begin{equation}
    y = \frac{1}{2} \ln \lf \frac{2}{ \delta} \ri \frac{\exp{\beta}}{W_0 \left[ \frac{1}{2} \ln \lf \frac{2}{ \delta}\ri \lf \frac{n}{4}\ri^{\frac{1}{4} (\lambda - 4)} \right]}. 
\end{equation}
This gives us the expression for $k$, 
\begin{equation}
    k + 1 = \frac{1}{2} \ln \lf \frac{2}{ \delta} \ri \left[ W_0 \left(\frac{1}{2} \ln \lf \frac{2}{ \delta}\ri \lf \frac{n}{4}\ri^{\frac{1}{4} (\lambda - 4)} \right) \right]^{-1}
\end{equation}
The principal branch of the Lambert W function has the following properties~\cite{corless1996lambert}.
\begin{equation}
    W_0(0) = 0, \quad \lim_{x \to \infty} W_0(x) = \ln x.
\end{equation}
This gives us the threshold on $d$. If $\lambda < 4$, then $ n^{(1/4) (\lambda - 4)} \to 0$ and if $\lambda > 4$, $ n^{(1/4) (\lambda - 4)} \to \infty$, as $n \to \infty$. Thus, for $\lambda > 4$, 
\begin{equation}
     k + 1 = O \lf \frac{\log \lf {2/\delta} \ri}{ \log\lf n \ri }\ri. 
\end{equation}
The scaling can be put in more rigorous grounds. First note that, for all $k \geq 0$, we have 
\begin{equation}
    \ln \frac{(k+1)^2}{4} \geq  -\ln \lf 4\ri.
\end{equation}
Then, letting the depth parameter $\lambda$ satisfy
\begin{equation}
    \label{ApEq:Min_expression}
    \lambda \geq 4 + 2\frac{\ln 4}{\ln (n)}, 
\end{equation}
ensures that demanding the following equation,
\begin{equation}
     \label{ApEq:Min_exp}
     \lf k+ 1 \ri \lf \lf \frac{\lambda}{2} -  2 \ri  \ln \lf n \ri  - \ln 4\ri =  \ln \frac{2}{\delta}
\end{equation}
we can satisfy Eq.\eqref{Apeq:Almost_there} by choosing a $k+1$ such that   
\begin{equation}
     \label{ApEq:Final_scaling}
    k + 1  =   \frac{ \ln (2/\delta)}{ \lf \frac{\lambda}{2} -  2 \ri  \ln \lf n \ri  - \ln 4} = O \lf \frac{\log \lf {2/\delta} \ri}{ \log\lf n \ri }\ri.
\end{equation}
Note that in the aymptotic limit, $n \to \infty$, we recover the threshold from the Lambert-W function. Putting everything together, for every $\delta$ we choose a $k$ that satisfies Eq. \eqref{ApEq:Final_scaling}, which ensures that Eq.~\eqref{ApEq:Min_exp} is satisfied. Exponentiating Eq.~\eqref{ApEq:TD_bound_in_ap}, we see that, 
\begin{equation}
    \|\rho - \sigma\|_{TD} \leq \delta.
\end{equation}
Thus, the trace distance between $\sigma$ and $\rho$ can be bounded by any $1/2 > \delta > 0$, as long as the circuit depth $d$ satisfies
\begin{equation}
    d \geq \frac{ 4 \ln(n) + 2\ln(4)  ,}{\ln(1-p)^{-1}} := d_T 
\end{equation}
by choosing the maximum cutoff as 
\begin{equation}
    k+1 = \frac{ \ln (2/\delta)}{ \lf \frac{\lambda}{2} -  2 \ri  \ln \lf n \ri  - \ln 4} = O \lf \frac{\log \lf {2/\delta} \ri}{ \log\lf n \ri }\ri.
\end{equation}
This implies that the total number of HW operators that we need to track scales as $O(n^k)$ is $O(\poly\lf2 / \delta \ri)$. Furthermore, note that for $d > d_T$, we have 
\begin{equation}
    \label{ApEq_kd_scaling}
    (k+1) d = \frac{1}{\ln(1 - p)^{-1}}\frac{ \ln (2/\delta)}{  \frac{1}{2}  -  \lf \frac{2}{\lambda} + \frac{\ln 4}{\lambda \ln(n)} \ri} = O \lf \frac{\log \lf {2/\delta} \ri}{ \log\lf 1-p \ri^{-1} }\ri. 
\end{equation}

\section{Worst case run-time for estimating generating \texorpdfstring{$Q_\mathcal{C}$}{Qc}}
\label{ap:Full_algorithm_scaling}
In this section, we estimate the worst case run-time for estimating $Q_\mathcal{C}$. The algorithm proceeds in two steps: (i) evaluate the truncated approximant $\sigma$ , and (ii) sample from it using the procedure of Ref.~\cite{BMS17}. We focus here on the cost of step (i). We start by considering the initial state expanded in the operator frame
\begin{equation}
\label{Apeq:Initial_state_expansion_1}
    \ketbra{+}{+}^{\otimes n} = \frac{1}{2^n} \sum_{r = 0}^{n} \sum_{s_1, \dots, s_r = +, -} ~\sum_{\pi \in \mathbb{P}^r_n}  ~ \sigma_{s_1}^{(\pi(1))}  \cdots \sigma_{s_r}^{(\pi(r))} I(1)^{(\pi(r+1))}  \cdots  I(1)^{(\pi(n))}.
\end{equation}
From the scaling analysis in Sec.~\ref{Ap:Low_Hamming_Weight_Strings}, each string in Eq.~\eqref{Apeq:Initial_state_expansion_1} can be propagated through $\mathcal{C}$ in time $O(n^3d)$. The first crucial observation here is that every operator in the HW basis representation of the initial state is contained in a unique string in Eq.~\eqref{eq:Initial_state_frame_rep}.  For example, the string $I(1)^{(1)}\dots I(1)^{(n)}$ contains all the operators in the HW basis consisting of only diagonal terms. Similarly the string $\sigma_+^{(1)}I(1)^{(2)}\dots I(1)^{(n)}$ contains all HW operators with $\sigma_+$ on the first qubit and diagonal terms elsewhere. More generally, if $o_i$ denotes an HW operator with an index $i$, then there exists a unique string $s_t$ and $m$ such that, 
\begin{equation}
    \operatorname{Tr}  \lf o_i^\dagger s_t \ri \neq 0, \quad s_t \in S_m,  \forall i.
\end{equation}
Here, $S_m$ is the set of all strings in Eq.~\eqref{Apeq:Initial_state_expansion_1} that contains $m$ non-diagonal terms. Moreover, the support of operators in the HW basis on a given string $s$ is preserved under propagation of the string under $\mathcal{C}$.  For example, propagating the string $I(1)^{(1)}\dots I(1)^{(n)}$ with an initial coefficient $1/2^n$, we get a new string of the form $I(a_1)^{(1)}\dots I(a_n)^{(n)}$, with some new coefficient $\beta'$. Crucially though, this new string still has the support of all the diagonal operators in the HW basis and only those.  Thus, our \textbf{first observation} is that, propagating $s$ and its associated coefficient $\beta$ through $\mathcal{C}$, allows us to exactly recover the coefficient $\alpha_i$ associated with $o_i$ for all $o_i$ contained in $s_t$, as
\begin{equation}
    \alpha^{(r)}_i = \beta^{(r)} \operatorname{Tr} \lf o_i^\dagger s_t^{(r)} \ri, 
\end{equation}
 where $ \beta^{(r)}(s_t^{(r)})$ is the coefficient (operator) after $r$ layers of evolution under $\mathcal{C}$.

The second crucial observation is that all elements to the set $S_m$ have $m$ factors of $\sigma_{\pm}$ terms. Thus, 
\begin{equation}
    [i] \geq m, ~ \forall \text{ operators } o \text{ such that } \operatorname{Tr}\lf o^\dagger s_t \ri \neq 0, s_t \in S_m.
\end{equation}
Again, for clarity let us work out some examples. The string the string $I(1)^{(1)}\dots I(1)^{(n)}$  contains no non-diagonal term. The operator with minimum hamming weight supported in this string is $\hwt{0\dots0}$. This has a hamming weight zero. Similarly, consider the string, $s = \sigma_+^{(1)}I(1)^{(2)}\dots I(1)^{(n)}$. It has one non-diagonal operator. Correspondingly, the operator with minimum weight that is supported in this string has the representation in computational basis $\ketbra{1 0 \dots 0}{0 0 \dots 0}$, with a hamming weight 1.  Thus our \textbf{second observation} is that, for any string $s$ in the expansion in Eq.~\eqref{Apeq:Initial_state_expansion}, the minimum hamming weight of the operators supported in $s$ is equal to the total number of non-diagonal elements in $s$.

Thus, by evolving all strings in the expansion in Eq.~\eqref{Apeq:Initial_state_expansion_1} with number of non-diagonal operators $m \leq k$,  we can access the coefficients of all HW operators with hamming weight lesser than or equal to $k$. From the bound on truncation error, to estimate $\sigma$ we only need propagate  $k = O \lf \frac{\log \lf {2/\delta} \ri}{ \log\lf n \ri }\ri$, which implies a the total number of strings is 
\begin{equation}
    \sum_{i = 0}^{k} 2^i {\binom{n}{i}} = O(n^k) = O(\poly\lf2 / \delta \ri). 
\end{equation}
In the above equation, the $2^i$ comes from the fact that there are two non-diagonal operators. In practice, we can get some improvements by noting that certain strings are hermitian conjugates of each other. Nonetheless, since each string can be evaluated in  $O(n^3d)$  time and since we have to evaluate only $O(\poly\lf2 / \delta \ri)$ strings, the worst case runtime scales as   $O(n^3d ~\poly\lf2 / \delta \ri)$. Finally we need to calculate the coefficients $\alpha_i$ from the propagated strings. Since there are at most $O(\poly(\delta^{-1}))$ strings, this only adds a constant overhead. Finally, given $\sigma$ we can generate the quasi-probability distribution $q$, which has at most $O(\poly(\delta^{-1}))$ Fourier coefficients. Using the arguments from Ref \cite{BMS17}, we note that this implies that the marginals of $q(x)$ can be computed efficiently, using Parseval's identity. Let $y \in \{0,1\}^k$ be an arbitrary $k$-bit string, then the marginal $S_y$ is given as in Ref.~\cite{BMS17},  
\begin{equation}
    S_y = \sum_{x:x_{1 \dots k } = y} q(x) = 2^{n-k} \sum_{s: s_{k+1,\dots,n} = 0^{n - k}} \tilde{q}(s)(-1)^{y\cdot s_1 \dots s_k}, 
\end{equation}
where $\tilde{q}(s)$ are the Fourier coefficients of $q(x)$. Since the Fourier spectrum is sparse, the marginals are exactly computable in $O(n \poly(\delta^{-1}))$ time. The factor $n$ comes from the worst-case cost of calculating $(-1)^{y\cdot s_1 \dots s_k}$. Finally, the algorithm to sample from the distribution $Q_{\cal{C}}$, reproduced from Ref.~\cite{BMS17}, is 
\begin{enumerate}
    \item Set $y$ to the empty string
    \item For $i = 1,\dots , n$:
    \begin{enumerate}
        \item If $S_{yz} < 0$ for some $z \in \{0, 1\}$, set $y \longleftarrow y\bar{z}$,  where $\bar{z} = 1 - z$.
        \item Otherwise: with probability $S_{y0}/S_y$, set $y \longleftarrow y0$; otherwise, set $y \longleftarrow y1$.
    \end{enumerate}
    \item Return $y$
\end{enumerate}
Thus, each sample from $Q_{\mathcal{C}}$ can be produced with a worst-case time scaling of $O(n^2 \poly(\delta^{-1}))$.

\section{Extending the proof of Theorem \ref{thm:main_theorem} to \texorpdfstring{$\ell$}{l}-local gate set}
\label{ap:Theorem_main_extension}
In this section, we generalize the proof Theorem~\ref{thm:main_theorem} to $l$-local gate sets. As a natural extension of Theorem~\ref{thm:main_theorem}, we relax the two-qubit gate set restriction imposed therein. In particular, while a single element of the operator frame $\mathcal{I}^{\otimes n}$ maps to a unique element under the $2$-local gates considered in the main text, higher-order gates such as CCZ can map a single frame element to multiple elements. We show, however, that for an arbitrary $\ell$-local controlled-phase gate, this increased branching results only in a constant-factor overhead in the number of terms that must be tracked. We are able to show this, as our truncation scheme depends only on the circuit depth $d$ and is agnostic to the locality of the diagonal gates. This is because we bound the absolute values of the coefficients in the HW basis, hence the phase applied by the diagonal gates become irrelevant. The high-level proof structure is as follows.  First, we show that the action of an $\ell$-local controlled-phase gate on a frame element can be expressed as a linear combination of at most $2^{\ell - 1} + 1$ frame elements. We then show that for each unitary layer the blow-up due to branching is at most $(2^{\ell - 1} + 1)^k$, where $k$ is the maximum Hamming weight enforced by the truncation. Thus, for $d$ layers we have an overhead of at most $(2^{\ell - 1} + 1)^{kd}$ terms per frame element tracked. We complete the proof by noting that from Eq.~\eqref{ApEq_kd_scaling}, we have $(2^{\ell - 1} + 1)^{kd} = O(1)$.

We begin by defining the $\ell$-local controlled-phase gate by its action on the computational basis states, 
\begin{equation}
    C^{(\ell)}(\phi) \lf \ket{a_1\dots a_l} \ri = \left\{ \begin{matrix} e^{i \phi } \ket{a_1\dots a_l} & \text{ if } (a_1\dots a_\ell) = (1, \dots, 1),  \\
    \ket{a_1\dots a_\ell} & \text{ otherwise.}
    \end{matrix} \right.
\end{equation}
Define the notation $[n]:=\{1, \dots, n\}$. Now consider the action of the unitary on the following elements of the operator frame $\mathcal{I}^{\ell}$, 
\begin{equation}
   C^{(\ell)}(\phi)\left(I(a_1)\dots I(a_\ell)\right)C^{^\dagger(\ell)} (\phi)= I(a_1)\dots I(a_\ell),
\end{equation}
and
\begin{equation}
   \label{ApEq:muti-qubit_blowup}
   C^{(\ell)}(\phi)\left(\sigma_{\pm}\dots\sigma_{\pm}I(a_1)\dots I(a_m)\right)C^{^\dagger(\ell)} (\phi)= \sigma_{\pm}\dots\sigma_{\pm}I(a_1)\dots I(a_m)+(e^{i\phi}-1)a_1\dots a_m\sigma_{\pm}\dots\sigma_{\pm}\ketbra{1}{1}^{\otimes m}
\end{equation}
In other words $C^{(\ell)}(\phi)$ acts non-trivially only if at least one factor of the frame element is non-diagonal. Note that we need to represent the expression in Eq.~\eqref{ApEq:muti-qubit_blowup} as a linear combination of frame elements. We prove the following lemma for that purpose.  
\begin{lemma}
    \label{Aplemma:conversion_to_boolean_functions}
     The operator $ (e^{i\phi}-1)a_1\dots a_m\sigma_{\pm}\dots\sigma_{\pm}\ketbra{1}{1}^{\otimes m}$ can be expressed as a linear combination of frame elements
     \begin{equation}
         \label{ApEq:Defn_equal_op}
         (e^{i\phi}-1)a_1\dots a_m\sigma_{\pm}\dots\sigma_{\pm}\ketbra{1}{1}^{\otimes m} = \sum_{r = 0}^R \beta_r \sigma_{\pm}\dots\sigma_{\pm} \bigotimes_{j = 1}^m I\lf b_j^{(r)} a_j\ri,    
     \end{equation}
     if and only if
     \begin{equation}
        \label{ApEq:Implication_equal_op}
         \sum_{r = 0}^R \beta_r \prod_{i =1}^m (1 + b_i^{(r)} a_i) = (e^{i \phi} - 1) a_1 \cdots a_m
     \end{equation}
\end{lemma}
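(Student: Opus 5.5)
The plan is to strip off the common off-diagonal factors and reduce the operator identity \eqref{ApEq:Defn_equal_op} to a comparison of diagonal entries. Both sides of \eqref{ApEq:Defn_equal_op} carry the same string $\sigma_{\pm}\cdots\sigma_{\pm}$ on the first $\ell-m$ qubits, and this string is a fixed nonzero operator. Since $A\otimes X = A\otimes Y$ with $A\neq 0$ holds if and only if $X=Y$, the identity is equivalent to the $m$-qubit identity
\begin{equation*}
(e^{i\phi}-1)\,a_1\cdots a_m\,\ketbra{1}{1}^{\otimes m} \;=\; \sum_{r=0}^R \beta_r \bigotimes_{j=1}^m I\big(b_j^{(r)}a_j\big).
\end{equation*}
Both sides here are diagonal in the computational basis. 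So equality holds if and only if the $2^m$ diagonal entries $\bra{x}\cdot\ket{x}$, $x\in\{0,1\}^m$, agree.

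Next I will compute these entries explicitly. Each $I(c)$ equals $\ketbra{0}{0}+c\ketbra{1}{1}$, so
\begin{equation*}
\bra{x}\bigotimes_j I\big(b_j^{(r)}a_j\big)\ket{x}=\prod_{j:\,x_j=1} b_j^{(r)}a_j .
\end{equation*}
The left-hand side has entry $(e^{i\phi}-1)a_1\cdots a_m$ at $x=1^m$ and $0$ everywhere else. The operator identity is therefore equivalent to the system
\begin{equation*}
\sum_r \beta_r\prod_{j\in S} b_j^{(r)}a_j=(e^{i\phi}-1)\,a_1\cdots a_m\,\delta_{S,[m]} \qquad \text{for all } S\subseteq[m].
\end{equation*}
Expanding $\prod_{i}(1+b_i^{(r)}a_i)=\sum_{S\subseteq[m]}\prod_{i\in S}b_i^{(r)}a_i$ shows that \eqref{ApEq:Implication_equal_op} is exactly this system. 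The correspondence $x\leftrightarrow S=\{j:x_j=1\}$ identifies the diagonal entries with the terms of the multilinear expansion.

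The main obstacle is making this last identification rigorous, because the correspondence runs in only one direction as stated. Summing the entry equations over all $S$ gives \eqref{ApEq:Implication_equal_op} read as a single scalar equation; this is the trace, and it establishes the ``only if'' direction directly. The converse is false for a single scalar equation. I will therefore make precise that \eqref{ApEq:Implication_equal_op} is to be read as an identity of multilinear expansions, matched term by term in the subset monomials. One way to do this is to insert formal markers $t_i$ and compare coefficients of $\prod_{i\in S}t_i$ in $\sum_r\beta_r\prod_i(1+t_i b_i^{(r)}a_i)$ and $(e^{i\phi}-1)a_1\cdots a_m\prod_i t_i$. Equivalently, one can regard $\prod_i(1+b_i a_i)$ as the image of $\bigotimes_i I(b_ia_i)$ under the injective linear map $\ketbra{x}{x}\mapsto\prod_{i:x_i=1}(\cdot)$.

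With that reading both directions follow immediately from the entrywise equivalence above. I will close by noting that this formulation is what the branching count uses afterwards. The condition asks for a multilinear polynomial to reproduce the single top monomial $(e^{i\phi}-1)\,a_1\cdots a_m$, and this is achievable with a bounded number $R$ of terms, for example by inclusion--exclusion with each $b_j^{(r)}\in\{0,1\}$.
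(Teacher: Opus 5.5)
Your proof is correct. For the forward direction it follows the paper's route: the paper also expands $\bigotimes_j I(b_j^{(r)}a_j)$ over subsets $S\subseteq[m]$, matches coefficients to obtain $\sum_r\beta_r\prod_{i\in S}b_i^{(r)}a_i=(e^{i\phi}-1)a_1\cdots a_m\,\delta_{S,[m]}$ for every $S$, and then sums over $S$. Your diagonal-entry computation, after cancelling the nonzero $\sigma_\pm$ factor, is the same argument.

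Where you go beyond the paper is the converse. The paper's proof only establishes ``operator identity $\Rightarrow$ scalar identity'' and never argues the other direction. That direction is in fact false when \eqref{ApEq:Implication_equal_op} is read as a single scalar equation. A minimal counterexample: take $m=1$, $R=0$, $b_1^{(0)}=0$ and $\beta_0=(e^{i\phi}-1)a_1$. This satisfies the scalar equation, yet $\beta_0 I(0)=(e^{i\phi}-1)a_1\ketbra{0}{0}\neq(e^{i\phi}-1)a_1\ketbra{1}{1}$ whenever $a_1\neq 0$ and $e^{i\phi}\neq 1$.

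Your repair makes the equivalence true in both directions. You read the condition as an identity of multilinear expansions, tracked by markers $t_i$. Equivalently, when $\beta_r$ and $b^{(r)}$ do not depend on $a$, it is a polynomial identity in the $a_i$.

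This reading is also what the paper tacitly relies on next. The proof of the following Proposition checks the identity monomial by monomial in $a_T$, using orthogonality of parities. So the paper's $b=\pm1$ construction does satisfy the full subset system, and the resulting operator decomposition is valid even though the lemma as literally stated would not justify it.

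Your inclusion--exclusion alternative is equally valid. It uses $a_j\ketbra{1}{1}=I(a_j)-I(0)$ with $b\in\{0,1\}$, and $I(0)$ is in the frame. It yields the same $2^m$ terms as the paper's $a_j\ketbra{1}{1}=\tfrac12\left[I(a_j)-I(-a_j)\right]$.
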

\textbf{Proof:} Consider the following expansion of the diagonal frame element, 
    \begin{equation}
        \bigotimes_{j = 1}^m I\lf b_i^{(r)} a_j\ri  = \sum_{S \subseteq [m]} \prod_{i \in S} b_i^{(r)} a_i \bigotimes_{i \notin S}
        \ketbra{0}{0} \bigotimes_{i \in S} \ketbra{1}{1} .   
    \end{equation}
  Substituting the above equation in Eq.~\eqref{ApEq:Defn_equal_op} we get, 
  \begin{equation}
     (e^{i\phi}-1)a_1\dots a_m\sigma_{\pm}\dots\sigma_{\pm}\ketbra{1}{1}^{\otimes m} =  \sum_{S \subseteq [m]}  \left[ \lf \sum_{r = 0}^R \beta_r \prod_{i \in S} b_i^{(r)} a_i \ri\sigma_{\pm}\dots\sigma_{\pm}  \bigotimes_{i \notin S} \ketbra{0}{0} \bigotimes_{i \in S} \ketbra{1}{1}\right], 
  \end{equation}
 This implies that 
 \begin{equation}
         \sum_{r = 0}^R \beta_r \prod_{i \in S} b_i^{(r)} a_i = \left\{ \begin{matrix}
             (e^{i\phi}-1) a_1\dots a_m  & \text{ if } S = [m],  \\
             0 & \text{ otherwise.}
         \end{matrix} \right. := (e^{i\phi}-1) a_1\dots a_m  \delta_{S,[m]}, \quad \text{for any } S \subseteq [m]. 
  \end{equation}
 We complete the proof by noting, 
 \begin{equation}
     \sum_{r = 0}^R \beta_r \prod_{i =1}^m  (1 + b_i^{(r)} a_i)=  \sum_{S \subseteq [m]} \sum_{r = 0}^R \beta_r \prod_{i \in S} b_i^{(r)} a_i = \sum_{S \subseteq [m]} (e^{i\phi}-1) a_1\dots a_m  \delta_{S,[m]} = (e^{i\phi}-1) a_1\dots a_m.
 \end{equation}
 Note that the R.H.S of Eq.~\eqref{ApEq:Implication_equal_op} has the form of an unnormalised boolean delta function~\cite{odonnell2021analysisbooleanfunctions}. We will use this equivalence to prove the following proposition that provides an explicit construction for the decomposition in Eq.~\eqref{ApEq:Defn_equal_op} with $R = 2^m$. 
 \begin{prop} 
 \label{ApProp:delta_expansion}
 Using Fourier transform on boolean functions, we can show that 
   \begin{equation}
        (e^{i\phi}-1) a_1\dots a_m  =  \sum_{\mathbf{x} \in \{0,1\}^m} \frac{(e^{i\phi}-1) (-1)^{\sum_{i =1}^m x_i} }{2^m}  \prod_{i = 1}^m\lf  1 + (-1)^{x_i} a_i \ri .  
 \end{equation}
 \end{prop}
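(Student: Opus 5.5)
The plan is to prove the identity by factorizing the sum over $\mathbf{x}\in\{0,1\}^m$ into a product of independent single-variable sums, one per coordinate. The summand is a product over $i$ of factors that each depend only on $x_i$, since $(-1)^{\sum_i x_i}=\prod_i (-1)^{x_i}$ and $2^{-m}=\prod_i 2^{-1}$. The constant $(e^{i\phi}-1)$ plays no role, so I pull it out and only need to show
\begin{equation*}
\sum_{\mathbf{x}\in\{0,1\}^m} \prod_{i=1}^m \frac{(-1)^{x_i}\bigl(1+(-1)^{x_i}a_i\bigr)}{2} = a_1\cdots a_m .
\end{equation*}

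First, I will use distributivity, $\sum_{\mathbf{x}}\prod_i g_i(x_i)=\prod_i \sum_{x_i\in\{0,1\}} g_i(x_i)$, to rewrite the left-hand side as $\prod_{i=1}^m \tfrac{1}{2}\sum_{x_i\in\{0,1\}}(-1)^{x_i}(1+(-1)^{x_i}a_i)$. Second, I will evaluate each one-variable factor explicitly:
\begin{equation*}
\tfrac{1}{2}\bigl[(1+a_i)-(1-a_i)\bigr]=a_i .
\end{equation*}
Multiplying these factors gives $a_1\cdots a_m$, and restoring $(e^{i\phi}-1)$ gives the claim.

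Equivalently, in the Boolean-Fourier language the statement alludes to, the map $x\mapsto 2^{-1}(-1)^{x}$ is the character that extracts the linear (degree-one) coefficient of the affine function $1+(-1)^{x}a$. The product over coordinates then extracts the top-degree monomial $a_1\cdots a_m$, which is exactly the unnormalised delta $\delta_{S,[m]}$ appearing in the proof of Lemma~\ref{Aplemma:conversion_to_boolean_functions}.

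Finally, I will read off the consequence needed for Lemma~\ref{Aplemma:conversion_to_boolean_functions}. Index $r$ by $\mathbf{x}\in\{0,1\}^m$ and set $b_j^{(r)}=(-1)^{x_j}$ and $\beta_r=(e^{i\phi}-1)(-1)^{\sum_i x_i}/2^m$. Then Eq.~\eqref{ApEq:Implication_equal_op} holds, and by the lemma the operator in Eq.~\eqref{ApEq:muti-qubit_blowup} decomposes into $2^m$ frame elements.

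One caveat needs to be handled here: the frame requires $|b_j^{(r)}a_j|\le 1$. This holds automatically because $|b_j^{(r)}|=1$ and $|a_j|\le 1$.

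There is no real obstacle in this argument. The only point needing care is the sign convention, namely that the overall $(-1)^{\sum_i x_i}$ must be distributed coordinate-wise before factorizing; once it is, each factor reduces to the two-term difference computed above.
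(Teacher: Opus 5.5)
Your proof is correct, but it takes a different route from the paper's. You factor the sum over $\mathbf{x}$ coordinate-wise by distributivity, which reduces everything to the one-variable evaluation $\tfrac12[(1+a_i)-(1-a_i)]=a_i$.

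The paper instead expands $\prod_i(1+(-1)^{x_i}a_i)=\sum_{T\subseteq[m]}\chi_T(\mathbf{x})\,a_T$ into all $2^m$ multilinear monomials. It then uses orthogonality of parity functions, $2^{-m}\sum_{\mathbf{x}}\chi_{[m]}(\mathbf{x})\chi_T(\mathbf{x})=\delta_{T,[m]}$, to kill every monomial except $a_1\cdots a_m$.

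Your argument is shorter and self-contained; it is essentially the proof of character orthogonality specialized to this case. The paper's route makes explicit the stronger subset-by-subset cancellation $2^{-m}\sum_{\mathbf{x}}(-1)^{\sum_i x_i}\prod_{i\in S}(-1)^{x_i}=\delta_{S,[m]}$. That per-subset statement is what the operator decomposition in Eq.~\eqref{ApEq:Defn_equal_op} actually needs.

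This matters for your final step. The proof of Lemma~\ref{Aplemma:conversion_to_boolean_functions} only establishes the ``only if'' direction, and a single scalar identity at fixed values of $a$ does not by itself imply the operator identity. You can close this in either of two ways:
\begin{itemize}
\item Observe that your computation holds identically in the indeterminates $a_1,\dots,a_m$, with $\beta_r$ and $b^{(r)}$ independent of $a$, so the coefficient of each monomial $a_S$ matches.
\item More cleanly, run your factorization at the operator level: $\tfrac12\sum_{x\in\{0,1\}}(-1)^{x}I\bigl((-1)^{x}a_j\bigr)=\tfrac12\bigl[I(a_j)-I(-a_j)\bigr]=a_j\ketbra{1}{1}$. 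This yields Eq.~\eqref{ApEq:Defn_equal_op} directly, without passing through the lemma.
\end{itemize}
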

  \textbf{Proof:} Consider the parity functions $\chi_{S}$ (see Definition 1.2 in Ref.\cite{odonnell2021analysisbooleanfunctions}) for some subset $S\subseteq [n]$,
  \begin{equation}
      \chi_{S} (\mathbf{x}) = (-1)^{\sum_{i \in S} x_i}.
  \end{equation}
Expanding $ \prod_{i = 1}^m\lf  1 + (-1)^{s_i} a_i \ri$, we have 
\begin{equation}
      \label{ApEq:Expansion_2}
     \prod_{i = 1}^m\lf  1 + (-1)^{x_i} a_i \ri = \sum_{T \subseteq [m]}  \prod_{i \in T} (-1)^{x_i} a_i =  \sum_{T \subseteq [m]} (-1)^{\sum_{i \in T} x_i} \prod_{i \in T} a_i =  \sum_{T \subseteq [m]} \chi_T( \mathbf{x}) a_T, 
\end{equation}
where we have defined the notation $a_{T} := \prod_{i \in T} a_i $. Thus, 
\begin{align}
    \label{ApEq:Walsh_Had_line1}
    \frac{(e^{i\phi}-1)}{2^n} \sum_{\mathbf{x} \in \{0,1\}^m} (-1)^{\sum_{i =1}^m x_i} \prod_{i = 1}^m\lf  1 + (-1)^{x_i} a_i \ri &=(e^{i\phi}-1) \sum_{T \subseteq [m]}  \lf \frac{1}{2^m} \sum_{\mathbf{x} \in \{0,1\}^m} \chi_{[m]}( \mathbf{x}) \chi_T( \mathbf{x}) \ri  a_T, \\
     \label{ApEq:Walsh_Had_line2}
    &= (e^{i\phi}-1)\sum_{T \subseteq [m]} \delta_{T,[m]}  a_T, \\
     \label{ApEq:Walsh_Had_line3}
    &=(e^{i\phi}-1) a_1 \cdots a_m,
\end{align}
where in line \eqref{ApEq:Walsh_Had_line1} we have substituted the expanded form in Eq.~\eqref{ApEq:Expansion_2} and in line \eqref{ApEq:Walsh_Had_line2}, we have used the orthogonality of parity functions (see Theorem 1.5 in Ref.~\cite{odonnell2021analysisbooleanfunctions}). This completes the proof of the proposition.

Finally, note that using Prop.~\ref{ApProp:delta_expansion}, we can construct the linear combination in Eq.~\eqref{ApEq:Defn_equal_op} by setting
\begin{equation}
    \beta_r =  \frac{(e^{i\phi}-1) (-1)^{\sum_{i =1}^m x_i} }{2^m}  \text{ and } b_i^{(r)} = (-1)^{x_i}, 
\end{equation}
where $r$ is the decimal representation of the bit-string $\textbf{x}$. Thus, combining Lemma~\ref{Aplemma:conversion_to_boolean_functions} and Proposition~\ref{ApProp:delta_expansion} we have shown that  Eq.~\eqref{ApEq:muti-qubit_blowup} can be expressed by a linear combination of at most $2^m + 1$ frame elements. For an $\ell$-local controlled-phase gate, this implies that the maximum branching is $2^{\ell - 1} + 1$, which happens when the gate acts on an operator with a single non-diagonal factor. We note that this is a very loose bound. In fact, for $l = 2$, we have previously shown that there is no branching and strings are mapped one-to-one.  From the analysis in Sec.~\ref{ap:k_scaling} and the arguments presented in Sec. \ref{ap:Full_algorithm_scaling}, we have to only evolve strings with at most $k$ non-diagonal factors. Thus, the maximum number of branchings of a frame element (that we keep track of) under a single unitary layer is given by 
\begin{equation}
    \min[(2^{\ell - 1} + 1)^k, (2^{\ell - 1} + 1)^{\lfloor n/l\rfloor} ].
\end{equation}
This is because the maximum branching occurs when each $\ell$-local controlled-phase gate in a single layer acts on a single non-diagonal tensor factor. Since we need to keep only $k < O(1)$ for $d > d_T$, without loss of generality, $\min[(2^{\ell - 1} + 1)^k, (2^{\ell - 1} + 1)^{\lfloor n/l\rfloor} ] = (2^{\ell - 1} + 1)^k$. Using Eq.~\eqref{ApEq_kd_scaling}, for a constant noise $p$ and a constant error $\delta$,  the maximum number of terms that we have to keep track of, for a circuit of depth $d$ is given by
\begin{equation}
    (2^{\ell - 1} + 1)^{kd} = (2^{\ell - 1} + 1)^{O(\log(1/\delta))}.
\end{equation}
Hence, for a $\ell$-local gate set, we need to keep track of at most $(2^{\ell - 1} + 1)^{O(\log(1/\delta))} = O(\poly(1/\delta))$ terms per frame element as opposed to a single string for single and two-qubit gates. The rest of the analysis follows the arguments presented for two-qubit and single-qubit gates.  Therefore, the worst case runtime of our algorithm still only scales as $O\left(n^3 d \poly(1/\delta)\right)$.

\section{Numerical Demonstration of Algorithm for Two-Qubit Gate Set}
\label{ap:numerical}

In this section, we numerically demonstrate the validity of our analytic bounds for the $\mathcal{G}_2$ gate set. We simulate instances of random circuits that are composed of unitary layers drawn from $\mathcal{G}_2$, each followed by a layer of amplitude-damping channels acting on each qubit. The parameters in Fig.~\ref{fig:random_error_plot} are $n=10$, $d=10$, and $p=0.1$ with a probability of $1/2$ of choosing either a random controlled-phase two-qubit gate or two random single-qubit phase-gates for each pair of sites in each circuit layer. The analytically derived Hilbert-Schmidt error bound, given in Lemma~\ref{lem:Trunc_error}, is shown with the green stars and solid green curve. The blue curves in Fig.~\ref{fig:random_error_plot} correspond to the Hilbert-Schmidt error between the truncated approximations and the exact states at the end of the circuit, whereas the trace-distance error is shown in the red curves. The plotted points correspond to the mean values of $200$ random circuit instances and the error bars give the minimum and maximum observed error over all the instances. Furthermore, there are two different types of approximations shown in the figure. The dashed lines with circular points correspond to a truncated approximation that keeps truncates up to the associated Hamming weight. This is done by running Algorithm~\ref{alg:Evolution_of_a_layer} with a $\sigma_{\pm}$ weight of $k$ and then reproducing the state with a HW basis cutoff of $k$. The dotted lines with square points show an approximation that reproduces an approximation of the desired state $\rho$ with no HW cutoff, but reproduced only from the simulated frame elements up to a $\sigma_{\pm}$-weight of $k$. While the latter approximation is not efficient when mapped back to the HW basis (it contains exponentially many different HW strings), it nonetheless gives us a picture of the extent to which the low-weight frame elements capture the full state.

The first thing to note from Fig.~\ref{fig:random_error_plot} is that the Hilbert-Schmidt error bound holds for all of the numerically simulated circuit instances. Moreover, the bound is quite loose, and in practice, the algorithm performs better on average than the bound implies. The second interesting result is that the calculated trace distance also lies below the Hilbert-Schmidt error bound, which is why we do not bother plotting the much looser trace distance upper bound derived in the main text. While our arguments have a multiplicative overhead between the trace distance and HS error, proportional to the rank of the approximation $\sigma$, the trace distance and HS error are actually quite close. The argument in Theorem~\ref{thm:Bound_on_TD} is most tight when the state $\rho$ is near the maximally-mixed state. However, due to the nature of amplitude-damping noise, $\rho$ has exponentially little support outside of the low HW subspace. This exponentially-decaying envelope results in the trace-distance between $\rho$ and $\sigma$ being orders of magnitude better than predicted by Theorem~\ref{thm:Bound_on_TD}. The third observation is that the low $\sigma_\pm$-weight frame elements capture significantly more of the state than just the low HW elements. This makes sense because the frame elements, up to $\sigma_\pm$ weight of $k$, fully capture the HW strings up to weight $k$, but additionally also track an exponential number of higher HW strings.

\begin{figure}
    \centering
    \includegraphics[width=0.8\linewidth]{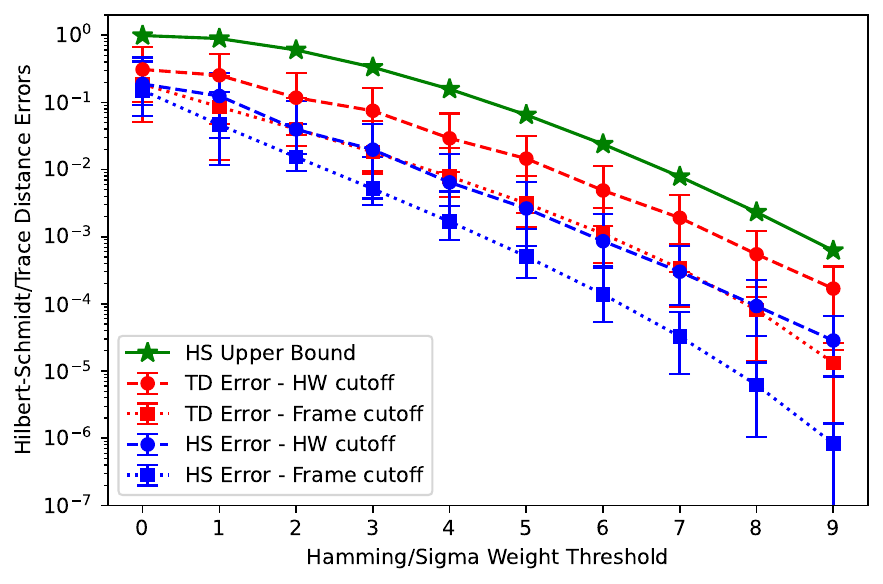}
    \caption{Hilbert-Schmidt error and trace-distance error between various approximations using our algorithm and the exact output state. The green stars show the Hilbert-Schmidt error upper bound that is given in Lemma ~\ref{lem:Trunc_error}. The blue points show the Hilbert-Schmidt error and the red points show the trace-distance error. Furthermore, the dashed lines with circular points, give the errors associated with the corresponding HW cutoff, whereas the dotted lines with square points give the errors associated with the corresponding frame-element weight cutoff. We used the following parameters: $n=10$, $d=10$, and $p=0.1$. Each point is the mean of $200$ random circuit instances drawn from $\mathcal{G}_2$ with a probability of $1/2$ of choosing either a random two-qubit controlled-phase gate or a single-qubit gate. The error bars represent the minimum and maximum errors observed.}
    \label{fig:random_error_plot}
\end{figure}

Another interesting consideration is that the worst-case circuit for our scheme appears to be the idle circuit, which is purely composed of amplitude-damping channels, without any unitaries. Shown in Fig.~\ref{fig:idle}, are the Hilbert-Schmidt errors of the random circuits (blue) alongside the Hilbert-Schmidt error of the idle circuit (black). For each HW truncation, we observe that the idle circuit error lies below the analytic upper bound, but above the maximum error instance of the random circuits. We expect this behavior since, as noted in Sec.~\ref{ap:bound_based_00_terms}, Proposition~\ref{prop:Truncation_error} is saturated by the idle circuit. A rigorous proof of this observation would enable a tighter bound on the Hilbert-Schmidt error. Further, if this observation holds even for error in trace-distance, then using the resulting tight bound, we might be able to reduce the depth threshold for classical simulability beyond the result reported in this Letter.

\begin{figure}
    \centering
    \includegraphics[width=0.6\linewidth]{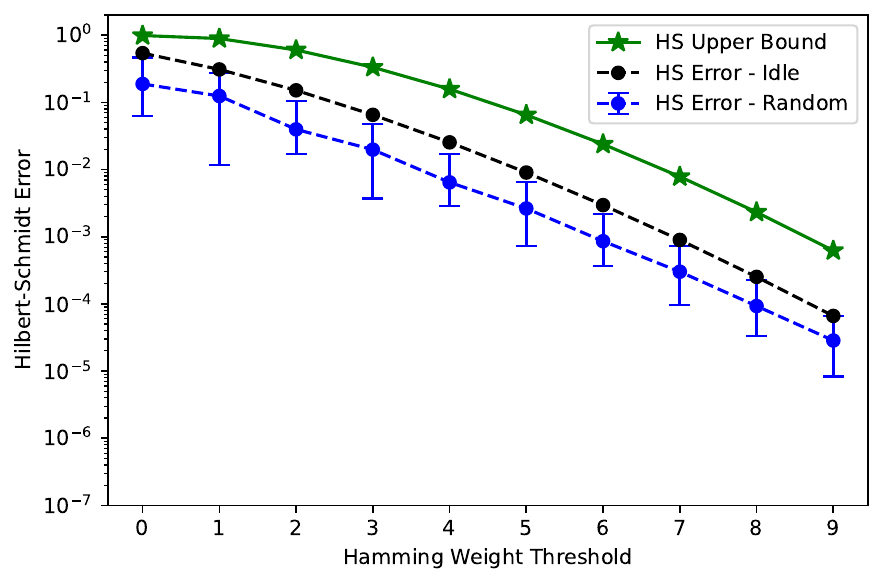}
    \caption{Hilbert-Schmidt error between the approximated state (with different Hamming Weight cutoffs) and the exact state for random circuits and the idle circuit. The green stars show the Hilbert-Schmidt error upper bound that is given in Lemma ~\ref{lem:Trunc_error}. The blue points show the mean Hilbert-Schmidt error for the random circuit instances, where the error bars give the minimum and maximum observed errors. Lastly, the black points show the Hilbert-Schmidt error for the idle circuit. We used the following parameters: $n=10$, $d=10$, and $p=0.1$. The blue points are the mean of $200$ random circuit instances drawn from $\mathcal{G}_2$ with a probability of $1/2$ of choosing either a random two-qubit controlled-phase gate or a single-qubit gate.}
    \label{fig:idle}
\end{figure}

\end{document}